\documentclass[11pt]{article}
\usepackage{macros}
\usepackage{fullpage}
\usepackage{authblk}
\usepackage{enumitem}
\usepackage{afterpage}
\usepackage{caption}
\usepackage{subcaption}
\usepackage{cancel}
\usepackage{amsmath}
\usepackage{thmtools}
\usepackage{thm-restate}

\usepackage{enumitem}
\usepackage{lmodern} 

\usepackage{microtype}
\usepackage{amssymb}
\usepackage{mathtools}
\usepackage{latexsym}
\usepackage[square,sort,comma,numbers]{natbib} 
\usepackage{algorithmic}
\usepackage{multirow}
\usepackage{multicol}

\renewenvironment{proof}{{\bfseries Proof.}}{\qed}

\usepackage{tikz}
\usetikzlibrary{arrows}
\usepackage{graphicx}



\DeclarePairedDelimiter{\ceil}{\lceil}{\rceil}
\DeclarePairedDelimiter{\floor}{\lfloor}{\rfloor}

\DeclareMathOperator*{\E}{\mathbb{E}}
\DeclareMathOperator*{\EV}{\mathcal{E}}

\renewcommand{\vec}[1]{\textbf{#1}}

\newcommand{\RLA}{\mathsf{RLA}}

\newcommand{\PM}[1]{\mathsf{PM}[\textbf{F}, #1]}
\newcommand{\DR}[1]{\mathsf{DR}[\textbf{f}, #1]}

\newcommand{\RR}[2]{\mathsf{R}[#1, #2]}
\newcommand{\FF}{\textbf{F}}
\newcommand{\HH}{\textbf{H}}

\newcommand{\ff}{\textbf{f}}

\newcommand{\VW}{\mathsf{VW}}
\newcommand{\EW}{\mathsf{EW}}
\newcommand{\WS}{\mathsf{WS}}

\newcommand{\euler}{\mathbf{\mathsf{e}}} 

\newcommand{\myEdit}[1]{{\color{black} #1}}
\newcommand{\del}{\delta}
\newcommand{\alp}{\alpha}

\newcommand{\pa}{\partial}
\def \ALG {\ensuremath{\operatorname{ALG}}\xspace}
\def \OPT {\ensuremath{\operatorname{OPT}}\xspace}
\def \LP {\ensuremath{\operatorname{LP}}\xspace}

\usepackage[toc, page]{appendix}

\usepackage[colorlinks=true,breaklinks=true,bookmarks=true,urlcolor=blue,
     citecolor=red,linkcolor=blue,bookmarksopen=false,draft=false]{hyperref}


 \title{Online Stochastic Matching: New Algorithms and Bounds\footnote{A preliminary version of this appeared in the European Symposium on Algorithms (ESA), 2016. Research supported NSF Awards CNS 1010789, CCF 1422569, CCF-1749864 and research awards from Adobe, Amazon, and Google.}}

\author{Brian Brubach\thanks{\textbf{Email: } \texttt{bbrubach@cs.umd.edu}}}
\author{Karthik A. Sankararaman\thanks{\textbf{Email: }\texttt{kabinav@cs.umd.edu}}}
\author{Aravind Srinivasan\thanks{\textbf{Email: } \texttt{srin@cs.umd.edu}}}
\author{Pan Xu\thanks{\textbf{Email: } \texttt{panxu@cs.umd.edu}}}
\affil{Department of Computer Science, University of Maryland, College Park, MD}

\date{First Version: May 2016 \\ This Version: July 2019}

\begin{document}

\maketitle

\begin{abstract}
	Online matching has received significant attention over the last $15$ years due to its close connection to Internet advertising. As the seminal work of Karp, Vazirani, and Vazirani has an optimal $(1 - 1/e)$ competitive ratio in the standard adversarial online model, much effort has gone into developing useful online models that incorporate some stochasticity in the arrival process. One such popular model is the ``known I.I.D. model'' where different customer-types arrive online from a known distribution. We develop algorithms with improved competitive ratios for some basic variants of this model with integral arrival rates, including: (a) the case of general weighted edges, where we improve the best-known ratio of $0.667$ due to Haeupler, Mirrokni and Zadimoghaddam (WINE 2011) to $0.705$; and (b) the vertex-weighted case, where we improve the $0.7250$ ratio of Jaillet and Lu (\emph{Mathematics of Operations Research}, 2013) to $0.7299$. 
	
	We also consider an extension of stochastic rewards, a variant where each edge has an
 independent probability of being present. For the setting of stochastic rewards with non-integral arrival rates, we present a simple \emph{optimal} non-adaptive algorithm with a ratio of $1-1/e$. For the special case where each edge is unweighted and has a uniform constant probability of being present, we improve upon $1-1/e$ by proposing a strengthened LP benchmark.

 One of the key ingredients of our improvement is the following (offline) approach to bipartite-matching polytopes with additional constraints. We first add several valid constraints in order to get a good fractional solution $\ff$; however, these give us less control over the structure of $\ff$. We next \emph{remove} all these additional constraints and randomly move from $\textbf{f}$ to a feasible point on the matching polytope with all coordinates being from the set $\{0, 1/k, 2/k, \ldots, 1\}$ for a chosen integer $k$. The structure of this solution is inspired by Jaillet and Lu (\emph{Mathematics of Operations Research}, 2013) and is a tractable structure for algorithm design and analysis. The appropriate random move preserves many of the removed constraints (approximately with high probability and exactly in expectation). This underlies some of our improvements and could be of independent interest.
\end{abstract}

\section{Introduction.}

Applications to Internet advertising have driven the study of online matching problems in recent years 
\cite{mehtaBook}. In these problems, we consider a bipartite graph $G = (U, V, E)$ in which the set of vertices $U$ is available \emph{offline} while the set of vertices in $V$ arrive \emph{online}. Whenever some vertex $v$ arrives, it must be matched immediately (and irrevocably) to (at most) one vertex in $U$. Each offline vertex $u$ can be matched to at most one $v$. In the context of Internet advertising, $U$ is the set of advertisers and $V$ is the set of impressions. The edges $E$ define the impressions that interest a particular advertiser. When an impression $v$ arrives, we must choose an available advertiser (if any) to match with it. We consider the case where $v \in V$ can be matched at most once upon arriving. Since advertising forms the key source of revenue for many large Internet companies, finding good matching algorithms and obtaining even small performance gains can have high impact. 

In the \emph{stochastic known I.I.D.} model of arrival, we are given a bipartite graph $G=(U,V,E)$ and a finite online time horizon $T$ (in most cases, we assume $T = |V| = n$ and say the online phase takes place over $n$ rounds). In each round, a vertex $v$ is sampled with replacement from a known distribution over $V$. The sampling distributions are independent and identical over all of the $T$ online rounds. This captures the fact that we often have historical data about the impressions and can predict the frequency with which each type of impression will arrive. 
Edge-weighted matching \cite{feldmanEdgeWeighted} is a general model in the context of advertising: every advertiser gains a given revenue for being matched to a particular type of impression. Here, a \emph{type} of impression refers to a class of users (e.g., a demographic group) who are interested in the same subset of advertisements. Each arrival of a type $v \in V$ is considered a distinct vertex (user) that can be matched to up to one $u \in U$. For example, if the same $v$ arrives three times, we consider this three separate vertices (or \textit{copies} of $v$) that can potentially be matched to three different vertices in $U$. A special case of this model is vertex-weighted matching \cite{aggarwalVertex}, where weights are associated only with the advertisers (the offline set $U$). In other words, a given advertiser has the same revenue generated for matching any of the user types interested in it. 
In some modern business models, revenue is not generated upon matching advertisements, but only when a user \emph{clicks} on the advertisement: this is the \emph{pay-per-click} model. From historical data, one can assign the probability of a particular advertisement being clicked by a type of user.  Works including \cite{mehta2012online, mehtaonline} capture this notion of \textit{stochastic rewards} by assigning a probability to each edge.

One unifying theme in most of our approaches is the use of an LP benchmark with additional valid constraints that hold for the respective stochastic-arrival models. We use the optimal solution to this \LP to guide our online actions. In most cases, we use various modifications of dependent randomized rounding to convert the fractional LP solution into a suitable guide for our online algorithms.


\section{Preliminaries and technical challenges.}
\label{sec:prelim}

In the \emph{Unweighted Online Known I.I.D. Stochastic Bipartite Matching} problem, we are given a bipartite graph $G = (U, V, E)$. The set $U$ is available offline while the vertex set $V$ represent the online vertices. Each edge $e \in E$ is associated with a weight $w_e$. Thus, this represents the \emph{input graph}. The vertices $v$ arrive online and are drawn with replacement from an I.I.D. distribution on $V$. For each $v \in V$, we are given an \textit{arrival rate} $r_v$, which is the expected number of times $v$ will arrive. With the exception of Section~\ref{sec:nonint}, this paper will focus on the integral arrival rates setting where all $r_v \in \mathbb{Z}^+$.  For reasons described in~\cite{bib:Haeupler}, we can further assume WLOG that each $v$ has $r_v=1$ under the assumption of integral arrival rates. In particular, a vertex type $v$ with an integral arrival rate $k > 1$, can be split into $k$ different vertex types each with an arrival rate of $1$. In this case, we have that $|V|=n$ where $n$ is the total number of online rounds. 


In the \textbf{vertex-weighted} variant, every offline vertex $u \in U$ has a weight $w_u$ (alternatively, for any vertex $u \in U$ all edges incident to $u$ have the same weight) and we seek a maximum weight matching. In the \textbf{edge-weighted} variant, every edge $e \in E$ has a weight $w_e$ and we again seek a maximum weight matching. 

In the \textbf{stochastic rewards} variant, each edge has a probability $p_e$ of being present once we probe edge $e$ 
and we seek to maximize the expected size or weight of the matching. 
The edge realization process is independent for different edges. At each step, the algorithm ``probes" an edge $e$. With probability $p_e$ the edge $e$ exists and with the remaining probability it does not. Once realization of an edge is determined, it does not affect the random realizations for the rest of the edges. We consider the query-commit model where an edge that is probed and found to exist must be matched. For a single arriving vertex, each edge can only be probed once. However, we remind the reader that multiple arrivals of the same vertex \textit{type} are considered distinct vertices. Suppose the first arrival of a vertex type $v \in V$ probes an edge to some offline vertex $u \in U$ and the edge does not exist. Then later, another copy of type $v$ might arrive and also probe an edge to $u$ because each arrival is a distinct vertex with its own hidden edge realizations.

\textbf{Asymptotic assumption and notation.} We will always assume $n$ is large and analyze algorithms as $n$ goes to infinity: e.g., if $x \leq 1 - (1-2/n)^n$, we will just write this as ``$x \leq 1 - 1/\euler^2$'' instead of the more-accurate ``$x \leq 1 - 1/\euler^2 + o(1)$''. These suppressed $o(1)$ terms will subtract at most $o(1)$ from our competitive ratios. Note the we use $\euler$ for Euler's constant in contrast with $e$ which denotes an edge.
Throughout, we use ``$\WS$'' to refer to the worst case instance for various algorithms. 

\myEdit{
\textbf{Competitive ratio.} The \textit{competitive ratio} is defined slightly differently than usual for this set of problems (similar to the notation used in \cite{mehtaBook}). In particular, it is defined as 
	$\frac{\mathbb{E}[\ALG]}{\mathbb{E}[\OPT]}$.
Here, $\mathbb{E}[\ALG]$ is the expected performance of our online algorithm with respect to the random online vertex arrivals and any internal randomness the algorithm may use; and for the stochastic rewards variant the random edge realizations, arrival sequence and internal randomness of the algorithm. Similarly, $\mathbb{E}[\OPT]$ is the expected performance of an optimal offline matching algorithm which knows the random vertex arrivals in advance. In the case of stochastic rewards, we compare to an optimal offline stochastic matching algorithm which can probe edges in any order, but does not know the outcomes of these probes and can only probe one neighbor of each vertex from the ``online'' partition.
	
\textbf{Adaptivity.} Algorithms can be \textit{adaptive} or \textit{non-adaptive}. When $v$ arrives, an adaptive algorithm can modify its online actions based on the realization of the online vertices (and edges in the stochastic rewards model) thus far, but a non-adaptive algorithm has to specify all of its actions before the start of the online phase. 
}


\subsection{LP benchmark for deterministic rewards.}
\label{lp:stoch-match}
As in prior work (\emph{e.g,} see \cite{mehtaBook}), we use the following LP to upper bound the optimal offline expected performance and also use it to guide our algorithm in the cases where rewards are deterministic. For the case of stochastic rewards, we use slightly modified LPs, whose definitions we defer until Sections \ref{sec:nonint} and \ref{sec:uniform_p}. We first show an LP for the unweighted variant, then describe the changes for the vertex-weighted and edge-weighted settings. As usual, we have a variable $f_e$ for each edge. Let $\partial(w)$ be the set of edges adjacent to a vertex $w \in U \cup V$ and let $f_w =  \sum_{e \in \partial(w)} f_e$.  Constraint~\eqref{cons:single} is used in ~\cite{bib:Manshadi} and~\cite{bib:Haeupler}. 


\vspace*{-0.1in}
\begin{alignat}{2}
\text{maximize}    &\quad  \sum_{e \in E} f_e     \\
\text{subject to}  
	& \quad\sum_{e \in \partial(u)} f_e \leq 1	&\ & \quad\forall u \in U \label{cons:umatch}\\
	& \quad\sum_{e \in \partial(v)} f_e \leq 1	&\ & \quad\forall v \in V \label{cons:vmatch}\\
	& \quad 0 \leq f_e \leq 1-1/\euler	 &\ & \quad\forall e \in E \label{cons:single}\\
	& \quad f_e + f_{e'} \leq 1-1/\euler^2	 &\ & \quad\forall e,e' \in \partial(u), \forall u \in U \label{cons:pair}
\end{alignat}

\xhdr{Variants: } The objective function is to $\text{maximize}$ $\sum_{u \in U} \sum_{e \in \partial(u)} f_e w_u$ in the vertex-weighted variant and $\text{maximize}$ $\sum_{e \in E} f_e w_e$ in the edge-weighted variant (here $w_e$ refers to $w_{(u, v)}$).


\myEdit{
\begin{lemma}
	\label{label:LPUpperbound}
	Let $\OPT$ denote the total weight obtained by the best offline algorithm. Let $\mathbf{f}^*$ denote the optimal solution to the above $\LP$. Then $ \sum_{e \in E} f^*_e \geq \mathbb{E}[\OPT]$.
\end{lemma}
}
\begin{proof}
\myEdit{We prove this as follows. Let $Y_e$ denote the indicator random variable for the event that edge $e \in E$ is matched in the optimal solution for a given arrival sequence $\mathcal{A}$. Let $y_e := \mathbb{E}_{\mathcal{A}}[Y_e]$ for every edge $e \in E$. We will now argue that the vector $\vec{y} := (y_e)_{e \in E}$ is a feasible solution to the LP. Consider a vertex $u \in U$. We have that $\sum_{ e \in \partial(u)} Y_e \leq 1$. Taking expectations on both sides and using the linearity of expectation we have $\sum_{e \in \partial(u)} Y_e \leq 1$. This shows that $\vec{y}$ is feasible to the constraint~\eqref{cons:umatch}. Let $R_v$ denote the random variable for the number of times a vertex $v \in V$ arrived in a given arrival sequence $\mathcal{A}$. Then we have, for every $v \in V$, $\sum_{e \in \partial(v)} Y_e \leq R_v$. From the integral arrival rates assumption, $\mathbb{E}_{\mathcal{A}}[R_v] = 1$ for every $v \in V$. Thus, from linearity of expectation we obtain $\sum_{e \in \partial(v)} Y_e \leq 1$. This shows that $\vec{y}$ is feasible to the constraint~\eqref{cons:vmatch}. For any edge $e=(u, v)$, let $\mathbb{I}[R_v = 0]$ be an indicator for the event that a vertex $v \in V$ never arrives in the $T$ rounds. Thus, for any arrival sequence $\mathcal{A}$, we have $Y_e \leq \mathbb{I}[R_v \neq 0]$. Taking expectations on both sides we get $Y_e \leq \mathbb{E}_{\mathcal{A}}[\mathbb{I}[R_v \neq 0]$. The probability that a vertex $v$ never arrives in $T$ rounds is $\left(1-\frac{1}{T}\right)^T \leq 1/\euler$. Thus, $\mathbb{E}_{\mathcal{A}}[\mathbb{I}[R_v \neq 0] \leq 1-1/\euler$. This shows that $\vec{y}$ is feasible to the constraint~\eqref{cons:single}. Consider two edges $e, e' \in \partial(u)$ for some $u \in U$. Let $e=(u, v)$ and $e'=(u, v')$ and as before let $\mathbb{I}[R_v \neq 0]$ and $\mathbb{I}[R(v') \neq 0]$ denote the indicator for the events that $v, v'$ arrives at least once in the $T$ rounds, respectively. For any arrival sequence $\mathcal{A}$ we have that $Y_e + Y(e') \leq \mathbb{I}[R_v \neq 0] \wedge \mathbb{I}[R(v') \neq 0]$. Taking expectations on both sides we get $Y_e + y(e') \leq \mathbb{E}_{\mathcal{A}}[\mathbb{I}[R_v \neq 0] \wedge \mathbb{I}[R(v') \neq 0]]$. The probability that both $v$ and $v'$ never arrive in the $T$ rounds is given by $\left( 1- \frac{2}{T} \right)^T \leq \frac{1}{\euler^2}$. Thus, we get $Y_e + y(e') \leq 1-\frac{1}{\euler^2}$ which shows that $\vec{y}$ is feasible to the constraint~\ref{cons:pair}. 
			
	The expected weight of the optimal solution is $\mathbb{E}_{\mathcal{A}}[\sum_{e \in E} w_e Y_e]$ which from linearity of expectation gives $\sum_{e \in E} w_e Y_e$. Since $\vec{y}$ is a feasible solution we have that the optimal value to LP is at least as large as the expected optimal solution.}
\end{proof}
\myEdit{We compare the performance of our algorithm to this $\LP$. Suppose that $\vec{f}^*$ is the optimal solution to the above $\LP$. We prove the following lemma which shows that it suffices to analyze the competitive ratio \emph{edge-wise}.
	\begin{lemma}
		\label{lem:edgeCR}
		If $\min_{e \in E, f_e^* >0} \frac{\Pr[\text{$e$ is included in the matching}]}{f_e^*} \geq \alpha$, then this implies that the competitive ratio is at least $\alpha$.
	\end{lemma}
	\begin{proof}
		 From linearity of expectation we have that 
		 \begin{align*}
		 	 \mathbb{E}[\ALG] &= \sum_{e \in E}  \Pr[\text{$e$ is included in the matching}]\\
		 	 &  \geq \alpha \sum_{e \in E} f^*_e \\
		 	 &  \geq \alpha \mathbb{E}[\OPT].
		 \end{align*}
	\end{proof}
	In what follows, we only compute a lower-bound on the probability that any edge $e \in E$ is included in the final matching (we call this quantity \emph{competitive ratio of edge $e$}) which would imply a lower-bound on the competitive ratio.
	
	In the vertex-weighted setting (Section~\ref{sec:vweight}) we compute a lower-bound on the probability that a vertex $u \in U$ is matched in any randomized online algorithm. Analogous to Lemma~\ref{lem:edgeCR}, the following lemma connects the lower-bound on this probability to the competitive ratio.
	
	\begin{lemma}
		\label{lem:vertexCR}
		Define $F_u := \sum_{e \in \partial(u)} f_e$. If $\min_{u \in U, F^*_u >0} \frac{\Pr[\text{$u$ is matched}]}{F^*_u} \geq \alpha$, then this implies that the competitive ratio is at least $\alpha$.
	\end{lemma}
	\begin{proof}
		 From linearity of expectation we have that 
		 \begin{align*}
		 	 \mathbb{E}[\ALG] &= \sum_{u \in U}  \Pr[\text{$u$ is matched}]\\
		 	 &  \geq \alpha \sum_{u \in U} F^*_u \\
		 	 & = \alpha \sum_{u \in U}  \sum_{e \in \partial(u)} f^*_e \\
		 	 &  \geq \alpha \mathbb{E}[\OPT].
		 \end{align*}
	\end{proof}
	}
	
Note that the work of \cite{bib:Manshadi} does not use an \LP to upper-bound the optimal value of the offline instance. Instead they use Monte-Carlo simulations wherein they simulate the arrival sequence and compute the vector $\vec{f}$ by approximating (via Monte-Carlo simulation) the probability of matching an edge $e$ in the offline optimal solution. We do not use a similar approach for our problems for a few reasons. (1) For the \emph{weighted} variants, namely the edge and vertex-weighted versions, the number of samples depends on the maximum value of the weight, making it expensive. (2) In the unweighted version, the running time of the sampling based algorithm is $O(|E|^2 n^4)$; on the other hand, we show in Section \ref{subsec:runtime} that the \LP based algorithm can be solved much faster, $\tilde{O}(|E|^2)$ time in the worst case and even faster than that in practice. (3) For the stochastic rewards setting, the offline problem is not known to be polynomial-time solvable\myEdit{, which is required for \cite{bib:Manshadi} since they rely on solving instances of the offline problem on simulated graphs}. \cite{assadi2016stochastic} show that under the assumption of constant $p$ \emph{and $\OPT = \omega(1/p)$}, we can obtain a $(1-\epsilon)$-approximation to the optimal solution. However, these assumptions are too strong to be used in our setting.

For the stochastic-rewards setting, one might be tempted to use an \LP to achieve the same property obtained from Monte-Carlo simulation via adding extra constraints. In the context of uniform stochastic rewards where each edge $e$ is associated with a uniform constant probability $p$, what we really need is:
 \begin{equation}
 \forall S \subseteq \pa(u), ~~ \sum_{e \in S} f_e \leq \frac{ 1-\exp(-|S|p)}{p}
 \label{cons:expstoch}
 \end{equation}
  To guarantee this via the \LP, a straightforward approach is to add this family of constraints to the \LP. However, the number of such constraints is exponential and there seems to be no \emph{obvious} separation oracle. We overcome this challenge by showing it suffices to ensure that inequality~\eqref{cons:expstoch} above holds for all $S$ with $|S| \le 2/p$, which is a constant, thus making the resultant LP polynomial-time solvable.



\subsection{Overview of edge-weighted algorithm and contributions.}

The previous best result due to~\cite{bib:Haeupler} for the edge-weighted problem was $0.667$. They used two matchings, $M_1$ and $M_2$, from the offline graph to guide the online algorithm and leverage the \textit{power of two choices}. When a vertex $v$ arrives for the first time, it can be matched to its neighbor in $M_1$ and on its second arrival it can be matched to its neighbor in $M_2$. However, these two matchings may not be edge disjoint, leaving some arriving vertices with only one choice. In fact, choosing two guiding matchings that maximize both the edge weights and the number of disjoint edges is a major challenge that arises in applying the \textit{power of two choices} to this setting.

When the same edge $(u, v)$ is included in both matchings $M_1$ and $M_2$, the copy of $(u, v)$ in $M_2$ can offer no benefit and a second arrival of $v$ is wasted. To use an example from related work, Haeupler \emph{et al.}~\cite{bib:Haeupler} choose two matchings in the following way. $M_1$ is attained by solving an LP with constraints~\eqref{cons:umatch},~\eqref{cons:vmatch} and~\eqref{cons:single} and rounding to an integral solution. $M_2$ is constructed by finding a maximum-weight matching and removing any edges which have already been included in $M_1$. A key element of their proof is showing that the probability of an edge being removed from $M_2$ is at most $1-1/\euler \approx 0.63$.

The approach in this paper is to construct two or three matchings together in a correlated manner to reduce the probability that some edge is included in all matchings. We show a general technique to construct an ordered set of $k$ matchings where $k$ is an easily adjustable parameter. For $k = 2$, we show that the probability of an edge appearing in both $M_1$ and $M_2$ is at most $1 - 2/\euler \approx 0.26$. 

For the algorithms presented, we first solve an LP on the input graph. We then round the LP solution vector to a sparse integral vector and use this vector to construct a randomly ordered set of matchings which will guide our algorithm during the online phase. We begin Section~\ref{sec:eweight} with a simple warm-up algorithm which uses a set of two matchings as a guide to achieve a $0.688$ competitive ratio, improving the best known result for this problem. We follow it up with a slight variation that improves the ratio to $0.7$ and a more complex $0.705$-competitive algorithm which relies on a convex combination of a $3$-matching algorithm and a separate \textit{pseudo-matching} algorithm.

\subsection{Overview of vertex-weighted algorithm and contributions.}

The previous best results due to~\cite{bib:Jaillet} for the vertex-weighted and unweighted problems were $0.725$ and $1 - 2\euler^{-2} \approx 0.729$, respectively. 
\myEdit{
They used a clever LP which guaranteed they could find a solution wherein each edge variable was assigned a value in $\{0, 1/3, 2/3\}$ as opposed to an arbitrary fractional value. This property, which we will call a $\{0, 1/3, 2/3\}$ solution, was required by their adaptive online algorithm. However, their special LP was a slightly weaker upper bound on the optimal solution than the LP we describe in Section~\ref{lp:stoch-match}.

Another key challenge encountered by~\cite{bib:Jaillet} was that solutions to their special LP could lead to length-four cycles of type $C_1$ shown in Figure~\ref{fig:JailletC1}. In fact, they used this case to show that no algorithm could perform better than $1 - 2\euler^{-2} \approx 0.7293$ using their LP as an upper bound. They mentioned that tighter LP constraints such as~\eqref{cons:single} and~\eqref{cons:pair} in the LP from Section~\ref{lp:stoch-match} could avoid this bottleneck, but did not propose a technique to use them. Note that the $\{0, 1/3, 2/3\}$ solution produced by their specific LP was an essential component of their \emph{Random List} algorithm.

To address this challenge, we show a randomized rounding algorithm to construct a similar, simplified $\{0, 1/3, 2/3\}$ vector from the solution of a stronger benchmark LP. This allows for the inclusion of additional constraints, most importantly constraint~\eqref{cons:pair}. Using our rounding algorithm combined with tighter constraints, we can upper-bound the probability of a vertex appearing in the cycle $C_1$ from Figure~\ref{fig:JailletC1} at $2 - 3/\euler \approx 0.89$ (See Lemma \ref{lem:fu=1}). By constant, cycles of type $C_1$ occur deterministically in~\cite{bib:Jaillet}.

Additionally, we note briefly that there are other length four cycles with different variable weights, defined as types $C_2$ and $C_3$ (See Figure~\ref{fig:JailletCycles} in Section~\ref{sec:vw-modification}). These cycles could be problematic, but we show how to deterministically break them in Section~\ref{sec:vw-modification} without creating any new cycles of type $C_1$ (This can happen if the cycle breaking is not done carefully). Finally, we describe an algorithm which utilizes these techniques to improve previous results in both the vertex-weighted and unweighted settings.
}

For this problem, we first solve the LP in Section~\ref{sec:prelim} on the input graph. In Section~\ref{sec:vweight}, we show how to use the technique in Section~\ref{subsec:drk} to obtain a sparse fractional vector. We then present a randomized online algorithm (similar to the one in~\cite{bib:Jaillet}) which uses the sparse fractional vector as a guide to achieve a competitive ratio of $0.7299$. 

Previously, there was a gap between the best unweighted algorithm with a ratio of $1 - 2\euler^{-2}$ due to~\cite{bib:Jaillet} and the negative result of $1 - \euler^{-2}$ due to~\cite{bib:Manshadi}. We take a step toward closing this gap by showing that an algorithm can achieve $0.7299 > 1 - 2\euler^{-2}$ for both the unweighted and vertex-weighted variants with integral arrival rates. In doing so, we make progess on Open Questions $3$ and $4$ from the book~\cite{mehtaBook}. \footnote{Open Questions~$3$ and $4$ state the following: ``In general, close the gap between the upper and lower bounds. In some sense, the ratio of $1 - 2\euler^{-2}$ achieved in~\cite{bib:Jaillet} for the integral case, is a nice `round' number, and one may suspect that it is the correct answer.''}

\begin{figure}
	\centering
	{\caption{This cycle is the source of the negative result described by Jaillet and Lu~\cite{bib:Jaillet}. \myEdit{It results from the edge variable assignments in their special LP.} Thick edges have $f_e = 2/3$ while thin edges have $f_e = 1/3$. \myEdit{This structure and variable assignment leads to a gap of $1 - 2\euler^{-2}$ between the LP solution and the best possible solution of any online algorithm.}}\label{fig:JailletC1}}
	{\begin{tikzpicture}
[
	xscale=1,yscale=1,auto,thick,
  	gray node/.style={circle,
  		inner sep=0pt,minimum size=14pt, 
  		fill=black!20,draw,font=\small},  
  	white node/.style={circle,
  		inner sep=0pt,minimum size=14pt, 
  		fill=white,draw,font=\small},
  node distance=24pt 
]


	\node [white node] (c1u1) at (0,0) {$u_1$};
	\node [white node] (c1u2) [below of=c1u1, xshift=0, yshift=0] {$u_2$};
	\node [white node] (c1v1) [right of=c1u1, xshift=30, yshift=0] {$v_1$};
	\node [white node] (c1v2) [right of=c1u2, xshift=30, yshift=0] {$v_2$};
	\draw[ultra thick] (c1u1)  -- (c1v1) node [midway, yshift=10] {($C_1$)};
	\draw[thin] (c1u1)  -- (c1v2);
	\draw[thin] (c1u2)  -- (c1v1);
	\draw[ultra thick] (c1u2)  -- (c1v2);

\end{tikzpicture}}
\end{figure}

\subsection{Overview of stochastic rewards  and contributions.}

Our algorithm for the more general problem allowing stochastic rewards and non-integral arrival rates (Algorithm~\ref{alg:non-integral}) is presented in Section~\ref{sec:nonint} and~\ref{sec:uniform_p}. We believe the known I.I.D. model with stochastic rewards is an interesting new direction motivated by the work of~\cite{mehta2012online} and~\cite{mehtaonline} in the adversarial model. We introduce a new, more general LP (see \LP~\eqref{lp2:stoch-match}) specifically for this setting and show that a simple algorithm using the LP solution directly can achieve a competitive ratio of $1 - 1/\euler$. This ratio is optimal among all non-adaptive algorithms for the case of non-integral arrival rates even without stochastic rewards~\cite{bib:Manshadi}. 
In~\cite{mehta2012online}, it is shown that no randomized algorithm can achieve a ratio better than 0.62 $< 1-1/\euler$ in the adversarial model \myEdit{when comparing to a problem called Budgeted Allocation as the offline optimal. While our work instead compares to offline stochastic matching as the offline optimal, the benchmark LP we use in Section~\ref{sec:nonint} (LP~\ref{lp2:stoch-match}) upper bounds Budgeted Allocation}. Hence, achieving $1-1/\euler$ for the I.I.D. model shows that this lower bound does not extend to the I.I.D. model. Further, the paper \cite{brubach2017} shows that using  \LP~\eqref{lp2:stoch-match} one cannot achieve a ratio better than $1-1/\euler$. We discuss some challenges relating to why the techniques used in prior work do not directly extend to this model. 

To take a step toward addressing these challenges in Section~\ref{sec:uniform_p}, we consider a restricted version of the problem where each edge is unweighted and has a uniform constant probability $p \in (0,1]$ under integral arrival rates. By proposing a family of valid constraints, we are able to show that in this restricted setting, one can indeed beat $1-1/\euler$. \myEdit{We note that this result cannot be compared to the work of~\cite{mehta2012online} since we use a tighter benchmark (LP~\ref{lp3:stoch-match}) which does not upper bound Budgeted Allocation.} 

\begin{table*}[!t]
\caption{Summary of Contributions}\label{summary}
\setlength{\tabcolsep}{\textwidth/80}
\renewcommand{\arraystretch}{2}
\begin{tabular}[!t]{|c|c|c|}
	\hline
	Problem & Previous Work & This Paper \\ \hline
	Edge-Weighted (Section \ref{sec:eweight}) & $0.667$ \cite{bib:Haeupler} & $0.705$  \\ \hline
	Vertex-Weighted (Section \ref{sec:vweight}) & $0.725$ \cite{bib:Jaillet} & $0.7299$ \\ \hline
	Unweighted & $1-2/\euler^2$ \cite{bib:Jaillet} & $0.7299~(> 1-2/\euler^2)$ \\
	\hline
	\multirow{ 2 }{*}{Stochastic Rewards (Section \ref{sec:nonint} and \ref{sec:uniform_p})} & \multirow{2}{*}{N/A} & $1-\euler^{-1} $ for general version \\
	&& $0.702$ for the restricted version\\ \hline
\end{tabular} 
\end{table*}


\subsection{Running time of the algorithms.}
\label{subsec:runtime}
In this section, we discuss the implementation details of our algorithms. All of our algorithms solve an \LP in the pre-processing step.
The dimension of the \LP is determined by the constraint matrix which consists of $O(|E|^2 + |U| + |V|)$ rows and $O(|E|)$ columns. However, note that the number of non-zero entries in this matrix is of the order $O(|E|^2)$ because each edge is subject to $O(|E|)$ constraints primarily due to LP constraint~\ref{cons:pair}. Some recent work (\emph{e.g.,} \cite{lee2015efficient}) shows that such sparse programs can be solved in time $\tilde{O}(|E|^2)$ using interior point methods (which are known to perform very well in practice). This sparsity in the \LP is the reason we can solve very large instances of the problem. The second critical step in pre-processing is to perform randomized rounding. Note that we have $O(|E|)$ variables and that in each step of the randomized rounding due to \cite{bib:Gandhi}, they incur a running time of $O(|E|)$. Hence the total running time to obtain a rounded solution is of the order $O(|E|^2)$. Additionally, both of these operations are part of the pre-processing step. In the online phase, the algorithm incurs a per-time-step running time of at most $O(|U|)$ for the stochastic rewards case (in fact, a smarter implementation using binary search runs as fast as $O(\log |U|)$ time) and $O(1)$ for the edge-weighted and vertex-weighted algorithms in Sections~\ref{sec:eweight} and \ref{sec:vweight}.

\subsection{LP rounding technique $\DR{k}$.}
\label{subsec:drk}

For the algorithms presented, we first solve the benchmark LP in sub-section \ref{lp:stoch-match} for the input instance to get a fractional solution vector $\ff$. We then round $\ff$ to an integral solution $\FF$ using a two step process we call $\DR{k}$. The first step is to multiply $\ff$ by $k$. The second step is to apply the dependent rounding techniques of Gandhi, Khuller, Parthasarathy, and Srinivasan~\cite{bib:Gandhi} to this new vector. In this paper, it suffices to consider $k =2$ or $k=3$. 

While dependent rounding is typically applied to values between $0$ and $1$, the useful properties extend naturally to our case in which $k f_e$ may be greater than $1$ for some edge $e$. To understand this process, it is easiest to imagine splitting each $k f_e$ into two edges with the integer value $f'_{e} = \floor{k f_e}$ and fractional value $f''_{e} = k f_e - \floor{k f_e}$. The former will remain unchanged by the dependent rounding since it is already an integer while the latter will be rounded to $1$ with probability $f''_e$ and $0$ otherwise. Our final value $F_e$ would be the sum of those two rounded values. The two properties of dependent rounding we use are:
\begin{enumerate}
	\item 
		\textbf{Marginal distribution: } For every edge $e$, let $p_e = k f_e - \floor{k f_e}$. Then, $\Pr[F_e = \ceil{k f_e}] = p_e$ and $\Pr[F_e = \floor{k f_e}] = 1 - p_e$.
	\item
		\textbf{Degree-preservation: } For any vertex $w \in U \cup V$, let its fractional degree $k f_w$ be $\sum_{e \in \partial(w)} k f_e$ and integral degree be the random variable $F_w = \sum_{e \in \partial(w)} F_e$. Then $F_w \in \{\floor{k f_w}, \ceil{k f_w}\}$. 
\end{enumerate}

\subsection{Related work.} 
 The study of online matching began with the seminal work of Karp, Vazirani, Vazirani \cite{kvv}, where they gave an optimal online algorithm for a version of the unweighted bipartite matching problem in which vertices arrive in adversarial order. Following that, a series of works have studied various related models. The book by Mehta \cite{mehtaBook} gives a detailed overview. The vertex-weighted version of this problem was introduced by Aggarwal, Goel and Karande \cite{aggarwalVertex}, where they give an optimal $\left( 1- \frac{1}{\euler} \right)$ ratio for the adversarial arrival model. The edge-weighted setting has been studied in the adversarial model by Feldman, Korula, Mirrokni and Muthukrishnan \cite{feldmanEdgeWeighted}, where they consider an additional relaxation of ``free-disposal".

In addition to the adversarial and known I.I.D. models, online matching is also studied under several other variants such as random arrival order, unknown distributions, and known adversarial distributions. In the setting of random arrival order, the arrival sequence is assumed to be a random permutation over all online vertices, see e.g., \cite{devanur2009adwords,kesselheim2013optimal,korula2009algorithms,mahdian2011online}. In the case of unknown distributions, in each round an item is sampled from a fixed but unknown distribution. If the sampling distributions are required to be the same during each round, it is called  unknown I.I.D. (\cite{devanur2011near, devanur2012asymptotically}); otherwise, it is called adversarial stochastic input (\cite{devanur2011near}). As for known adversarial  distributions, in each round an item is sampled from a known distribution, which is allowed to change over time (\cite{alaei2012online, alaei2013online}). Another variant of this problem is when the edges have stochastic rewards. Models with stochastic rewards have been previously studied by \cite{mehta2012online, mehtaonline} among others, but not in the known I.I.D. model.


\xhdr{Related work in the vertex-weighted/unweighted settings.} The vertex-weighted and unweighted settings have many results starting with Feldman, Mehta, Mirrokni and Muthukrishnan \cite{bib:Feldman} who were the first to beat $1-1/\euler$ with a competitive ratio of $0.67$ for the unweighted problem. This was improved by Manshadi, Gharan, and Saberi~\cite{bib:Manshadi} to $0.705$ with an adaptive algorithm. In addition, they showed that even in the unweighted variant with integral arrival rates, no algorithm can achieve a ratio better than $1 - \euler^{-2} \approx 0.86$. Finally, Jaillet and Lu~\cite{bib:Jaillet} presented an adaptive algorithm which used a clever LP to achieve $0.725$ and $1 - 2\euler^{-2} \approx 0.729$ for the vertex-weighted and unweighted problems, respectively. 

\xhdr{Related work in the edge-weighted setting.}
For this model, Haeupler, Mirrokni, Zadimoghaddam~\cite{bib:Haeupler} were the first to beat $1-1/\euler$ by achieving a competitive ratio of $0.667$. They use a \textit{discounted LP} with tighter constraints than the basic matching LP (a similar LP can be seen in \ref{lp:stoch-match}) and they employ the \textit{power of two choices} by constructing two matchings offline to guide their online algorithm.

 \xhdr{Other related work.}
Devanur \emph{et al} \cite{devanur2012asymptotically} gave an algorithm which achieves a ratio of $1-k!/(k^k e^k)$ for the Adwords problem \footnote{\myEdit{In the Adwords problem, every $u \in U$ has a total budget $B_u$. Each edge $e$ has a bid $b_e$ which represents the weight. The goal is to maximize the total weight of the edges matched subject to the constraint that for any vertex $u \in U$ the sum of the total weight of the edges matched to $u$ is at most $B_u$.}}  in the Unknown I.I.D. arrival model with knowledge of the optimal budget utilization and when the bid-to-budget ratios are at most $1/k$, where $k$ is some positive integer. Alaei \emph{et al.} \cite{alaei2012online} considered the Prophet-Inequality Matching problem, in which $v$ arrives from a distinct (known) distribution $\mathcal{D}_t$, in each round $t$. They gave a $1-1/\sqrt{k+3}$ competitive algorithm, where $k$ is the minimum capacity of $u$.

\section{Edge-weighted matching with integral arrival rates}
\label{sec:eweight}

	\myEdit{As mentioned in the introduction, we make the simplifying assumption that the arrival rates $r_v=1$ for every online vertex $v \in V$. Our techniques critically relies on this assumption; in particular the rounding procedure breaks when $r_v$ is fractional. If all the arrival rates are fractional, one can still employ the rounding procedure by approximating it to the \emph{nearest} rational and lose a tiny constant in the competitive ratio. However, the worst-case is when all arrival rates are $\frac{1}{n}$ in which case the competitive ratio obtained by working with an approximate arrival rate is $o(1)$.}

\subsection{Warm-up: 0.688-competitive algorithm.}

As a warm-up, we describe a simple algorithm which achieves a competitive ratio of 0.688 and introduces the key ideas in our approach.  We begin by solving the LP in sub-section \ref{lp:stoch-match} to get a fractional solution vector $\ff$ and applying $\DR{2}$ as described in Subsection~\ref{subsec:drk} to get an integral vector $\FF$. We construct a bipartite graph $G_{\FF}$ with $F_e$ copies of each edge $e$. Note that $G_{\FF}$ will have max degree $2$ since for all $w \in U \cup V$, $F_w \leq \ceil{2 f_w} \leq 2$ and thus we can decompose it into two matchings using a greedy algorithm and \emph{Hall's Theorem}. \myEdit{The exact choice of the two matchings is not critical to the algorithm as long as the union contains all edges in $G_{\FF}$}. Finally, we randomly permute the two matchings into an ordered pair of matchings, $[M_1, M_2]$. These matchings serve as a guide for the online phase of the algorithm, similar to \cite{bib:Haeupler}. 
The entire warm-up algorithm for the edge-weighted model, denoted by $\mathsf{EW}_{0}$, is summarized in Algorithm \ref{alg:ew0}.

	\IncMargin{1em}
	\begin{algorithm}[!h]
		\caption{\protect{$[\mathsf{EW}_{0}$]}}
		\label{alg:ew0}
		\DontPrintSemicolon
		Construct and solve the benchmark LP in sub-section \ref{lp:stoch-match} for the input instance. \;
		Let ${\bf f}$ be an optimal fractional solution vector. Call $\DR{2}$ to get a \myEdit{random} integral vector $\FF$. \;
		Create the graph $G_{\FF}$ with $F_e$ copies of each edge $e \in E$ and decompose it into two matchings as described in text. \;
		Randomly permute the matchings to get a {\it random ordered} pair of matchings, say $[M_{1}, M_{2}]$.  \;
		When a vertex $v$ arrives for the first time, attempt to match $v$ to $u_{1}$ if $(u_{1},v) \in M_{1}$; when $v$ arrives for the second time, attempt to match $v$ to $u_{2}$ if $(u_{2},v) \in M_{2}$.\;
		When a vertex $v$ arrives for the third time or more, do nothing in that step.
	\end{algorithm}
	\DecMargin{1em}

\subsubsection{Analysis of algorithm $\mathsf{EW}_{0}$.}
\label{subsec:analysisEW0}
 
We will show that $\mathsf{EW}_{0}$ (Algorithm~\ref{alg:ew0}) achieves a competitive ratio of $0.688$. Let $[M_{1}, M_{2}]$ be our randomly ordered pair of matchings. Note that there might exist some edge $e$ which appears in both matchings due to having $f_{e} > 1/2$, which could be rounded up to $F_e = 1$. Therefore, we consider three types of edges. We say an edge $e$ is of type $\psi_{1}$, denoted by $e \in \psi_{1}$, if and only if $e$ appears \textit{only} in $M_{1}$. Similarly $e \in \psi_{2}$, if and only if $e$ appears \textit{only} in $M_{2}$. Finally, $e \in \psi_{b}$, if and only if $e$ appears in \textit{both} $M_{1}$ and $M_{2}$. 
Let $P_{1}$, $P_{2}$, and $P_{b}$ be the probabilities of getting matched for $e \in \psi_{1}$, $e \in \psi_{2}$, and $e \in \psi_{b}$, respectively. According to the result in Haeupler \emph{et al.}~\cite{bib:Haeupler}, Lemma~\ref{lem:ew0} bounds these probabilities.

\begin{lemma}[Proof details in section 3 of \cite{bib:Haeupler}]\label{lem:ew0}
\myEdit{For any two matchings $M_1$ and $M_2$ steps (5) and (6) in Algorithm~\ref{alg:ew0} implies that we have (1) $P_{1} > 0.5808$; (2) $P_{2} > 0.14849$ and (3) $P_{b} >0.6321$.}
\end{lemma}

We can use Lemma~\ref{lem:ew0} to prove that the warm-up algorithm $\mathsf{EW}_{0}$ achieves a ratio of $0.688$ by examining the probability that a given edge becomes type $\psi_{1}$, $\psi_{2}$, or $\psi_{b}$. \\

	\xhdr{Analysis of $\mathsf{EW}_{0}$.}
Consider the following two cases.
\begin{itemize}
	\item 
		\textbf{Case 1: } $0\le f_{e} \le 1/2$: By the marginal distribution property of dependent rounding, there can be at most one copy of $e$ in $G_{\FF}$ and the probability of including $e$ in $G_{\FF}$ is $2 f_e$. Since an edge in $G_{\FF}$ can appear in either $M_1$ or $M_2$ with equal probability $1/2$, we have $\Pr[e \in \psi_{1}]=\Pr[e \in \psi_{2}]=f_{e}$. Thus, the ratio is $(f_{e}P_{1}+f_{e}P_{2})/f_{e}=P_{1}+P_{2} = 0.729$.
	\item
			\textbf{Case 2: } $1/2\le f_{e} \le 1-\frac{1}{\euler}$: Similarly, by marginal distribution, $\Pr[e \in \psi_{b}] = \Pr[F_e = \ceil{2 f_e}] = 2 f_e - \floor{2 f_e} = 2f_{e} - 1$. It follows that $\Pr[e \in \psi_{1}] = \Pr[e \in \psi_{2}] = (1/2)(1-(2f_{e} - 1)) = 1 - f_e$. Thus, the ratio is (noting that the first term is from case 1 while the second term is from case 2) 
			$((1-f_{e})(P_{1}+P_{2})+(2f_{e}-1)P_{b})/f_{e} \ge 0.688 $, where the $\WS$ is for an edge $e$ with $f_{e}=1-\frac{1}{\euler}$. \qed
\end{itemize}

\subsection{Improved algorithm: 0.7-competitive algorithm.}
\label{subsec:attenuation}

In this section, we describe an improvement upon the previous warm-up algorithm to get a competitive ratio of $0.7$. We start by making an observation about the performance of the warm-up algorithm. After solving the LP, let edges with $f_e > 1/2$ be called \textit{large} and edges with $f_e \leq 1/2$ be called \textit{small}. Let $L$ and $S$, be the sets of large and small edges, respectively. Notice that in the previous analysis, small edges achieved a much higher competitive ratio of $0.729$ versus $0.688$ for large edges. This is primarily due to the fact that we may get two copies of a large edge in $G_{\FF}$. In this case, the copy in $M_1$ has a better chance of being matched, since there is no edge which can ``block'' it (i.e. an edge with the same offline neighbor that gets matched first), but the copy that is in $M_2$ has no chance of being matched.

To correct this imbalance, we make an additional modification to the $f_e$ values \textit{before} applying $\DR{k}$. The rest of the algorithm is exactly the same. \myEdit{Let $\eta$ be a parameter to be optimized in the analysis. For all large edges $\ell \in L$ such that $f_{\ell}^* > 1/2$, we set $\tilde{f}_{\ell}^*(\ell) = f_{\ell}^* + \eta$. For all small edges $s \in S$ which are adjacent to some large edge, let $\ell \in L$ be the largest edge adjacent to $s$ such that $f_{\ell}^* > 1/2$. Note that it is possible for $s$ to have two large neighbors, but we only care about the larger of the two. We set $\tilde{f}_s^* = f_s^* \left( \frac{1-\tilde{f}_{\ell}^*}{1 - f_{\ell}^*} \right)$.}

In other words, we increase the values of large edges while ensuring that for all $w \in U \cup V$, $f_w \leq 1$ by reducing the values of neighboring small edges proportional to their original values. Note that it is not possible for two large edges to be adjacent since they must both have $f_e > 1/2$. For all other small edges which are not adjacent to any large edges, we leave their values unchanged. We then apply $\DR{2}$ to this new vector, multiplying by 2 and applying dependent rounding as before.

\subsubsection{Analysis.}

\begin{theorem}
	\label{thm:ewSimple}
\myEdit{For edge-weighted online stochastic matching with integral arrival rates, $\mathsf{EW}(0.0142)$ achieves a competitive ratio of at least $0.7$.}
\end{theorem}

\myEdit{
\begin{proof}
	As in the warm-up analysis, we'll consider large and small edges separately
	
		\begin{itemize}
		\item \textbf{Scenario 1:} $0 \leq f_s^* \leq \frac{1}{2}$: 
		
		Here we have two cases
		\begin{itemize}
			\item \textbf{Case 1:} $s$ is not adjacent to any large edges.
				
				In this case, the analysis is the same as Case 1 in the warm-up analysis.  Thus, the probability that edge $s$ is added to the matching is $0.729 f_e^*$.
			
			\item \textbf{Case 2:} $s$ is adjacent to some large edge $\ell$.
			
			For this case, let $f_{\ell}^*$ be the value of the largest neighboring edge in the original LP solution. Then the probability that edge $s$ is added to the matching is
			\[
				f_s^* \left( \frac{1 - (f_{\ell}^* + \eta)}{1 - f_{\ell}^*} \right) (0.1484 + 0.5803).
			\]
			This follows from Lemma~\ref{lem:ew0}; in particular, the first two terms are the result of how we set $\tilde{f}_s$ in the algorithm, while the two numbers, $0.1484$ and $0.5803$, are the probabilities that $s$ is matched when it is in $M_2$ and $M_1$, respectively.
			Note that for $f_{\ell}^* \in [0,1)$ this is a decreasing function in $f_{\ell}^*$. So the worst case is when $f_{\ell}^* = 1-\frac{1}{\euler}$ (due to third constraint in the LP~\eqref{cons:single}) Thus, the probability that edge $s$ is added to the matching is
			\begin{equation*}
				f_s^* \left( \frac{1 - (1-\frac{1}{\euler} + \eta)}{1 - (1 - \frac{1}{\euler})} \right) (0.1484 + 0.5803).
			\end{equation*}
			Since $\eta = 0.0142$, this evaluates to,
			\begin{equation}
				\label{eq:EWCase1Final}
				0.701 f_s^*.
			\end{equation}
		\end{itemize}
		
		\item $\frac{1}{2} < f_{\ell}^* \leq 1-\frac{1}{\euler}$:
		Here, the probability that $\ell$ is added to the matching is,
		$[1 - (f_{\ell}^*(\ell) + \eta)][P_{1}+P_{2}]+[2(f_{\ell}^* + \eta) - 1]P_{b}$. This can re-arranged to obtain 
		\begin{equation}
			\label{eq:EWPart1}
				(P_1 + P_2)(1-\eta) + (2 \eta - 1) P_b + f_{\ell}^*[2 P_b - P_1 - P_2].
		\end{equation}
		Since $\eta = 0.0142$ using Lemma~\ref{lem:ew0} we have $(P_1 + P_2)(1-\eta) + (2 \eta - 1) P_b = 0.1048$. Similarly, using Lemma~\ref{lem:ew0} we have $2 P_b - P_1 - P_2 = 0.535$. Thus, Eq.~\eqref{eq:EWPart1} simplifies to,
		\begin{equation}
			\label{eq:EWPart2}
			0.1048 + f_{\ell}^* 0.535
		\end{equation}
		We can write Eq.~\eqref{eq:EWPart2} as $f_{\ell}^* [0.1048/f_{\ell}^* + 0.535]$. Note that $\frac{1}{2} < f_{\ell}^* \leq 1-\frac{1}{\euler}$. Thus, Eq.~\eqref{eq:EWPart2} can be lower-bounded by 
		\begin{equation}
			\label{eq:EWCase2Final}
			0.701 f_{\ell}^* .
		\end{equation}
		\end{itemize}
		
		Thus combining Eq.~\eqref{eq:EWCase1Final} and \eqref{eq:EWCase2Final} with Lemma~\ref{lem:edgeCR} we get a competitive ratio of $0.7$.
	
		We now show that the chosen value of $\eta = 0.0142$ ensures that both $\tilde{f}_{\ell}^*$ and $\tilde{f}_s^*$ are less than $1$ \emph{after} modification. Since $f_{\ell}^* \leq 1-\frac{1}{\euler}$ we have that $f_{\ell}^* + \eta \leq 1-\frac{1}{\euler}+0.0142 \leq 1$. Note that $f_{\ell}^* \geq 1/2$. Hence, the modified value $\tilde{f}_s^*$ is always less than or equal to the original value, since $\left( \frac{1-(f_{\ell}^* + \eta)}{1 - f_{\ell}^*} \right)$ is decreasing in the range $f_{\ell}^* \in [1/2, 1-\frac{1}{\euler}]$ and has a value less than $0.98$ at $f_{\ell}^* = 1/2$.
		\end{proof}
	}

\subsection{Final algorithm: Roadmap.}

In the next few subsections, we describe our final edge-weighted algorithm with all of the attenuation factors. To keep it modular, we give the following guide to the reader. We note that the definition of large and small edges given below in Subsection~\ref{apx:hybrid-alg} is different from the definition in the previous subsection.

\label{guide:edge}

\begin{itemize}
	\item 
	Section~\ref{apx:hybrid-alg} describes the main algorithm which internally invokes two algorithms, $\mathsf{EW}_{1}$ and $\mathsf{EW}_{2}$, which are described in sections~\ref{append:ew1} and \ref{append:ew2}, respectively. 
	\item
	Theorem \ref{thm:ew7} proves the final competitive ratio. This proof depends on the performance guarantees of $\mathsf{EW}_{1}$ and 
	$\mathsf{EW}_{2}$, which are given by Lemmas~\ref{lem:ew1} and \ref{lem:ew2}, respectively.
	\item
	The proof of Lemma \ref{lem:ew1} depends on claims \ref{cl:ew1-a}, \ref{cl:ew1-b}, and \ref{cl:ew1-c} (Found in the Appendix). Each of those claims is a careful case-by-case analysis. Intuitively, \ref{cl:ew1-a} refers to the case where the offline vertex $u$ is incident to one large edge and one small edge (here the analysis is for the large edge), \ref{cl:ew1-b} refers to the case where $u$ is incident to three small edges and \ref{cl:ew1-c} refers to the case where $u$ is incident to a small edge and large edge (here the analysis is for the small edge).
	\item
	The proof of Lemma \ref{lem:ew2} depends on claims \ref{cl:ew2-a} and \ref{cl:ew2-b} (Found in the Appendix). Again, both of those claims are proven by a careful case-by-case analysis. Since there are many cases, we have given a diagram of the cases when we prove them.
\end{itemize}

\subsubsection{Algorithm $\EW$: 0.705-competitive algorithm.}
\label{apx:hybrid-alg}

In this section, we describe an algorithm $\EW$ (Algorithm ~\ref{alg:ew}), that achieves a competitive ratio of 0.705. The algorithm first solves the benchmark LP in subsection \ref{lp:stoch-match} and obtains a fractional optimal solution $\ff$. By invoking $\DR{3}$, it obtains a random integral solution $\FF$. Notice that from LP Constraint~\eqref{cons:single} we see $f_e \leq 1-\frac{1}{\euler} \leq 2/3$. Therefore after $\DR{3}$, each $F_e \in \{0, 1, 2\}$. We say an edge $e$ is {\it large} if $F_e=2$ and {\it small} if $F_e=1$ (note that this differs from the definition of large and small in Subsection~\ref{subsec:attenuation}). 

We design two non-adaptive algorithms, denoted by $\EW_{1}$ and $\EW_{2}$, which take the sparse graph $G_{\FF}$ as input. 
The difference between the two algorithms $\EW_1$ and $\EW_2$ is that $\EW_{1}$ favors the \emph{small edges} while $\EW_{2}$ favors the \emph{large edges}. The final algorithm is to take a convex combination of $\EW_{1}$ and $\EW_{2}$, \ie run $\EW_1$ with probability $q$ and $\EW_2$ with probability $1-q$.

\begin{theorem}
	\label{thm:ew7}
For edge-weighted online stochastic matching with integral arrival rates, the algorithm $\mathsf{EW}[q]$  with $q=0.149251$ achieves a competitive ratio of at least $0.70546$.
\end{theorem}	

\IncMargin{1em}
\begin{algorithm}[!h]
	\caption{\protect{$\EW$[q]}}\label{alg:ew}
	\DontPrintSemicolon
	Solve the benchmark LP in sub-section \ref{lp:stoch-match} for the input. Let $\ff$ be the optimal solution vector. \;
	Invoke $\DR{3}$ to obtain the vector $\FF$. \;
	Independently run \myEdit{$\mathsf{EW}_{1}[0.538]$ and $\mathsf{EW}_{2}[0.687, 1]$} with probabilities $q$ and $1-q$ respectively on $G_{\FF}$.
\end{algorithm}
\DecMargin{1em}
The details of algorithm $\EW_1$ and $\EW_2$ and the proof of Theorem  \ref{thm:ew7} are presented in the following sections.

\subsubsection{Sub-routine 1: algorithm $\EW_{1}$.}
\label{append:ew1}
In this section, we describe the randomized algorithm $\EW_{1}$ (Algorithm~\ref{alg:ew1}). Let $\PM{3}$ refer to the process of constructing the graph $G_\FF$ with $F_e$ copies of each edge $e$, decomposing it into three matchings\footnote{\myEdit{The natural way of repeatedly computing the maximum matching can be used to find this decomposition.}}, and randomly permuting the matchings. 
$\EW_{1}$ first invokes $\PM{3}$ to obtain a \emph{random ordered} triple of matchings, say $[M_{1}, M_{2}, M_{3}]$. Notice that from LP Constraint~\eqref{cons:single} and the properties of $\DR{3}$ and $\PM{3}$, an edge will appear in at most two of the three matchings. For a small edge $e=(u,v)$ with $F_e=1$, we say $e$ is of type $\Gamma_{1}$ if  $u$ has two other neighbors $v_1$ and $v_2$ with $F_{(u,v_1)}=F_{(u,v_2)}=1$.
We say $e$ is of type $\Gamma_{2}$ if $u$ has exactly one other neighbor $v_{1}$ with $F_{(u,v_{1})}=2$. WLOG we can assume that for every $u$, $F_{u}=\sum_{e \in \partial(u)} F_e=3$; otherwise, we can add a dummy node $v'$ to the neighborhood of $u$. \myEdit{Similarly, we assume $F_v=\sum_{e \in \partial(v)} F_e=3$ by adding dummy nodes $u'$. Note that when we assign $v$ to a dummy node $u'$, it essentially means rejection of $v$ when it arrives. Since all $v$ has $F_v=3$, we can safely assume that each $v$ has exactly one edge in each of the three matchings output by $\PM{3}$.} We use the terminology, *assign $v$ to $u$*,  to denote that \emph{we assign $v$ to $u$ if $u$ is not matched and reject $v$ otherwise}. 

\IncMargin{1em}
\begin{algorithm}[!h]
	\caption{\protect{$\EW_{1}[h]$}}
	\label{alg:ew1}
	\DontPrintSemicolon
	Invoke $\PM{3}$ to obtain a \emph{random ordered } triple matchings, say $[M_{1}, M_{2}, M_{3}]$. \;

	When a vertex $v$ comes for the first time, assign $v$ to $u_{1}$ with $(u_{1},v) \in M_{1}$. \;
	When $v$ comes for the second time, assign $v$ to $u_{2}$ with $(u_{2},v) \in M_{2}$.\;
	\myEdit{When $v$ comes for the third time, let $e=(u_3,v)$ be the edge in $M_3$. If $e$ is either a large edge or a small edge of type $\Gamma_{1}$ then assign $v$ to $u_{3}$. However, if $e$ is a small edge of type $\Gamma_{2}$, then \textit{with probability $h$} assign $v$ to $u_{3}$ and do nothing otherwise.} \;
	When $v$ comes for the fourth or more time, do nothing in that step.\;
\end{algorithm}
\DecMargin{1em}

Let $\RR{\EW_{1}}{1/3}$ and $\RR{\EW_{1}}{2/3}$ be the competitive ratio for a small edge and large edge respectively.  

\begin{lemma}\label{lem:ew1}
	For $h=0.537815$, $\EW_{1}[h]$ achieves a competitive ratio $\RR{\EW_{1} }{2/3}=0.679417$,   $\RR{\EW_{1}} { 1/3} =0.751066$ for a large and small edge respectively.
\end{lemma}		

\begin{proof}
	In case of the large edge $e$, we divide the analysis into three cases where each case corresponds to $e$ being in one of the three matchings. And we combine these conditional probabilities using Bayes' theorem to get the final competitive ratio for $e$. For each of the two types of small edges, we similarly condition them based on the matching they can appear in, and combine them using Bayes' theorem.
	A complete version of proof can be found in Section \ref{apx:analy-EW1} of Appendix.
\end{proof}					


\subsubsection{Sub-routine 2: algorithm $\EW_{2}$.}
\label{append:ew2}

Algorithm $\EW_{2}$ (Algorithm~\ref{alg:ew2}) is a non-adaptive algorithm which takes $G_{\FF}$ as input and performs well on the \emph{large edges} with $F_e=2$. 
\myEdit{Recall that $\FF$ is an integral vector output by $\DR{3}$ with $F_{e}\in \{0, 1, 2\}$ for each $e$. WLOG, we can assume that $F_{v}=3$ for every $v$ in $G_{\FF}$; otherwise we can add \emph{dummy} vertices to ensure the case. Unlike $\EW_{1}$, $\EW_{2}$ will invoke a routine, denoted by $\mathsf{PM}^{*}[\FF,2]$ (Algorithm~\ref{PM:EW2}), to generate a pair of pseudo matchings from $\FF$.} 


\IncMargin{1em}
\begin{algorithm}[!h]
	
	\caption{\protect{$\mathsf{PM}^{*}[\FF,2][y_{1},y_{2}]$}}
	\label{PM:EW2}
	\DontPrintSemicolon
	Suppose $v$ has two neighbors, say $u_{1}, u_{2}$, with $e_{1}=(u_{1},v)$ being a large edge while $e_{2}=(u_{2},v)$ being a small edge. Add $e_{1}$ to the primary matching $M_{1}$ and $e_{2}$ to the secondary matching $M_{2}$.\;
	\myEdit{Suppose $v$ has three neighbors and the incident edges are $\partial(v)=(e_{1},e_{2},e_{3})$.  Take a random permutation of $\partial(v)$, say $(\pi_{1},\pi_{2},*) \in \Pi(\partial(v))$. \myEdit{Independently} add $\pi_{1}$ to $M_{1}$ with probability $y_{1}$ and $\pi_{2}$ to $M_{2}$ with probability $y_{2}$. (We use only the first two edges in the random permutation.)} 
\end{algorithm}

\myEdit{Note that the pair of matchings generated by $\mathsf{PM}^{*}[\FF,2]$ can be pseudo-matchings. Consider the following case: (1) $v$ has a large edge $e=(u,v)$; (2) $u$ has a small edge $e'=(u,v')$ other than $e$; and (3) $v'$ has two other small edges excluding $e'$. From $\mathsf{PM}^{*}[\FF,2][y_{1},y_{2}]$, we see that with probability $1$, $e =(u,v)\in M_1$ and with probability $y_1/3$ ($e'$ appears first in the random permutation and get selected in $M_1$), $e'=(u,v') \in M_1$. In that case, $u$ will have two neighbors in $M_1$.}

Algorithm \ref{alg:ew2} describes $\EW_{2}$ which uses Algorithm~\ref{PM:EW2} as a sub-routine.

\begin{algorithm}[!h]
	\caption{\protect{$\mathsf{EW}_{2}[y_{1},y_{2}]$}}\label{alg:ew2}
	\DontPrintSemicolon
	Invoke $\mathsf{PM}^{*}[\FF,2][y_{1},y_{2}]$ to generate a {\it random ordered} pair of pseudo-matchings, say $[M_{1}, M_{2}]$.  \;
	When a vertex $v$ comes for the first time, assign $v$ to some $u_{1}$ if $(u_{1},v) \in M_{1}$; When $v$ comes for the second time, try to assign $v$ to some $u_{2}$ if $(u_{2},v) \in M_{2}$.\;
	When a vertex $v$ comes for the third or more time, do nothing in that step.
\end{algorithm}

Let $\RR{\EW_{2}}{1/3}$ and $\RR{\EW_{2}}{2/3}$ be the competitive ratios for small edges and large edges, respectively.  

\begin{lemma}\label{lem:ew2}
	For $y_{1}=0.687$ and $y_{2}=1$, $\EW_{2}[y_{1},y_{2}]$ achieves a competitive ratio of $\RR{\EW_{2}}{ 2/3} = 0.8539 $ and $\RR{\EW_{2}}{ 1/3} = 0.4455 $ for a large and small edge respectively.
\end{lemma}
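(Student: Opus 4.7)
The plan is to mirror the proof structure of Lemma~\ref{lem:ew1}: for each type of edge $e=(u,v)$ in $G_\FF$, decompose the probability that $e$ is actually matched in the online phase by conditioning on which pseudo-matching $e$ lands in under $\mathsf{PM}^*[\FF,2][y_1,y_2]$, then combine via Bayes' rule. Since $\EW_2$ is non-adaptive and each $v$ contributes at most one edge to each pseudo-matching, the only way $e$ fails to be matched is (i)~$v$ does not arrive in the appropriate round, or (ii)~$u$ is already matched via a different edge when $v$ first becomes eligible to use $e$. The latter event is governed by the random choices inside $\mathsf{PM}^*$ at each of $u$'s other $v$-neighbors, plus the i.i.d. arrival order.

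For a large edge $e=(u,v)$ (i.e., $F_e=2$), first observe that since $F_v=3$ and every $F_{e'}\le 2$, the vertex $v$ necessarily falls into the first case of $\mathsf{PM}^*$, so $e$ is placed into $M_1$ deterministically (and $v$'s unique small neighbor edge is placed into $M_2$). Hence $\Pr[e\text{ matched}] = \Pr[v\text{ arrives}]\cdot\Pr[u\text{ free at }v\text{'s first arrival}]$. I would enumerate the other edge $e'=(u,v')$ incident to $u$ together with the case of $v'$ in $\mathsf{PM}^*$ (Case~1 vs.\ Case~2 with its random permutation weighted by $y_1$), compute $\Pr[e'\in M_1]$ in each sub-configuration, and then use standard i.i.d.\ facts (e.g.\ $\Pr[v\text{ arrives}]=1-1/e$, and conditional on both $v$ and $v'$ arriving the first arrival is uniformly ordered). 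Taking the worst sub-configuration gives $\RR{\EW_2}{2/3}=0.8539$, formalized in Claim~\ref{cl:ew2-a}.

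For a small edge $e=(u,v)$ (i.e., $F_e=1$), the analysis branches on both endpoints. On the $v$-side: if $v$ is in Case~1 of $\mathsf{PM}^*$, then $e$ is paired with a large edge and is placed into $M_2$ deterministically, so $e$ can only fire on $v$'s second arrival; if $v$ is in Case~2 (three small neighbors), then by the uniform permutation $e$ lands in $M_1$ with probability $y_1/3$ and in $M_2$ with probability $y_2/3$, and is ignored otherwise. On the $u$-side, I would enumerate $u$'s other incident edges analogously (large vs.\ small; and for each small sibling, Case~1 vs.\ Case~2 at its $v'$-endpoint) and compute the joint probability that $u$ remains free until $e$ fires in the corresponding round. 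Summing over placements $M_1$/$M_2$ by Bayes and taking the worst configuration gives $\RR{\EW_2}{1/3}=0.4455$ at $(y_1,y_2)=(0.687,1)$, formalized in Claim~\ref{cl:ew2-b}.

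The main obstacle is the combinatorial explosion in the small-edge case: each endpoint independently falls into one of two $\mathsf{PM}^*$ cases, and in Case~2 the random permutation interacts nontrivially with the i.i.d.\ arrival order at the neighbors, so one must carefully track the joint distribution of ``$e$ placed in $M_j$'' and ``$u$ free when $v$ fires $e$'' rather than treating them independently. A diagram of the sub-cases (as promised by the authors) is essential to avoid double-counting. Once the worst configuration is isolated, each individual computation reduces to routine arithmetic with the quantities $1-1/e$, $1-2/e^2$, and $1-3/e^3$ that arise from Poissonization of the i.i.d.\ arrivals, and plugging in $y_1=0.687$, $y_2=1$ yields the claimed numerical ratios; the specific choice of $y_1$ is essentially the one that equalizes the two worst cases contributing to the overall bound in $\EW[q]$.
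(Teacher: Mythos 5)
Your proposal follows essentially the same approach as the paper's proof: condition on which pseudo-matching $e$ is placed into by $\mathsf{PM}^*[\FF,2][y_1,y_2]$, enumerate the local neighborhood configurations of $e$ at $u$ and at each of $u$'s $v$-neighbors, combine via Bayes' rule, and identify the worst configuration — for large edges the two configurations in which $u$'s small sibling has one large far-neighbor versus two small far-neighbors, and for small edges the eight configurations arising from the large/small sibling at $u$ crossed with the Case~1/Case~2 split at each $v$-neighbor (the paper's bottleneck is configurations (1a) and (1b)). One small note: your observation that $e$ large implies $v$ is in Case~1 and $e\in M_1$ deterministically is correct, and it is best justified by the WLOG normalization $F_v=3$ (the paper's statement ``$F_v=1$'' just before Algorithm~\ref{PM:EW2} appears to be a typo for the pre-rounding fractional degree $f_v=1$, equivalently $F_v=3$ after $\DR{3}$).
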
		

\begin{proof}
	We analyze this on a case-by-case basis by considering the local neighborhood of the edge. A large edge can have two possible cases in its neighborhood, while a small edge can have eight possible cases. This is because of the fact that a large edge can have only small edges in its neighborhood while a small edge can have both large and small edges in its neighborhood. Choosing the worst case among the two for large edge and the worst case among the eight for the small edge, we prove the claim. Complete details of the proof can be found in section \ref{apx:analy-EW2} of Appendix.
\end{proof}		

\subsubsection{Convex combination of $\EW_{1}$ and $\EW_{2}$.}	
In this section, we prove Theorem \ref{thm:ew7}.

\begin{proof}
	Let $(a_{1}, b_{1})$ be the competitive ratios achieved by $\EW_{1}$ for large and small edges, respectively. \myEdit{From Lemma~\ref{lem:ew1} we have that $a_1 = 0.751$ and $b_1 = 0.679$.} Similarly, let $(a_{2},b_{2})$ denote the same for $\EW_{2}$. \myEdit{From Lemma~\ref{lem:ew2} we have $a_2 = 0.854$ and $b_2 = 0.445$.}
	
	We have the following two cases.
	\begin{itemize}
		\item $0 \leq f_e \leq \frac{1}{3}$: By marginal distribution property of $\DR{3}$,  we know that $\Pr[F_{e}=1]=3f_{e}$. Thus, the final ratio is 
		$$3f_{e}(qb_{1}/3+(1-q)b_{2}/3)/f_{e}=qb_{1}+(1-q)b_{2}$$
		
		\item $1/3 \leq f_e \leq 1-\frac{1}{\euler}$: By the same properties of $\DR{3}$, we know that $\Pr[F_{e}=2]=3f_{e}-1$ and 
		$\Pr[F_{e}=1]=2-3f_{e}$. Thus, the final ratio is
		$$\Big( (3f_{e}-1)(2qa_{1}/3+2(1-q)a_{2}/3)+(2-3f_{e})(qb_{1}/3+(1-q)b_{2}/3)  \Big)/f_{e}$$
	\end{itemize}
	
	The competitive ratio of the convex combination is maximized at $q=0.149251$ with a value of $0.70546$.
\end{proof}

\section{Vertex-weighted stochastic I.I.D. matching with integral arrival rates.}
\label{sec:vweight}

In this section, we consider vertex-weighted online stochastic matching on a bipartite graph $G$ under the known I.I.D. model with integral arrival rates. We present an algorithm in which each offline vertex $u$ has a competitive ratio of at least $0.72998 > 1-2\euler^{-2}$. 

\myEdit{Recall from Section~\ref{subsec:drk} that after invoking $\DR{3}$, we can obtain  a ({\it random}) integral vector ${\FF}$ with $F_e \in \{0, 1, 2\}$. Define $\HH=\FF/3$ and thus $H_e \in \{0, 1/3, 2/3\}$. Notice that $F_u =\sum_{e \in \pa(u)} F_e\le 3$  due to the degree preservation property from $\DR{3}$ and $H_u\doteq \sum_{e \in \pa(u)} H_e \le 1$. Let $G(\FF)$ and $G(\HH)$ be the induced sub-graphs of $G$ determined by $F_e$ and $H_e$ respectively. In particular, all edges $e$ with $F_e=0$ and $H_e$=0 are removed from the respective graphs.}

The main idea of our algorithm is as follows.

\begin{enumerate}

 \item  Solve the vertex-weighted benchmark LP in Section \ref{lp:stoch-match}. Let $\ff$ be an optimal solution vector.   
 
 \item Invoke $\DR{3}$ to obtain  an integral vector ${\FF}$ and a fractional vector $\HH$ with $\HH=\FF/3$. 
 
 \item Apply a series of modifications to $\HH$ and transform it to another solution $\HH'$ (See Section \ref{sec:vw-modification}).
 
 \item Run the Randomized List Algorithm~\cite{bib:Jaillet} with parameter $\HH'$, denoted by $\mathsf{RLA}[\HH']$, on $G(\HH)$. \\
\end{enumerate} 

\myEdit{
We first myefly describe how we overcome the vertex-weighted and unweighted \emph{bottleneck} cases for the algorithm in \cite{bib:Jaillet} and then explain the algorithm in full detail. Recall that~\cite{bib:Jaillet} analyze their algorithm by considering cases for various neighborhood structures at a given offline vertex.

The $\WS$ for the vertex-weighted case in~\cite{bib:Jaillet} is shown in Figure~\ref{fig:JailletWS} (left), which happens at a node $u$ with a competitive ratio of $0.725$. Jaillet and Lu described and analyzed this case in Claim 5 within the proof of Lemma 7 from~\cite{bib:Jaillet}. However, also from their analysis, we have that the node $u_1$ in Figure \ref{fig:JailletWS} (left) has a competitive ratio of at least 0.736. Hence, we can \emph{boost} the performance of $u$ at the cost of $u_1$ by making $u$ more likely to match and $u_1$ less likely. Specifically, we increase the value of $H_{(u,v_1)}$ and decrease the value $H_{(u_1,v_1)}$. Cases (10) and (11) in Figure \ref{fig:secondModification} illustrate this.

After this modification, we will later show that the new $\WS$ for vertex-weighted is now the $C_1$ cycle shown in both Figures~\ref{fig:JailletC1} and~\ref{fig:JailletWS} (right) and defined formally in Section~\ref{sec:cyclebreak}. In fact, this is also the $\WS$ for the unweighted problem in~\cite{bib:Jaillet} as well. Jaiillet and Lu give the following explaining in their ``Tight example'' section~\cite{bib:Jaillet}:
\begin{quote}
It is worth mentioning that the ratio of $1-2\euler^{-2}$ is tight for this algorithm. The ratio can be achieved with the following example: Consider the case of the complete bipartite graph $K_{n,n}$, where $n$ is an even number. One optimal solution to [the LP from~\cite{bib:Jaillet}] consists of a disjoint union of $n/2$ cycles of length $4$; within each cycle, two edges carry $1/3$ flow, and two carry $2/3$ flow. Since the underlying graph is $K_{n,n}$, the optimal offline solution is $n$. On the other hand, for any cycle in the offline optimal solution, the expected number of matches is $2(1 - \euler^{-2}$). Therefore, the competitive ratio in this instance is $1 - 2\euler^{-2} \approx 0.729$.
\end{quote}
 However, Lemma \ref{lem:fu=1} implies that $C_1$ cycles can be avoided with probability at least $\frac{3}{\euler}-1$ due to our LP and rounding procedure. This helps us improve the ratio even for the unweighted case in \cite{bib:Jaillet}. Lemma~\ref{lem:fu=1} describes this formally.
}

\begin{lemma}
\label{lem:fu=1}
For any given $u \in U$, $u$ appears in a $C_1$ cycle after $\DR{3}$ with probability at most $2 - \frac{3}{\euler}$.
\end{lemma}
\begin{proof}
Consider the graph $G(\FF)$ with $\FF$ obtained from $\DR{3}$. Notice that for some vertex $u$ to appear in a $C_1$ cycle, it must have a neighboring edge with $H_e = 2/3$. Now we try to bound the probability of this event. It is easy to see that for some $e \in \partial(u)$ with $f_{e} \le 1/3$, $F_e \le 1$ after $\DR{3}$, and hence $H_e=F_e/3 \le 1/3$. Thus only those edges $e \in \partial(u)$ with $f_{e}>1/3$ will possibly be rounded to $H_{e}=2/3$. Note that, there can be at most two such edges in $\partial(u)$, since $\sum_{e \in \partial(u)} f_{e} \leq 1$. Hence, we have the following two cases. 

\begin{itemize}
	\item 
		\textbf{Case 1: } $\partial(u)$ contains only one edge $e$ with $f_{e} > 1/3$. 
		Let $q_{1}=\Pr[H_e=1/3]$ and $q_{2}=\Pr[H_{e}=2/3]$ after $\DR3$. By $\DR{3}$, we know that $\mathbb{E}[H_{e}] =\E[F_e]/3 = q_{2}(2/3) + q_{1}(1/3) = f_{e}
		$.

		Notice that $q_{1}+q_{2}=1$ and hence
		$q_{2}= 3f_{e} - 1$.
		Since this is an increasing function of $f_{e}$ and $f_{e} \leq 1 - \frac{1}{\euler}$ from LP constraint \eqref{cons:single}, we have $q_{2} \leq 3(1 -\frac{1}{\euler}) - 1 = 2 - \frac{3}{\euler}$.
	\item
		\textbf{Case 2: } $\partial(u)$ contains two edges $e_{1}$ and $e_{2}$ with $f_{e_{1}} > 1/3$ and $f_{e_{2}} > 1/3$.
		Let $q_{2}$ be the probability that after $\DR{3}$, either $H_{e_{1}}= 2/3$ or $H_{e_{2}} = 2/3$. Note that, these two events are mutually exclusive since $H_{u} \leq 1$. Using the analysis from case 1, it follows that $q_{2}= (3f_{e_{1}} - 1) + (3f_{e_{2}} - 1) = 3(f_{e_{1}}+f_{e_{2}})-2  
		$.

		From LP constraint \eqref{cons:pair}, we know that $ f_{e_{1}}+f_{e_{2}} \leq 1 - \frac{1}{\euler^2}$, and hence $q_{2} \le 3(1 - \frac{1}{\euler^2})-2<2-\frac{3}{\euler}$. \\
\end{itemize}
\end{proof}

Now we present the details of  $\mathsf{RLA}$ based on a given $\HH'$ in Section \ref{apx:RLA-alg} and then discuss the two modifications transforming $\HH$ to $\HH'$ in Section \ref{sec:vw-modification}. We give a formal statement of our algorithm in Section \ref{sec:vw_last} and the related analysis. 


\subsection{$\mathsf{RLA}$ algorithm.}
\label{apx:RLA-alg}

\myEdit{We describe how to apply the $\RLA$ algorithm with parameter $\HH'$. WLOG assume that $H'_v \doteq \sum_{e \in \pa(v)} H'_e=1$.\footnote{We can add a dummy node $u'$ to $v$ if $H'_v<1$ and assignment $v$ to $u'$ simply means rejection of $v$.} Let $\delta_{\HH'}(v)$ be the set of neighbors of $v$ in $G(\HH')$ with $H'_{u,v}>0$. Thus, $|\del_\HH'(v)| \ge 2$ since each non-zero $H'_e$ satisfies $H'_e \in \{1/3,2/3\}$.}

Each time when a vertex $v$ comes, $\RLA$ first generates a random list $\cR_v$, which is a permutation over $\delta_{\HH'}(v)$, as follows.
\begin{itemize}
	\item If $|\delta_{\HH'}(v)|=2$, say $\delta_{\HH'}(v)=(u_1,u_2)$, then sample a random list $\cR_v$ such that
	\begin{equation}\label{eqn:rl_1}
	\Pr[\mathcal{R}_{v}=(u_{1},u_{2})]=H'_{(u_{1},v)},~~\Pr[\mathcal{R}_{v}=(u_{2},u_{1})]=H'_{(u_{2},v)}
	\end{equation}

	\item  If $|\delta_{\HH'}(v)|=3$, say $\delta_{\HH'}(v)=(u_1,u_2, u_3)$. Then we sample a permutation of $(i,j,k)$ over $\{1,2,3\}$ such that 
	\begin{equation}\label{eqn:rl_2}
	\Pr[\mathcal{R}_{v}=(u_{i},u_{j}, u_{k})]=H'_{(u_{i},v)}\frac{H'_{(u_{j},v)}}{H'_{(u_{j},v)}+H'_{(u_{k},v)}}
	\end{equation}
\end{itemize}

We can verify that the sampling distributions described in Equations \eqref{eqn:rl_1} and \eqref{eqn:rl_2} are valid since $H'_v=\sum_{e \in \pa(v)} H'_e=1$ and no $H'_e=1$. \myEdit{(Both properties are guaranteed in the two modifications shown in Section \ref{sec:vw-modification}.)} The full details of the Random List Algorithm, $\mathsf{RLA}[\HH']$, are shown in Algorithm \ref{alg:RLA}.
\begin{algorithm}[h!]	
	\caption{ (\cite{bib:Jaillet}) \protect{$\mathsf{RLA}[{\HH'}] $}  }\label{alg:RLA}
	\DontPrintSemicolon
	When a vertex $v$ comes, generate a random list $\mathcal{R}_{v}$ satisfying Equation \eqref{eqn:rl_1} or \eqref{eqn:rl_2}. \;
	If all $u$ in the list are matched, then drop the vertex $v$; otherwise, assign $v$ to the first unmatched $u$ in the list.\;
\end{algorithm}

 
\subsection{Two kinds of modifications to $\HH$.} 
\label{sec:vw-modification}

As stated earlier, we first modify $\vec{H}$ before running the $\mathsf{RLA}$ algorithm. In this section, we describe the modifications.

\subsubsection{The first modification to $\HH$: cycle breaking.}
\label{sec:cyclebreak}

\begin{figure}[h]
	\centering
	\begin{tikzpicture}
[
	xscale=0.7,yscale=1,auto,thick,
  	gray node/.style={circle,
  		inner sep=0pt,minimum size=14pt, 
  		fill=black!20,draw,font=\small},  
  	white node/.style={circle,
  		inner sep=0pt,minimum size=14pt, 
  		fill=white,draw,font=\small},
  node distance=24pt 
]


	\node [gray node] (u) at (-5,-1) {$u$};
	\node [white node] (v1) [right of=u, xshift=30, yshift=15] {$v_1$};
	\node [white node] (v2) [right of=u, xshift=30, yshift=-15] {$v_2$};
	\draw[ultra thick] (u)  -- (v1);
	\draw[thin] (u)  -- (v2);
	\node [white node] (u1) [above of=u, yshift=-5] {$u_1$};
		\draw[thin] (v1) -- (u1);
	\node [white node] (u2) [below of=u, yshift=5] {$u_2$};
		\draw[thin] (v2) -- (u2);
	\node [white node] (u3) [below of=u2, yshift=5] {$u_3$};
		\draw[thin] (v2) -- (u3);
	\node (v3) [right of=u1, xshift=5, yshift=10] {};
		\draw[ultra thick] (u1) -- (v3) node [yshift=10] {($\WS$)};
	\node (fu) [left of=u, xshift=10, yshift=0] {$1$};
	\node (fu1) [left of=u1, xshift=10, yshift=0] {$1$};
	\node (fu2) [left of=u2, xshift=10, yshift=0] {$1$};
	\node (fu3) [left of=u3, xshift=10, yshift=0] {$1$};


	\node [white node] (c1u1) at (0,0) {$u_1$};
	\node [white node] (c1u2) [below of=c1u1, xshift=0, yshift=0] {$u_2$};
	\node [white node] (c1v1) [right of=c1u1, xshift=25, yshift=0] {$v_1$};
	\node [white node] (c1v2) [right of=c1u2, xshift=25, yshift=0] {$v_2$};
	\draw[ultra thick] (c1u1)  -- (c1v1) node [midway, yshift=10] {($C_1$)};
	\draw[thin] (c1u1)  -- (c1v2);
	\draw[thin] (c1u2)  -- (c1v1);
	\draw[ultra thick] (c1u2)  -- (c1v2);


	\node [white node] (c2u1) at (3.5,0) {$u_1$};
	\node [white node] (c2u2) [below of=c2u1, xshift=0, yshift=0] {$u_2$};
	\node [white node] (c2v1) [right of=c2u1, xshift=25, yshift=0] {$v_1$};
	\node [white node] (c2v2) [right of=c2u2, xshift=25, yshift=0] {$v_2$};
	\draw[ultra thick] (c2u1)  -- (c2v1) node [midway, yshift=10] {($C_2$)};
	\draw[thin] (c2u1)  -- (c2v2);
	\draw[thin] (c2u2)  -- (c2v1);
	\draw[thin] (c2u2)  -- (c2v2) node (c2start) [midway, below, yshift=-5] {};


	\node [white node] (bc2u1) at (3.5,-2.5) {$u_1$};
	\node [white node] (bc2u2) [below of=bc2u1, xshift=0, yshift=0] {$u_2$};
	\node [white node] (bc2v1) [right of=bc2u1, xshift=25, yshift=0] {$v_1$};
	\node [white node] (bc2v2) [right of=bc2u2, xshift=25, yshift=0] {$v_2$};
	\draw[thin] (bc2u1)  -- (bc2v1) node (c2end) [midway, yshift=5] {};
	\draw[ultra thick] (bc2u1)  -- (bc2v2);
	\draw[ultra thick] (bc2u2)  -- (bc2v1);

\draw[->] (c2start) -- (c2end);


	\node [white node] (c3u1) at (7,0) {$u_1$};
	\node [white node] (c3u2) [below of=c3u1, xshift=0, yshift=0] {$u_2$};
	\node [white node] (c3v1) [right of=c3u1, xshift=25, yshift=0] {$v_1$};
	\node [white node] (c3v2) [right of=c3u2, xshift=25, yshift=0] {$v_2$};
	\draw[thin] (c3u1)  -- (c3v1) node [midway, yshift=10] {($C_3$)};
	\draw[thin] (c3u1)  -- (c3v2);
	\draw[thin] (c3u2)  -- (c3v1);
	\draw[thin] (c3u2)  -- (c3v2) node (c3start) [midway, below, yshift=-5] {};


	\node [white node] (bc3u1) at (7,-2.5) {$u_1$};
	\node [white node] (bc3u2) [below of=bc3u1, xshift=0, yshift=0] {$u_2$};
	\node [white node] (bc3v1) [right of=bc3u1, xshift=25, yshift=0] {$v_1$};
	\node [white node] (bc3v2) [right of=bc3u2, xshift=25, yshift=0] {$v_2$};
	\draw[ultra thick] (bc3u1)  -- (bc3v1) node (c3end) [midway, yshift=5] {};
	\draw[ultra thick] (bc3u2)  -- (bc3v2);

\draw[->] (c3start) -- (c3end);




  



\end{tikzpicture}
	\caption{Left: The $\WS$ for Jaillet and Lu~\cite{bib:Jaillet} for their vertex-weighted case. Right: The three possible types of cycles of length $4$ after applying $\DR{3}$. Thin edges have $H_e = 1/3$ and thick edges have $H_e = 2/3$. \myEdit{The arrows show how cycles $C_2$ and $C_3$ are broken.}}
	\label{fig:JailletCycles}
	\label{fig:JailletWS}
\end{figure}
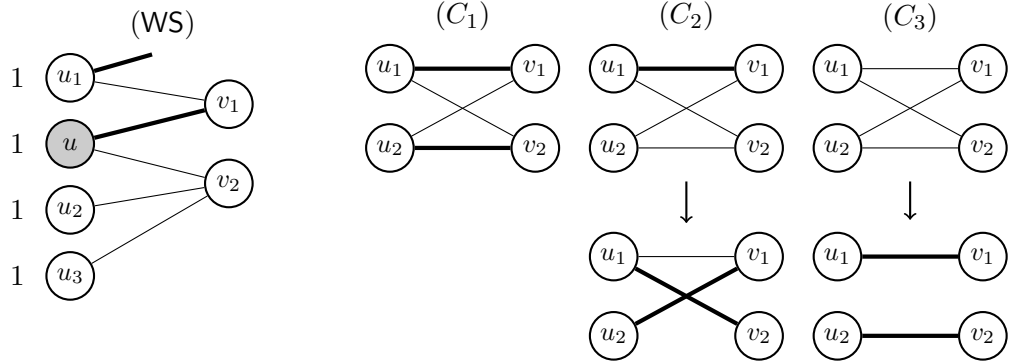

\myEdit{
The first modification is to break the cycles of length $4$ deterministically. There are three possible cycles of length $4$ in the graph $G_{\HH}$, denoted $C_1$, $C_2$, and $C_3$ in the righthand side of Figure~\ref{fig:JailletCycles} and defined as follows. 

\begin{definition}[Cycle type $C_1$]
This length $4$ cycle is a complete bipartite graph on two offline vertices and two online vertices. It has two vertex-disjoint edges with $H_e = 2/3$ and the remaining edges have $H_e = 1/3$.
\end{definition}

\begin{definition}[Cycle type $C_2$]
This length $4$ cycle is a complete bipartite graph on two offline vertices and two online vertices. It has one edge with $H_e = 2/3$ and the remaining edges have $H_e = 1/3$.
\end{definition}
 
\begin{definition}[Cycle type $C_3$]
This length $4$ cycle is a complete bipartite graph on two offline vertices and two online vertices. All edges have $H_e = 1/3$.
\end{definition}
 
}

In \cite{bib:Jaillet}, they give an efficient way to break $C_{2}$ and $C_{3}$, as shown in Figure~\ref{fig:JailletCycles}. Cycle $C_{1}$ cannot be modified further and hence, is the bottleneck for their unweighted case. Notice that, while breaking the cycles of type $C_{2}$ and $C_{3}$, new cycles of $C_1$ can be created in the graph. Since our randomized construction of solution $\HH$ gives us control on the probability of cycles $C_1$ occurring, we would like to break $C_{2}$ and $C_{3}$ in a controlled way, so as not to create any new $C_1$ cycles. This procedure is summarized in Algorithm~\ref{alg:CycleBreak} and its correctness is proved in Lemma~\ref{lem:alg-cycle}.

\begin{algorithm}[!t]
	
	\caption{ \protect{[Cycle breaking algorithm] Offline Phase} }\label{alg:CycleBreak}
	\DontPrintSemicolon
	While there is some cycle of type $C_2$ or $C_3$, Do:\;
	Break all cycles of type $C_2$. \;
	Break one cycle of type $C_3$ and return to the first step. \;
\end{algorithm}

\myEdit{
\label{apex:proof-de-cycle}
The proof of Lemma~\ref{lem:alg-cycle} will follow from three claims which we state and prove below.

\begin{claim}
	\label{cl:cycleFw}
	Breaking cycles will not change the value $H_w$ for any $w \in U \cup V$.
\end{claim}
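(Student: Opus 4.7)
The plan is to reduce this claim to a direct, case-by-case verification of the two cycle-breaking operations depicted on the right half of Figure~\ref{fig:JailletCycles}. Before diving into the verification, I would first observe a general reduction: each cycle-breaking step only modifies the four edges that make up the cycle being broken, so for any vertex $w \in U \cup V$ that is not one of the four vertices of the cycle in question, the values $H_e$ for $e \in \partial(w)$ are untouched, and hence $H_w$ is trivially preserved. This leaves only the four cycle-vertices to check, and by iterating over the successive cycle-breaks in Algorithm~\ref{alg:CycleBreak} we reduce the whole claim to checking one $C_2$-break and one $C_3$-break.

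Next I would verify the $C_2$ case. A $C_2$ cycle has exactly one thick edge (value $2/3$) and three thin edges (value $1/3$). Reading off Figure~\ref{fig:JailletCycles}, the cycle-break swaps the single thick edge from, say, $(u_1,v_1)$ to the opposite pair $(u_1,v_2)$ and $(u_2,v_1)$ (both now thick), and drops the value on $(u_1,v_1)$ and $(u_2,v_2)$ accordingly. A quick bookkeeping at each of the four vertices shows, for instance at $u_1$, that the original contribution $2/3 + 1/3 = 1$ becomes $1/3 + 2/3 = 1$; analogous one-line checks at $u_2$, $v_1$, $v_2$ exhaust the case.

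Then I would verify the $C_3$ case. A $C_3$ cycle has all four edges thin (value $1/3$), and the break replaces this with a perfect matching on the cycle consisting of two thick edges, say $(u_1,v_1)$ and $(u_2,v_2)$, and drops the other two edges to $0$. Each of the four vertices originally contributed $1/3 + 1/3 = 2/3$ to its own $H_w$, and afterwards contributes $2/3 + 0 = 2/3$, so all four cycle-degrees are preserved.

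The only subtlety — which I would flag but not expect to be a real obstacle — is that Algorithm~\ref{alg:CycleBreak} performs a sequence of such breaks, and one has to argue inductively that the invariant $H_w$ (for each $w$) survives the whole loop. But since each individual break preserves $H_w$ for every vertex, this is immediate by induction on the number of breaks executed. The main work, therefore, is just the two small case checks; there is no global or probabilistic argument needed for this particular claim.
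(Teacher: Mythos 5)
Your proof is correct and takes essentially the same approach as the paper's, which simply points to Figure~\ref{fig:JailletCycles} and observes that the increases and decreases in edge values are designed to cancel at each vertex. You carry out the per-vertex bookkeeping explicitly for both the $C_2$ and $C_3$ breaks and note the trivial reduction to the four cycle-vertices and the induction over iterations, but the underlying argument is identical.
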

\begin{proof}
	As shown in Figure~\ref{fig:JailletCycles}, we increase and decrease edge values $f_e$ in such a way that their sums $H_w$ at any vertex $w$ will be preserved.
\end{proof}

\begin{claim}
	\label{cl:c2}
	After breaking a cycle of type $C_2$, the vertices $u_1$, $u_2$, $v_1$, and $v_2$ can never be part of any length four cycle.
\end{claim}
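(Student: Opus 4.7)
The plan is to exploit the fact that breaking a $C_2$ cycle (i) leaves both $u_1$ and $v_1$ \emph{saturated}, meaning $H_{u_1}=H_{v_1}=1$, (ii) deletes the edge $(u_2,v_2)$ entirely, and (iii) preserves every $H_w$ by Claim~\ref{cl:cycleFw}. Combined with the degree bound $H_w \le 1$ and the fact that every surviving edge has value in $\{1/3,2/3\}$, this heavily constrains the neighborhoods of the four vertices in $G_\HH$ after the modification.

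First, I would record the local neighborhood that remains. The in-cycle edges collapse to the path $v_2 - u_1 - v_1 - u_2$ with values $2/3,1/3,2/3$, and the vertex sums are $H_{u_1}=H_{v_1}=1$ and $H_{u_2}=H_{v_2}=2/3$. Since the two path edges at $u_1$ (resp.\ $v_1$) already contribute the full amount $1$ to $H_{u_1}$ (resp.\ $H_{v_1}$), both $u_1$ and $v_1$ have no further neighbors in $G_\HH$. For $u_2$ and $v_2$ the remaining slack is only $1/3$, so each of them has at most one additional neighbor outside $\{u_1,u_2,v_1,v_2\}$, because every nonzero edge weight is $\ge 1/3$.

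Next, I would enumerate, for each of the four vertices, the possible length-4 cycles through it. A length-4 cycle through $u_1$ has the form $u_1 - v - u - v' - u_1$; since $u_1$'s only neighbors are $v_1$ and $v_2$, we must have $\{v,v'\}=\{v_1,v_2\}$, forcing $u$ to be a common neighbor of $v_1$ and $v_2$. But $v_1$'s only remaining neighbor besides $u_1$ is $u_2$, and the edge $(u_2,v_2)$ has been removed — contradiction. A symmetric argument handles $v_1$. For $u_2$, any length-4 cycle must use $v_1$ (its only in-cycle neighbor) together with either another in-cycle neighbor or its at most one external neighbor $v^*$; tracing the alternation through $v_1$'s only other neighbor $u_1$ forces $v^*\in\{v_1,v_2\}$, which again demands the deleted edge $(u_2,v_2)$. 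The case of $v_2$ is fully analogous.

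The main obstacle is really just careful bookkeeping: making sure I cover the sub-cases depending on whether the fourth vertex of the hypothetical cycle lies inside or outside $\{u_1,u_2,v_1,v_2\}$, and confirming in each branch that the alternation in a bipartite 4-cycle funnels the argument back through either $u_1$ or $v_1$. The decisive ingredient is the saturation observation, which shuts $u_1$ and $v_1$ off from all outside neighbors so that any candidate 4-cycle touching $\{u_1,u_2,v_1,v_2\}$ is trapped inside that set, where the only possible 4-cycle would require the edge $(u_2,v_2)$ that has just been deleted.
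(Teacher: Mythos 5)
Your proposal is correct and follows essentially the same strategy as the paper: exploit the saturation $H_{u_1}=H_{v_1}=1$ to conclude $u_1,v_1$ have no neighbors outside the remaining path, and then observe that the deletion of $(u_2,v_2)$ forces any would-be length-4 cycle through $u_2$ or $v_2$ (which necessarily routes via the saturated $v_1$ then $u_1$) to require the removed edge. You merely spell out the cycle-tracing step that the paper leaves implicit.
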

\begin{proof}
	Consider the structure after breaking a cycle of type $C_2$. Note that the edge $(u_2, v_2)$ has been permanently removed and hence, these four vertices together can never be part of a cycle of length four. The vertices $u_1$ and $v_1$ have $H_{u_1} = 1$ and $H_{v_1} = 1$ respectively. So they cannot have any other edges and therefore cannot appear in any length four cycle. The vertices $u_2$ and $v_2$ can each have one additional edge, but since the edge $(u_2, v_2)$ has been removed, they can never be part of any cycle with length less than six. 
\end{proof}

\begin{claim}
	\label{cl:c3}
	When all length four cycles are of type $C_1$ or $C_3$, breaking exactly one cycle of type $C_3$ cannot create a new cycle of type $C_1$. 
\end{claim}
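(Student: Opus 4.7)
The plan is to argue by contradiction, using the assumption that no $C_2$ cycles exist (so every length-four cycle is $C_1$ or $C_3$) to show that any hypothetical new $C_1$ cycle would have to arise from a pre-existing $C_2$, which is impossible. First I would spell out exactly what ``breaking a $C_3$'' does to the graph. If the broken cycle sits on vertices $u_1,u_2,v_1,v_2$ with all four edges thin in $G_{\HH}$, the modification converts two of these edges (say $u_1v_1$ and $u_2v_2$) from thin to thick and drops the other two ($u_1v_2$ and $u_2v_1$). No other edge of $G_{\HH}$ changes its value.

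Next I would parametrize a hypothetical new $C_1$ cycle $C^*$. Since $C^*$ did not exist before but exists after the modification, it must use at least one of the four modified slots. The two slots that went from thin to removed contribute nothing ($C^*$ is a subgraph of the post-modification $G_{\HH}$), so $C^*$ must contain one of the new thick edges $u_1v_1$ or $u_2v_2$. I would split into two cases by symmetry. In the first case, $C^*$ uses both of these as its two heavy edges; then $C^*$ lives on $\{u_1,u_2,v_1,v_2\}$ and its thin edges must be $u_1v_2$ and $u_2v_1$, but those were removed, contradiction. In the second case, WLOG $C^*$ uses $u_1v_1$ as one heavy edge and some $u'v'$ with $u'\neq u_1$, $v'\neq v_1$ as the other.

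Now I would do a short sub-case analysis on whether $u'\in\{u_2\}$ or $v'\in\{v_2\}$: in either such situation, one of the required thin edges of $C^*$ is $u_2v_1$ or $u_1v_2$, both of which were removed, giving a contradiction. The remaining sub-case is $u'\notin\{u_1,u_2\}$ and $v'\notin\{v_1,v_2\}$, so the four vertices $u_1,v_1,u',v'$ are disjoint from $\{u_2,v_2\}$, and the three edges $u'v'$, $u_1v'$, $u'v_1$ carry their \emph{original} (un-modified) weights: thick, thin, thin respectively. Combined with the pre-modification value of $u_1v_1$, which was thin (since $u_1v_1$ was a thin edge of the broken $C_3$), the pre-modification graph contained the length-four cycle on $\{u_1,v_1,u',v'\}$ with one thick and three thin edges, i.e., a $C_2$ cycle. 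This contradicts the hypothesis that only $C_1$ and $C_3$ cycles were present before the break.

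The main obstacle I anticipate is the bookkeeping of which vertices of the new $C_1$ can coincide with $u_2$ or $v_2$, since the two removed edges provide the only ``escape routes'' from contradiction and one must be sure not to miss a configuration. Once the case split above is written out carefully, together with the symmetric argument starting from $u_2v_2$ instead of $u_1v_1$, the claim follows.
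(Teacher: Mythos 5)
Your proof is correct and closely parallels the paper's own argument, but it is framed as a proof by contradiction rather than a direct enumeration. Both hinge on the same structural facts: (i) breaking a $C_3$ never adds an edge, so any post-modification $C_1$ must already have been a $4$-cycle with all four of its edges present before, and (ii) the two upgraded edges $u_1v_1$, $u_2v_2$ touch all four vertices of the broken $C_3$, so any \emph{distinct} $4$-cycle can share at most one of them. Where you diverge is the final step. The paper argues forward: pre-existing $4$-cycles are $C_1$ (unaffected, since their thick edges cannot belong to an all-thin $C_3$) or $C_3$, and a distinct $C_3$ can gain at most one thick edge, landing it at $C_2$ at worst, never $C_1$. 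You instead assume a new $C_1$ named $C^*$, rule out the configurations that would use a removed edge, and show that in the remaining case $C^*$ was a $C_2$ before the modification --- contradicting the hypothesis that no $C_2$ cycles remain. These are two sides of the same coin (with one upgrade, $C_3 \to C_2$ and $C_2 \to C_1$), so the content is equivalent; your route makes the dependence on the ``no $C_2$ cycles'' hypothesis very explicit, while the paper's version is somewhat more compact and also spells out that new $C_2$ cycles may appear (which is why Algorithm~\ref{alg:CycleBreak} alternates between clearing all $C_2$ cycles and breaking a single $C_3$). One small point worth stating explicitly in your write-up: the reason $C^*$ must use a modified edge at all is that otherwise its edge values are unchanged, so it would have been a $C_1$ already --- you gesture at this but it deserves a sentence.
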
	
\begin{proof}
	First, we note that since no edges will be added during this process, we cannot create a new cycle of length four or join with a cycle of type $C_1$. Therefore, the only cycles which could be affected are of type $C_3$. However, every cycle $c$ of type $C_3$ falls into one of two cases:
	\begin{description}
	\item[Case 1:]~~$c$ is the cycle we are breaking. In this case, $c$ cannot become a cycle of type $C_1$ since we remove two of its edges and break the cycle.
	
	\item[Case 2:]~~$c$ is not the cycle we are breaking. In this case, $c$ can have at most one of its edges converted to a $2/3$ edge. Let $c'$ be the length four cycle we are breaking. Note that $c$ and $c'$ will differ by at least one vertex. When we break $c'$, the two edges which are converted to $2/3$ will cover all four vertices of $c'$. Therefore, at most one of these edges can be in $c$.

Note that breaking one cycle of type $C_3$ could create cycles of type $C_2$, but these cycles are always broken in the next iteration, before breaking another cycle of type $C_3$. 
	\end{description}
\end{proof}

\begin{lemma}\label{lem:alg-cycle}
After applying Algorithm~\ref{alg:CycleBreak} to $G({\HH})$, we have (1) the value $H_{w}$ is preserved for each $w \in U \cup V$; (2) no cycle of type $C_{2}$ or $C_{3}$ exists; (3) no new cycle of type $C_{1}$ is added.
\end{lemma}
\begin{proof}
The proof follows from Claims~\ref{cl:cycleFw},~\ref{cl:c2}, and~\ref{cl:c3}. 
Notice that $C_2$ cycles can be freely broken without creating new $C_1$ cycles. After removing all cycles of type $C_2$, removing a single cycle of type $C_3$ cannot create any cycles of type $C_1$. Hence, Algorithm~\ref{alg:CycleBreak} removes all $C_2$ and $C_3$ cycles without creating any new $C_1$ cycles.
\end{proof}
}

\subsubsection{The Second modification to $\HH$: balancing the worst case.}
  
\myEdit{Informally, this second modification decreases $H_e$ values on $u$ with $H_u=1/3$ or $H_u=2/3$ and increases  $H_e$ values on $u$ with $H_u=1$. We will illustrate this intuition on the following example.}

\begin{figure}[h]
	\centering
	\begin{tikzpicture}
[
	xscale=1,yscale=1,auto,thick,
  	gray node/.style={circle,
  		inner sep=0pt,minimum size=18pt, 
  		fill=black!20,draw,font=\small},  
  	white node/.style={circle,
  		inner sep=0pt,minimum size=18pt, 
  		fill=white,draw,font=\small},
  node distance=30pt 
]


	\node [gray node] (u) at (0,0) {$u$};
	\node [white node] (v1) [right of=u, xshift=30, yshift=15] {$v_1$};
	\node [white node] (v2) [right of=u, xshift=30, yshift=-15] {$v_2$};
	\draw[ultra thick] (u)  -- (v1);
	\draw[thin] (u)  -- (v2);
	\node [white node] (u1) [above of=u, yshift=0] {$u_1$};
		\draw[thin] (v1) -- (u1);
	\node [white node] (u2) [below of=u, yshift=0] {$u_2$};
		\draw[ultra thick] (v2) -- (u2);
	\node (fu) [left of=u, xshift=10, yshift=0] {$1$};
	\node (fu1) [left of=u1, xshift=10, yshift=0] {$1/3$};
	\node (fu2) [left of=u2, xshift=10, yshift=0] {$2/3$};



  



\end{tikzpicture}
	\hspace{8em}\begin{tikzpicture}
[
	xscale=1,yscale=1,auto,thick,
  	gray node/.style={circle,
  		inner sep=0pt,minimum size=14pt, 
  		fill=black!20,draw,font=\small},  
  	white node/.style={circle,
  		inner sep=0pt,minimum size=14pt, 
  		fill=white,draw,font=\small},
  node distance=24pt 
]


	\node [white node] (u1) at (0,0) {$u_1$};
	\node [white node] (v) [right of=u, xshift=30, yshift=-15] {$v_1$};
	\draw[thin] (u1)  -- (v) node[pos=0.3]{$0.1$};
	\node [gray node] (u) [below of=u, yshift=0] {$u$};
		\draw[ultra thick] (u) -- (v) node[pos=0, xshift=35, yshift=-13]{$0.9$};
	\node (fu1) [left of=u1, xshift=6, yshift=0] {$1/3$};
	\node (fu) [left of=u, xshift=10, yshift=0] {$1$};



  



\end{tikzpicture}
	\caption{An example of the need for the second modification.  For the left: competitive analysis shows that in this case, $u_1$ and $u_2$ can achieve a high competitive ratio at the expense of $u$. For the right: an example of balancing strategy by making $v_{1}$ slightly more likely to pick $u$ when it comes.} 
	\label{fig:AttenuationExample}
\end{figure}
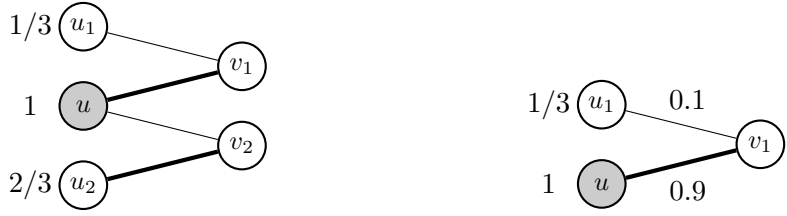

\myEdit{Consider the two graphs, denoted by $G_L$ and $G_R$ in Figure \ref{fig:AttenuationExample}, where thin and thick edges represent $H_{e}=1/3$ and $H_{e}=2/3$ respectively. We now compute the competitive ratio after applying $\mathsf{RLA}$ on $G_L$.  For each node $w$, let $\del(w)$ be the set of neighbors of $w$ in $G_L$. Let $A_u$ be the event that $u$ is matched in $\mathsf{RLA}$. Let $A_{u,1}$ denote the event that among the $n$ random arrival lists, there exists a list starting with $u$. For each $v \in \del(u)=\{v_1,v_2\}$, let $A_{u,2}^{v}$ denote the event that among the $n$ online arrival lists, there exists successive lists such that (I) Each of those lists starts with a $u' \neq u$ and $u' \in \delta(v)$ and (II) The lists arrive in an order which ensures $u$ will be matched by the algorithm. From lemma 4 and Corollary 1 in \cite{bib:Jaillet}, we have the following.

\begin{lemma}[\cite{bib:Jaillet}]
	\label{lem:cycle}
	Suppose $u$ is not a part of any cycle of length $4$. We have 
	$$\Pr[A_u]=1-(1-\Pr[A_{u,1}])\prod_{v \in \del(u)} (1-\Pr[A_{u,2}^{v}]) +o(1/n)$$
\end{lemma}

The validity of the above lemma can be seen as follows: the probability that $u$ is not matched ($\neg A_u$) can be approximated up to $o(1/n)$ by the probability that none of lists arrives staring with $u$ ($\neg A_{u,1}$) and  none of events described in (II) occurs $(\wedge_{v \in \del(u)} \neg A_{u,2}^v)$. 

For the node $u$ in $G_L$, we have $\Pr[A_{u,1}]=1-\euler^{-1}$. 
From the definition, $A_{u,2}^{v_{1}}$ is the event that among the $n$ online lists, the random list $\mathcal{R}_{v_{1}}=(u_{1},u)$ comes at least twice. Notice that the list $\mathcal{R}_{v_{1}}=(u_{1},u)$ arrives with probability $\frac{1}{3n}$ each round. Thus we have 
$\Pr[A_{u,2}^{v_{1}}]=\Pr[X \ge 2]=1-\euler^{-1/3}(1+1/3)$, where $X \sim \operatorname{Pois} (1/3)$. Similarly, we can get $\Pr[A_{u,2}^{v_{2}}]=1-\euler^{-2/3}(1+2/3)$ and the resultant 
$\Pr[A_u]=1-\frac{20}{9e^{2}} \sim 0.699$. Observe that for $u_1$ and $u_2$, 
$\Pr[A_{u_1} \ge \Pr[A_{u_1,1}]=1-\euler^{-1/3}$ and $\Pr[A_{u_2}] \ge \Pr[A_{u_2,1}]=1-\euler^{-2/3}$.

Let $\RR{\RLA}{1}$, $\RR{\RLA}{1/3}$ and $\RR{\RLA}{2/3}$ be the competitive ratio
achieved by $\mathsf{RLA}$ for $u$, $u_{1}$ and $u_{2}$ respectively. Hence, we have $\RR{\RLA}{1}\sim 0.699$ while $\RR{\RLA}{1/3} \ge 3(1-\euler^{-1/3})\sim 0.8504$ and  $\RR{\RLA}{2/3} \ge 0.729$. 

Intuitively, one can improve the worst case ratio by increasing the arrival rate for $\mathcal{R}_{v_{1}}=(u,u_{1})$ while reducing that for $\mathcal{R}_{v_{1}}=(u_{1},u)$. Suppose we modify $H_{(u_{1},v_{1})}$ and $H_{(u,v_{1})}$ to $H'_{(u_{1},v_{1})}=0.1$ and $H'_{(u,v_{1})}=0.9$ as shown in $G_R$, the arrival rate for $\mathcal{R}_{v_{1}}=(u,u_{1})$ and $\mathcal{R}_{v_{1}}=(u_{1},u)$ gets modified to $0.1/n$ and $0.9/n$ respectively.  The updated values are $\Pr[A_{u,1}] = 1-\euler^{-0.9-1/3}$, $\Pr[A_{u,2}^{v_1}]=1-\euler^{-0.1}(1+0.1)$, $\RR{\RLA}{1}=0.751$, $\Pr[A_{u_1,1}] = 1-\euler^{-1/3}$, $\Pr[A_{u_{1},2}^{v_{1}}] \sim 0.227$ and $\RR{\RLA}{1/3} \ge 0.8$. Hence, the performance on $\WS$ instance improves. Notice that after the modifications, 
$H'_{u}=H'_{(u,v_{1})}+H_{(u,v_{2})}=0.9+1/3$.
} 

Figure \ref{fig:secondModification} describes the various modifications applied to $\HH$ vector. The values on top of the edge, denote the new values. Cases (11) and (12) help improve upon the $\WS$ described in Figure \ref{fig:JailletWS}.
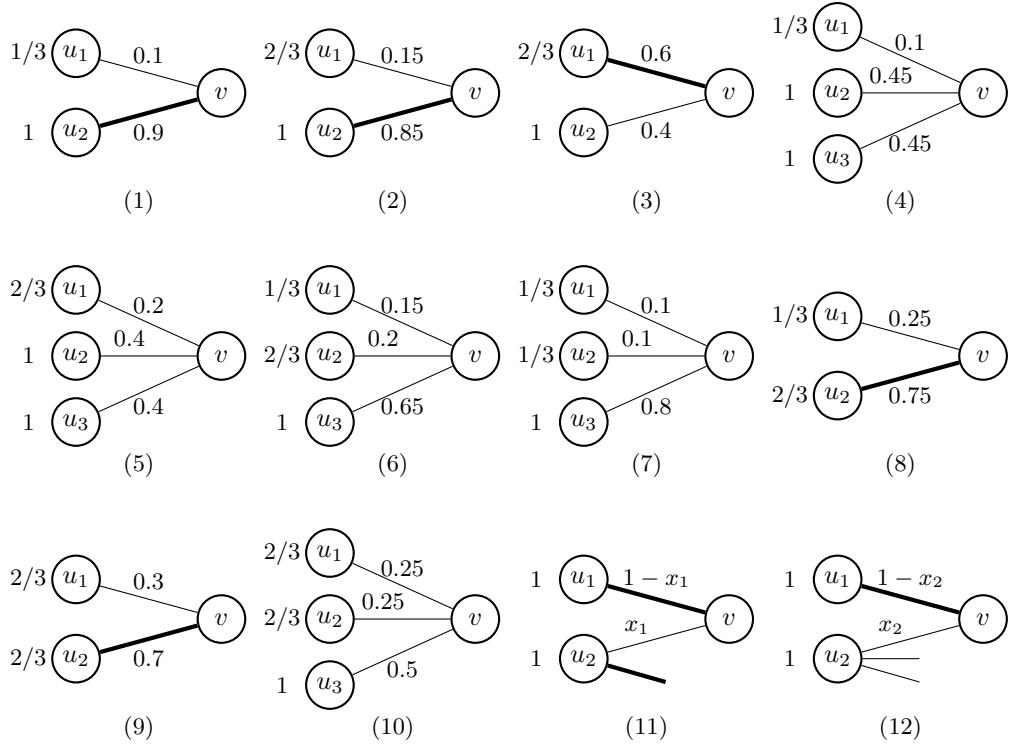
\begin{figure}[!h]
	\centering
	\begin{tikzpicture}
[
	xscale=0.8,yscale=1,auto,thick,font=\footnotesize,
  	gray node/.style={circle,
  		inner sep=0pt,minimum size=14pt, 
  		fill=black!20,draw,font=\small},  
  	white node/.style={circle,
  		inner sep=0pt,minimum size=14pt, 
  		fill=white,draw,font=\small},
  node distance=24pt 
]


	\node [white node] (1v) at (0,0) {$v$};
	\node [white node] (1u1) [left of=1v, xshift=-25, yshift=15] {$u_1$};
	\node [white node] (1u2) [left of=1v, xshift=-25, yshift=-15] {$u_2$};
	\draw[thin] (1u1)  -- (1v) node [midway, above] {$0.1$};
	\draw[ultra thick] (1u2)  -- (1v) node [midway, below] {$0.9$};
	\node (1fu1) [left of=1u1, xshift=8, yshift=0] {$1/3$};
	\node (1fu2) [left of=1u2, xshift=12, yshift=0] {$1$};
	\node [below left of=1v, xshift=-10, yshift=-20] {(1)};


	\node [white node] (2v) at (3.75,0) {$v$};
	\node [white node] (2u1) [left of=2v, xshift=-25, yshift=15] {$u_1$};
	\node [white node] (2u2) [left of=2v, xshift=-25, yshift=-15] {$u_2$};
	\draw[thin] (2u1)  -- (2v) node [midway, above] {$0.15$};
	\draw[ultra thick] (2u2)  -- (2v) node [midway, below] {$0.85$};
	\node (2fu1) [left of=2u1, xshift=8, yshift=0] {$2/3$};
	\node (2fu2) [left of=2u2, xshift=12, yshift=0] {$1$};
	\node [below left of=2v, xshift=-10, yshift=-20] {(2)};


	\node [white node] (3v) at (7.5,0) {$v$};
	\node [white node] (3u1) [left of=3v, xshift=-25, yshift=15] {$u_1$};
	\node [white node] (3u2) [left of=3v, xshift=-25, yshift=-15] {$u_2$};
	\draw[ultra thick] (3u1)  -- (3v) node [midway, above] {$0.6$};
	\draw[thin] (3u2)  -- (3v) node [midway, below] {$0.4$};
	\node (3fu1) [left of=3u1, xshift=8, yshift=0] {$2/3$};
	\node (3fu2) [left of=3u2, xshift=12, yshift=0] {$1$};
	\node [below left of=3v, xshift=-10, yshift=-20] {(3)};


	\node [white node] (4v) at (11.25,0) {$v$};
	\node [white node] (4u1) [left of=4v, xshift=-25, yshift=25] {$u_1$};
	\node [white node] (4u2) [left of=4v, xshift=-25, yshift=0] {$u_2$};
	\node [white node] (4u3) [left of=4v, xshift=-25, yshift=-25] {$u_3$};
	\draw[thin] (4u1)  -- (4v) node [midway, above] {$0.1$};
	\draw[thin] (4u2)  -- (4v) node [above, pos=0.3] {$0.45$};
	\draw[thin] (4u3)  -- (4v) node [midway, below] {$0.45$};
	\node (4fu1) [left of=4u1, xshift=8, yshift=0] {$1/3$};
	\node (4fu2) [left of=4u2, xshift=12, yshift=0] {$1$};
	\node (4fu3) [left of=4u3, xshift=12, yshift=0] {$1$};
	\node [below left of=4v, xshift=-10, yshift=-20] {(4)};


	\node [white node] (5v) at (0,-3.5) {$v$};
	\node [white node] (5u1) [left of=5v, xshift=-25, yshift=25] {$u_1$};
	\node [white node] (5u2) [left of=5v, xshift=-25, yshift=0] {$u_2$};
	\node [white node] (5u3) [left of=5v, xshift=-25, yshift=-25] {$u_3$};
	\draw[thin] (5u1)  -- (5v) node [midway, above] {$0.2$};
	\draw[thin] (5u2)  -- (5v) node [above, pos=0.3] {$0.4$};
	\draw[thin] (5u3)  -- (5v) node [midway, below] {$0.4$};
	\node (5fu1) [left of=5u1, xshift=8, yshift=0] {$2/3$};
	\node (5fu2) [left of=5u2, xshift=12, yshift=0] {$1$};
	\node (5fu3) [left of=5u3, xshift=12, yshift=0] {$1$};
	\node [below left of=5v, xshift=-10, yshift=-20] {(5)};


	\node [white node] (6v) at (3.75,-3.5) {$v$};
	\node [white node] (6u1) [left of=6v, xshift=-25, yshift=25] {$u_1$};
	\node [white node] (6u2) [left of=6v, xshift=-25, yshift=0] {$u_2$};
	\node [white node] (6u3) [left of=6v, xshift=-25, yshift=-25] {$u_3$};
	\draw[thin] (6u1)  -- (6v) node [midway, above] {$0.15$};
	\draw[thin] (6u2)  -- (6v) node [above, pos=0.3] {$0.2$};
	\draw[thin] (6u3)  -- (6v) node [midway, below] {$0.65$};
	\node (6fu1) [left of=6u1, xshift=8, yshift=0] {$1/3$};
	\node (6fu2) [left of=6u2, xshift=8, yshift=0] {$2/3$};
	\node (6fu3) [left of=6u3, xshift=12, yshift=0] {$1$};
	\node [below left of=6v, xshift=-10, yshift=-20] {(6)};


	\node [white node] (7v) at (7.5,-3.5) {$v$};
	\node [white node] (7u1) [left of=7v, xshift=-25, yshift=25] {$u_1$};
	\node [white node] (7u2) [left of=7v, xshift=-25, yshift=0] {$u_2$};
	\node [white node] (7u3) [left of=7v, xshift=-25, yshift=-25] {$u_3$};
	\draw[thin] (7u1)  -- (7v) node [midway, above] {$0.1$};
	\draw[thin] (7u2)  -- (7v) node [above, pos=0.3] {$0.1$};
	\draw[thin] (7u3)  -- (7v) node [midway, below] {$0.8$};
	\node (7fu1) [left of=7u1, xshift=8, yshift=0] {$1/3$};
	\node (7fu2) [left of=7u2, xshift=8, yshift=0] {$1/3$};
	\node (7fu3) [left of=7u3, xshift=12, yshift=0] {$1$};
	\node [below left of=7v, xshift=-10, yshift=-20] {(7)};


	\node [white node] (8v) at (11.25,-3.5) {$v$};
	\node [white node] (8u1) [left of=8v, xshift=-25, yshift=15] {$u_1$};
	\node [white node] (8u2) [left of=8v, xshift=-25, yshift=-15] {$u_2$};
	\draw[thin] (8u1)  -- (8v) node [midway, above] {$0.25$};
	\draw[ultra thick] (8u2)  -- (8v) node [midway, below] {$0.75$};
	\node (8fu1) [left of=8u1, xshift=8, yshift=0] {$1/3$};
	\node (8fu2) [left of=8u2, xshift=8, yshift=0] {$2/3$};
	\node [below left of=8v, xshift=-10, yshift=-20] {(8)};


	\node [white node] (9v) at (0,-7) {$v$};
	\node [white node] (9u1) [left of=9v, xshift=-25, yshift=15] {$u_1$};
	\node [white node] (9u2) [left of=9v, xshift=-25, yshift=-15] {$u_2$};
	\draw[thin] (9u1)  -- (9v) node [midway, above] {$0.3$};
	\draw[ultra thick] (9u2)  -- (9v) node [midway, below] {$0.7$};
	\node (9fu1) [left of=9u1, xshift=8, yshift=0] {$2/3$};
	\node (9fu2) [left of=9u2, xshift=8, yshift=0] {$2/3$};
	\node [below left of=9v, xshift=-10, yshift=-20] {(9)};


	\node [white node] (10v) at (3.75,-7) {$v$};
	\node [white node] (10u1) [left of=10v, xshift=-25, yshift=25] {$u_1$};
	\node [white node] (10u2) [left of=10v, xshift=-25, yshift=0] {$u_2$};
	\node [white node] (10u3) [left of=10v, xshift=-25, yshift=-25] {$u_3$};
	\draw[thin] (10u1)  -- (10v) node [midway, above] {$0.25$};
	\draw[thin] (10u2)  -- (10v) node [above, pos=0.3] {$0.25$};
	\draw[thin] (10u3)  -- (10v) node [midway, below] {$0.5$};
	\node (10fu1) [left of=10u1, xshift=8, yshift=0] {$2/3$};
	\node (10fu2) [left of=10u2, xshift=8, yshift=0] {$2/3$};
	\node (10fu3) [left of=10u3, xshift=12, yshift=0] {$1$};
	\node [below left of=10v, xshift=-10, yshift=-20] {(10)};


	\node [white node] (11v) at (7.5,-7) {$v$};
	\node [white node] (11u1) [left of=11v, xshift=-25, yshift=15] {$u_1$};
	\node [white node] (11u2) [left of=11v, xshift=-25, yshift=-15] {$u_2$};
	\node (11v1) [right of=11u2, xshift=5, yshift=-10] {};
	\draw[ultra thick] (11u1)  -- (11v) node [midway, above] {$1-x_1$};
	\draw[thin] (11u2)  -- (11v) node [above, pos=0.3] {$x_1$};
	\draw[ultra thick] (11u2)  -- (11v1);
	\node (11fu1) [left of=11u1, xshift=12, yshift=0] {$1$};
	\node (11fu2) [left of=11u2, xshift=12, yshift=0] {$1$};
	\node [below left of=11v, xshift=-10, yshift=-20] {(11)};


	\node [white node] (12v) at (11.25,-7) {$v$};
	\node [white node] (12u1) [left of=12v, xshift=-25, yshift=15] {$u_1$};
	\node [white node] (12u2) [left of=12v, xshift=-25, yshift=-15] {$u_2$};
	\node (12v1) [right of=12u2, xshift=5, yshift=0] {};
	\node (12v2) [right of=12u2, xshift=5, yshift=-10] {};
	\draw[ultra thick] (12u1)  -- (12v) node [midway, above] {$1-x_2$};
	\draw[thin] (12u2)  -- (12v) node [above, pos=0.3] {$x_2$};
	\draw[thin] (12u2)  -- (12v1);
	\draw[thin] (12u2)  -- (12v2);
	\node (12fu1) [left of=12u1, xshift=12, yshift=0] {$1$};
	\node (12fu2) [left of=12u2, xshift=12, yshift=0] {$1$};
	\node [below left of=12v, xshift=-10, yshift=-20] {(12)};

\end{tikzpicture}
	\caption{Illustration for second modification to $\HH$. The value assigned to each edge represents the value after the second modification. Here, $x_1 = 0.2744$ and $x_2 = 0.15877$.}\label{fig:secondModification}
\end{figure}

\subsection{Vertex-Weighted Algorithm $\VW$}\label{sec:vw_last}
\subsubsection{Analysis of algorithm $\VW$.}
The full details of our vertex-weighted algorithm are stated as follows. 
\begin{algorithm}[h!]
	\caption{ \protect{$\VW$} }\label{alg:vertexweighted}
	\DontPrintSemicolon
	Construct and solve the LP in sub-section \ref{lp:stoch-match} for the input instance. \;
	Invoke $\DR3$ to output $\FF$ and $\HH$. \;
	\myEdit{Apply the first modification shown in Algorithm~\ref{alg:CycleBreak} to transfer $\HH$ to $\widehat{\HH}$.\;
	Apply the second modification shown in Figure \ref{fig:secondModification} to morph $\widehat{\HH}$ to $\HH'$.\;}
	Run $\mathsf{RLA}[\HH']$.
\end{algorithm}

The algorithm $\VW$ consists of two different random processes: sub-routine $\DR3$ in the offline phase and  $\mathsf{RLA}$ in the online phase. Consequently, the analysis consists of two parts. First, for a given graph $G_{\HH}$, we analyze the ratio of $\mathsf{RLA}[\HH']$ for each node $u$ with $H_{u}=1/3, H_u=2/3$ and $H_{u}=1$. The analysis is similar to \cite{bib:Jaillet}. Second, we  
analyze the probability that $\DR3$ transforms each $u$, with fractional $f_{u}$ values, into the three discrete cases seen in the first part.  By combining the results from these two parts we get the final ratio.

Let us first analyze the competitive ratio for $\mathsf{RLA}[\HH']$.
For a given $\HH$ and $G({\HH})$, let $\mathsf{P}_{u}$ be the probability that $u$ gets matched in $\mathsf{RLA}[\HH']$. Notice that the value $\mathsf{P}_{u}$ is determined not just by the algorithm $\mathsf{RLA}$ itself, but also the modifications applied to $\HH$.  We define the competitive ratio of a vertex $u$ achieved by $\mathsf{RLA}$ as $\mathsf{P}_{u}/H_u$,  after modifications. Lemma \ref{lem:offline} gives the respective ratio values. The proof can be found in section \ref{apx:offline-vw} in the Appendix.

\begin{lemma}
	\label{lem:offline}
	Consider a given $\HH$ and a vertex $u$.
	The respective ratios achieved by $\mathsf{RLA}$ after the modifications are as described below. 
	
	\begin{itemize}
		
		\item  If $H_{u}=1$, then the competitive ratio $\RR{\RLA}{1} =  1-2\euler^{-2}\sim 0.72933$ if $u$ is in the first cycle $C_{1}$ and $\RR{\RLA}{1} \ge 0.735622$ otherwise.
		
		\item If $H_{u}=2/3$, then the competitive ratio $\RR{\RLA}{2/3} \ge 0.7847$.
		
		\item  If $H_{u}=1/3$, then  competitive ratio $\RR{\RLA}{1/3} \ge 0.7622$.

	\end{itemize}
\end{lemma}

Now we have all ingredients to state and prove Theorem \ref{thm:vertex}.
\begin{theorem}
\label{thm:vertex}
For vertex-weighted online stochastic matching with integral arrival rates,  online algorithm $\mathsf{VW}$ achieves a competitive ratio of at least $0.7299$.
\end{theorem}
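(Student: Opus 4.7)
The plan is to build the algorithm $\VW$ in three stages and then analyze it offline-vertex by offline-vertex. First I would solve the vertex-weighted LP from Section \ref{sec:prelim} (with objective $\sum_u \sum_{e \in \pa(u)} f_e w_u$) to obtain an optimal fractional solution $\ff$; crucially the LP already carries the strong constraints \ref{cons:single} and \ref{cons:pair}. Second, I would apply the dependent rounding scheme $\DR{3}$ from sub-section \ref{subsec:drk} to produce a random sparse vector $\FF$ whose coordinates lie in $\{0, 1/3, 2/3\}$. The key properties of this rounding are that $\E[F_e] = f_e$, that the matching constraints \ref{cons:umatch} and \ref{cons:vmatch} are preserved pointwise, and that constraint \ref{cons:pair} is preserved exactly in expectation and up to the accuracy needed with high probability. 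Third, I would deterministically reroute the support of $\FF$ to delete every length-$4$ cycle that is \emph{not} of type $C_1$, without introducing any new $C_1$, and then run a Jaillet--Lu style Random List procedure on the resulting sparse structure.

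The ratio proof aims to show that for every $u \in U$,
\[
\Pr[u \text{ is matched by } \VW] \;\ge\; 0.7299 \, f_u,
\]
from which summing against $w_u$ and using LP-optimality gives $\E[\ALG] \ge 0.7299 \cdot \E[\OPT]$. The per-vertex probability splits by the local structure of $\FF$ around $u$: an isolated edge of value $1/3$ or $2/3$, a degree-$2$ neighbor lying on no cycle, a length-$4$ cycle not of type $C_1$ (which by construction is absent), and finally a length-$4$ cycle of type $C_1$. In every non-$C_1$ case a direct Random-List calculation exploiting the integral arrival rate $r_v=1$ yields a per-vertex ratio strictly above $0.73$.

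The bottleneck, as in \cite{bib:Jaillet}, is the case $f_u = 1$ with $u$ sitting on a $C_1$-cycle; this is the only local structure whose worst-case match probability equals the Jaillet--Lu barrier $1 - 2/e^2$. The main obstacle is thus to beat this barrier, and my plan is to invoke Lemma \ref{lem:fu=1}: using constraint \ref{cons:pair} (which is available exactly because of the stronger LP and because $\DR{3}$ preserves it), the lemma bounds the probability that $u$ actually lands on a $C_1$-cycle by $2 - 3/e \approx 0.89$, rather than the adversarial $1$. The remaining $3/e - 1$ probability mass places $u$ in a strictly better local configuration (two non-cycle edges with $f$-values summing to $1$), whose match probability exceeds $1 - 2/e^2$. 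Mixing the two contributions gives the improved per-vertex guarantee, and the final step reduces to a small explicit numerical minimization over the admissible local configurations of $u$, whose optimum I expect to verify is at least $0.7299$, strictly above $1-2/e^2$.
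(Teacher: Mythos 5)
Your high-level outline is close to the paper's: solve the stronger LP with constraints~\ref{cons:single} and~\ref{cons:pair}, apply $\DR{3}$ to get a $\{0,1/3,2/3\}$-valued vector, deterministically break the non-$C_1$ four-cycles, run a Jaillet--Lu Random List procedure, and invoke Lemma~\ref{lem:fu=1} to cap the probability of the $C_1$ bottleneck at $2-3/e$. However, there is a genuine gap: you have entirely omitted the \emph{second modification} to $\HH$ (Section~\ref{apx:second-modification}, Figure~\ref{fig:secondModification}), and without it the claimed per-vertex bounds in the non-$C_1$ cases are false.

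You assert that ``in every non-$C_1$ case a direct Random-List calculation yields a per-vertex ratio strictly above $0.73$.'' This is not true for unmodified $\RLA$ on the unmodified $\HH$. The paper's own illustrative computation (Figure~\ref{fig:AttenuationExample} and the surrounding text) exhibits an $H_u=1$ vertex that is \emph{not} in any $C_1$ cycle with $\mathsf{P}_u = 1 - \frac{20}{9e^2} \approx 0.699$, and the Jaillet--Lu vertex-weighted worst case (Figure~\ref{fig:JailletWS}, left) sits at $0.725$ --- both strictly below $1-2e^{-2}$. So the mixture you compute, $(2-3/e)\cdot(1-2e^{-2}) + (3/e-1)\cdot\text{(non-$C_1$ bound)}$, would not clear $0.7299$ unless the non-$C_1$ bound is pushed above roughly $0.7356$, and this requires a further rebalancing step. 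The second modification does exactly this: it tilts the list probabilities toward worse-off $H_u=1$ vertices (at the controlled expense of $H_u=1/3$ and $H_u=2/3$ vertices, which have slack) so that Lemma~\ref{lem:offline} can certify $\RR{\RLA}{1} \ge 0.735622$ outside $C_1$. Only with that does the convex combination over $q_1 = 2-3/e$, $q_2 = 3/e - 1$ and the $H_u=2/3$ mass $q_3$ land at $0.729982$. You also implicitly assume $H_u=1$ after rounding; for $2/3 \le f_u < 1$ the outcome $H_u=2/3$ occurs with positive probability and must be carried as a third term in the minimization, as the paper does with $q_3$ (harmless numerically since $\RR{\RLA}{2/3} \ge 0.7847$, but it needs to appear for the argument to close). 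Finally, a minor point: the paper does not need constraint~\ref{cons:pair} to be ``preserved'' by the rounding --- Lemma~\ref{lem:fu=1} applies it to the pre-rounding $\ff$ to bound the marginal distribution of $\DR{3}$ directly, which is cleaner than the approximate-preservation framing you give.
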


\begin{proof}
	From Lemmas \ref{lem:fu=1} and \ref{lem:alg-cycle}, we know that any $u$ is present in cycle $C_{1}$ with probability at most $(2-\frac{3}{\euler})$. 
	
	Consider a node $u$ with $2/3 \le f_{u} \le 1$ and let $q_{1}, q_{2}, q_{3}$ be the probability that after $\DR{3}$ and the first modification, $H_{u}=1$ and $u$ is in the first cycle $C_{1}$, $H_{u}=1$ and $u$ is not in $C_{1}$, $H_{u}=2/3$ respectively. \myEdit{ From the marginal distribution of $\DR{3}$, we have that $q_1+q_2+q_3(2/3)=\E[\FF_u]/3=3f_u/3=f_u$. From Lemma \ref{lem:offline}, we get that the final ratio for $u$ is
	
	\begin{align*}
	\frac{1}{f_u}\Pr[\mbox{$u$ is matched}]&=\frac{1}{f_u} \Big(
	q_1\Pr[\mbox{$u$ is matched} | H_u=1, u \in C_1]  \\
	&+ q_2\Pr[\mbox{$u$ is matched} | H_u=1, u \notin C_1] \\
	&+q_3\Pr[\mbox{$u$ is matched} | H_u=2/3, u \in C_1 ]\Big) \\
	&\ge \frac{ 0.72933q_1+ 0.735622q_2 +  (2/3) *0.7847q_3 }{q_{1}+q_{2}+ (2/3)q_3}
	\end{align*}
	}
	
	Minimizing the above expression subject to (1) $q_{1}+q_{2}+q_{3}=1$; (2) $0 \le q_{i}, 1 \le i \le 3$; (3) $q_{1} \le 2-\frac{3}{\euler}$, we get a minimum value of $0.729982$ for $q_{1}=2-\frac{3}{\euler}$ and $q_{2}=\frac{3}{\euler}-1$.
	
	For any node $u$ with $0 \le u \le 2/3$, we know that the ratio is at least the min value of $\RR{\RLA}{2/3}$ and $\RR{\RLA}{1/3}$, which is $0.7622$.
	This completes the proof of Theorem \ref{thm:vertex}.
\end{proof}

\section{Non-integral arrival rates with stochastic rewards.}
\label{sec:nonint}

\newcommand{\y}{\vec{y}}

The setting here is strictly generalized over the previous sections in the following ways. Firstly, it allows an arbitrary arrival rate (say $r_v$) which can be fractional for each stochastic vertex $v$. Notice that, $\sum_{v} r_{v}=n$ where $n$ is the total number of rounds. 
Secondly, each $e=(v,u) \in E$ is associated with a value $p_{e}$, which captures the probability that the edge $e =(u, v)$ is present when we \emph{probe} it. We assume this process is independent of the stochastic arrival of each $v$. We will show that the simple non-adaptive algorithm introduced in~\cite{bib:Haeupler} can be extended to this general case. This achieves a competitive ratio of $(1- \frac{1}{\euler})$. Note that Manshadi \emph{et al.}~\cite{bib:Manshadi} show that no non-adaptive algorithm can possibly achieve a ratio better than $(1-1/\euler)$ for the non-integral arrival rates, even for the case of all $p_{e}=1$. Thus, our algorithm is an optimal non-adaptive algorithm for this model.  \\

We use an LP similar to \cite{bib:Jaillet} for the case of non-integral arrival rates. For each $e=(u,v) \in E$, let $f_{e}$ be the expected number of probes on edge $e$. When there are multiple copies of $v$, we count the sum of probes among all copies of $e$ in the offline optimal matching and thus some realizations of $f_e$ can be greater than $1$.  Consider the below LP:

\begin{eqnarray}
	\label{lp2:stoch-match}
	\max && \sum_{e \in E} w_{e} f_e p_e \\
	\mbox{s.t.} &&
	\sum_{e \in \partial(u)} f_e p_{e} \leq 1 \qquad \forall u \in U \label{con:lp2-1} \\
	&& \sum_{e \in \partial(v)} f_e \leq r_{v} \qquad \forall v \in V  \label{con:lp2-2} 
	\\
	&& 0 \le f_e  \qquad \forall e \in E \label{con:lp2-3}
\end{eqnarray}

Similar to Lemma~\ref{label:LPUpperbound}, we have the below lemma.
\begin{lemma}
	\label{lem:non-integ}
	Let $\OPT$ denote the expected weight obtained by an offline optimal algorithm. Let $\vec{f}^*$ denote the optimal solution to the above $\LP$. Then $ \sum_{e \in E} w_e f_e^* p_e \geq \mathbb{E}[\OPT]$.
	\end{lemma}
\begin{proof}
For each edge $e$, let $Y_e$ indicate if $e$ is probed (not necessarily matched) in an offline optimal algorithm after observing the full arrival sequence $\cA$. Let $y_e \doteq \E_\mathcal{A}[Y_e]$ for every edge $e \in E$. Note that $\E[\OPT]=\sum_{e \in E}  w_e y_e p_e$. Now we show that $\y \doteq (y_e)_{e\in E}$ is feasible solution to  \LP~\eqref{lp2:stoch-match}.

Consider a given $u$. Let $Z_e$ indicate if $e$ is present when probed with mean $p_e$. Observe that $\sum_{e \in \pa(u)} Y_e Z_e$ indicate if $u$ is matched in $\OPT$. For any given realization of $\cA$, we have $\sum_{e \in \pa(u)} Y_e Z_e \le 1$ since $u$ can be matched at most once. Thus, by linearity of expectation, we have $\E[\sum_{e \in \pa(u)} Y_e Z_e \le 1] \le 1$, which implies that $\sum_{e \in \pa(u)} y_e p_e \le 1$. Thus, Constraint~\eqref{con:lp2-1} is valid. 

Consider a given $v$. Let $R_v$ be the (random) number of copies in $\cA$. Observe that $\sum_{e \in \pa(v)} Y_e \le R_v$. By taking expectation over randomness of $\cA$ on both sides, we get $\E[\sum_{e \in\pa(v)} Y_e ] \le \E[R_v]=r_v$. Thus, Constraint~\eqref{con:lp2-2} is valid. 

Hence, we have that the expected performance of an offline optimal is upper bounded by the optimal value to \LP~\eqref{lp2:stoch-match}.
\end{proof}

Our algorithm is summarized in Algorithm~\ref{alg:non-integral}. Notice that Constraint~\eqref{con:lp2-2} ensures that Step \ref{alg:step2} is valid. \myEdit{For a given $v$, recall that $\partial(v)$ is the set of edges incident to $v$ in $E$}.

\begin{algorithm}[!h]
	\caption{$\mathsf{SM}$}\label{alg:non-integral}
	\DontPrintSemicolon
	Construct and solve \LP~\eqref{lp2:stoch-match}. WLOG assume 
	$\{f_{e}| e \in E\}$ is an optimal solution. \;
	
	\myEdit{When a vertex $v$ arrives, sample an edge $e=(u,v) \in \partial(v)$ with probability $\frac{f_e}{r_v}$. Assign $v$ to $u$ if $u$ is not matched}.\; \label{alg:step2}
\end{algorithm}

\begin{theorem}\label{thm:non-integral}
For edge-weighted online stochastic matching with arbitrary arrival rates and stochastic rewards, online algorithm $\mathsf{SM}$ \eqref{alg:non-integral} achieves a competitive ratio of $1-1/\euler$, which is optimal all among all non-adaptive algorithms. 
\end{theorem}

\begin{proof}
	Let $B(u,t)$ be the event that $u$ is safe (\myEdit{\ie $u$ is not matched}) at beginning of round $t$ and $A(u, t)$ be the event that vertex $u$ is matched during the round $t$ conditioned on $B(u,t)$. From the algorithm, we know 
	$\Pr[A(u,t)] \leq \sum\limits_{ \myEdit{e=(u,v)\in \partial(u)}} \frac{r_v}{n} \frac{f_e}{r_v} p_e \leq \frac{1}{n}$, which is followed by $\Pr[B(u,t)] = \Pr \left[ \bigwedge_{i=1}^{t-1} (\neg A(u, i)) \right]  \ge \left( 1- \frac{1}{n} \right)^{t-1}$.
	
	Consider a given edge $e=(u,v)$ in the graph. Notice that the probability that $e$ gets matched in $\mathsf{SM}$ should be
	\begin{eqnarray*}		
		\Pr[e \text{ is matched} ]  &=& \sum\limits_{t=1}^{n} \Pr[\text{$v$ arrives at $t$ and  $B(u,t)$ }] \cdot \frac{f_ep_e}{r_v}\\
		&\geq &\sum\limits_{t=1}^{n} \left( 1- \frac{1}{n} \right)^{t-1} \frac{r_v}{n} \frac{f_e p_e}{r_v}\geq \left( 1-\frac{1}{\euler} \right) f_ep_e 
	\end{eqnarray*}		
	Note that Manshadi \emph{et al.}~\cite{bib:Manshadi} show that no non-adaptive algorithm can possibly achieve a ratio better than $(1-1/\euler)$ for the non-integral arrival rates, even for the case of all $p_{e}=1$. Thus, our algorithm is an optimal non-adaptive algorithm for this model. 
\end{proof}




\section{Integral arrival rates with uniform stochastic rewards.}
\label{sec:uniform_p}

In this section, we consider a special case of the model studied in Section~\ref{sec:nonint} and show that we can indeed surpass the $1-1/\euler$ barrier. We specialize the model in the following two ways. (1) We consider the unweighted case with uniform constant edge probabilities (\emph{i.e.,} $w_e=1$ and $p_e=p$ for some constant $p \in (0, 1]$ for all $e \in E$). The constant $p$ is arbitrary, but independent of the problem parameters. (2) Each vertex $v$ that comes online has an integral arrival rate $r_v$ (as usual WLOG $r_v = 1$ and $|V|=n$). We refer to this special model as \emph{unweighted online stochastic matching with integral arrival rates and uniform stochastic rewards}. Note that even for this special case, given an offline instance (\emph{i.e.,} the sequence of realizations for the online arrival), it is unclear if we can efficiently solve or approximate the exact offline optimal within $(1-\epsilon)$ without any extra assumptions. Hence we cannot directly apply the Monte-Carlo simulation technique in~\cite{bib:Manshadi} to approximate the exact expected offline optimal within an arbitrary desired accuracy. Here we present a strengthened LP as the benchmark to upper bound the offline optimal.
\begin{align}
	\label{lp3:stoch-match}
	\max & ~~p \cdot \sum_{e \in E} f_e && \\
	\mbox{s.t.} &
	\sum_{e \in \partial(u)} f_e \cdot p \leq 1 &&\forall u \in U \\
	& \sum_{e \in \partial(v)} f_e \leq 1 &&\forall v \in V \\
	& \sum_{\myEdit{e \in S}} f_e p \le 1-\exp(-|S|p) &&\forall S \subseteq \partial(u), |S| \le 2/p \label{lp3_extra} \\
	& \myEdit{0\le f_e } && \forall e\in E 
\end{align}

\begin{lemma}
\LP~\eqref{lp3:stoch-match} is a valid upper bound for the expected offline optimal.
\end{lemma}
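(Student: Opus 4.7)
The plan is to exhibit, for the optimal offline strategy, a feasible solution $\ff$ of LP~\eqref{lp3:stoch-match} whose objective value equals $\E[\OPT]$. The natural candidate is to set $f_e = \E[N_e]$, where $N_e$ is the (random) number of times the optimal offline strategy probes edge $e$. Since each probe independently succeeds with probability $p$, the expected number of successful matches is $p\sum_e \E[N_e]$, which is exactly $p\sum_e f_e$, matching the LP objective.

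Verifying the two simpler families of constraints is routine. For the per-$v$ constraint, each arrival of $v$ triggers at most one probe, so $\sum_{e \in \partial(v)} N_e \le A_v$, where $A_v$ is the arrival count of $v$; taking expectations gives $\sum_{e \in \partial(v)} f_e \le \E[A_v] = r_v = 1$. For the per-$u$ constraint, the smart offline halts probing $\partial(u)$ the moment $u$ becomes matched, so the total number of successful probes on $\partial(u)$ is at most $1$; combined with the identity ``expected successes $= p\cdot$ expected probes'' (Wald applied to the stopping time $N_e$), this yields $p \sum_{e \in \partial(u)} f_e \le 1$.

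The main step is the new constraint~\eqref{lp3_extra}. Fix $u$ and $S \subseteq \partial(u)$ with $|S| = k$, and write $S = \{(u,v_1),\dots,(u,v_k)\}$. By the same halt-once-matched argument, $\sum_{e \in S} p f_e$ equals the expected number of successful probes on $S$, which is at most $\Pr[\text{some probe on } S \text{ succeeds}]$. The number of probes on $S$ is bounded by $T \doteq A_{v_1} + \cdots + A_{v_k}$, and conditional on $T$ the probability of no success is at least $(1-p)^T$. Since in each of the $n$ rounds one of $n$ equiprobable vertices arrives, $T \sim \mathrm{Binomial}(n, k/n)$, so $\E[(1-p)^T] = (1 - pk/n)^n \to e^{-pk}$. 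Under the paper's asymptotic convention this gives $\sum_{e \in S} p f_e \le 1 - e^{-|S|p}$, as required.

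The only delicate point is handling the adaptivity of $N_e$ (which may depend on past probe outcomes); this is resolved by treating $N_e$ as a stopping time with respect to the Bernoulli($p$) edge realizations and appealing to Wald's identity, together with the fact that the arrival counts $A_{v_i}$ are independent of all edge realizations. Observe also that the LP's $|S| \le 2/p$ restriction is imposed only for polynomial solvability---the underlying inequality holds for every $S \subseteq \partial(u)$, so validity of the relaxation is not affected.
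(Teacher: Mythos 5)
Your proof is correct and follows essentially the same route as the paper. The paper fixes the arrival sequence $I$, lets $X(I)$ be the number of arrivals of online endpoints of $S$ (your $T$), bounds the conditional probability that some edge of $S$ is matched by $1-(1-p)^{X(I)}$, takes expectations, and uses the Poisson limit $\E[(1-p)^{X(I)}] \to e^{-p|S|}$; your Binomial$(n,k/n)$ computation is just the explicit form of that same limit, and your use of Wald's identity makes explicit an adaptivity point the paper treats implicitly (delegating the per-$u$ and per-$v$ constraints to Section~\ref{sec:nonint}).
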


\myEdit{\begin{proof}
It suffices to show that constraint \eqref{lp3_extra} is valid (the correctness of the other constraints follows from the previous section). Let $f_e$ represent the expected number of probes on edge $e$ in an offline optimal algorithm (denoted by $\OPT$). Consider a given $S \subseteq \partial(u)$ and let $X_S \in \{0,1\}^{|S|}$ be the indicators for edges in $S$ to be matched in $\OPT$. By definition we have $\E[X_S]=\sum_{e \in S} f_e \cdot p$. Let $Y_S$ be the (random) number of arrivals of all vertices incident to edges in $S$ during the online phase. Observe that $\E[X_S| Y_S] \le 1-(1-p)^{Y_S}$. Thus, we have,
$$\E[X_S]=\mathbb{E}_{Y}[\E[X_S| Y_S]] \le {\E}_{Y_S}[1-(1-p)^{Y_S}].$$

Note that for any constant size $|S| \le 2/p$, $Y_S$ follows a Poisson distribution with mean $|S|$ (since we assume that the total number of online rounds $n$ is sufficiently large). Therefore, we have
\begin{align*}
\E[X_S] & \le  {\E}_{Y_S}\Big[1-(1-p)^{Y_S}\Big] =1-{\E}_{Y_S}[(1-p)^{Y_S}]\\ 
&=1-\exp(-|S|)\sum_{k=0}^{\infty} \frac{|S|^k}{k!} (1-p)^k =1- 
\exp(-|S|)\sum_{k=0}^{\infty} \frac{\Big(|S| (1-p) \Big)^k}{k!}\\
&=1-\exp\Big( -|S|+|S|(1-p)\Big)=1-\exp(-p|S|)
\end{align*}

 Therefore we show that $\vec{f}$ is feasible to constraint \eqref{lp3_extra}.
\end{proof}}

Note that it is impossible to beat $1-1/\euler$ using LP~\eqref{lp3:stoch-match} as the benchmark without the extra constraint \eqref{lp3_extra} (see the hardness instance shown in \cite{brubach2017}). Our main idea in the online phase is based on \cite{bib:Manshadi}. In the offline phase, we first solve LP~\eqref{lp3:stoch-match} and get an optimal solution $\{f^*_e\}$. When a vertex $v$ arrives, we generate a random list of two choices based on $\{f_e^*|e\in \partial(v)\}$, denoted by $\cL_v=(\cL_v(1), \cL_v(2))$, where $\cL_v(1), \cL(2) \in \partial(v)$. Our online decision based on $\cL_v$ is as follows: if $\cL_v(1)=(u,v)$ is safe, i.e., $u$ is available, then match $v$ to $u$; else if the second choice $\cL_v(2)$ is safe match $v$ to $\cL_v(2)$. The random list $\cL_v$ generated based on $\{f_e^*|e\in \partial(v)\}$ satisfies the following two properties:

\begin{description}
\item [\textbf{(P1)}:] $\Pr[\cL_v(1)=e]=f_e^*$  and $\Pr[\cL_v(2)=e]=f_e^*$ for each $e \in \partial(v)$.
	\item	[\textbf{(P2)}:] $\Pr[\cL_v(1)=e \wedge \cL_v(2)= e] =\max\Big(2f_e-1, 0 \Big)$ for each $e \in \partial(v)$.
\end{description}

\begin{algorithm}[!h]
	\caption{}\label{alg:uniform_p}
	\DontPrintSemicolon
	Solve LP~\eqref{lp3:stoch-match} and let 
	$\{f^*_{e}| e \in E\}$ be an optimal solution. \;
When a vertex $v$ arrives, generate a random list $\cL_v$ of two choices based on $\{f_e^*|e\in \partial(v)\}$ such that $\cL_v$ satisfies Property \textbf{(P1)} and \textbf{(P2)}.\;
If $\cL_v(1)=(u,v)$ is safe, i.e., $u$ is available, then assign $v$ to $u$; else if the second choice $\cL_v(2)$ is safe, match $v$ to it.
\end{algorithm}

There are several ways to generate $\cL_v$ satisfying \textbf{(P1)} and  \textbf{(P2)}. One simple way is shown in Section 4 of~\cite{bib:Manshadi}. Another simple way of obtaining $\cL_v$ as required is by running $\mathsf{DR}[\mathbf{f}^*,2]$ and randomly permuting the two obtained matchings. We can verify that all of the calculations shown in \cite{bib:Manshadi} can be extended here if we incorporate the independent process that each $e$ will be present with probability $p$ after we assign $v$ to $u$. Hence, the final ratio is as follows (this can be viewed as a counterpart to Equation (15) on page 11 of~\cite{bib:Manshadi}).
\begin{multline}\label{eqn:sec6-1}
\frac{\E[\ALG]}{\E[\OPT]} \ge \\ \min_{u \in U} \left( \frac{(1-\euler^{-f'_u})+q'_u \euler^{-2}-(q'_u)^2 \euler^{-1}\Big(\frac{1}{2}-\euler^{-1}\Big)-\euler^{-2}f'_u(1-f'_u)}{f'_u} \right) \doteq F(f'_u, q'_u)
\end{multline}
 
 where $f'_u=\sum_{e \in \partial(u)}f^*_e \cdot p \le 1$ and $q'_u=p \cdot \Big(\sum_{e=(u,v) \in \pa(u)} \Pr[\cL_v(2)=e \wedge \cL_v(1) \neq e] \Big)$. Observe that 
 \[
 q'_u \le p \cdot \Big(\sum_{e=(u,v) \in \pa(u)} \Pr[\cL_v(2)=e]\Big)=p \cdot \Big(\sum_{e \in \pa(u)}f^*_e\Big) = f'_u \le 1 
 \]
We can verify that for each given $f'_u \le 1$, the RHS expression in inequality \eqref{eqn:sec6-1} is an increasing function of $q'_u$ during the interval $[0,1]$. Thus an important step is to lower bound $q'_u$ for a given $f'_u$. The following key lemma can be viewed as a counterpart to Lemma 4.7 of \cite{bib:Manshadi}:
 
 \begin{lemma}\label{lem:6-1}
 For each given $f'_u \ge  \ln 2/2$, we have that $q'_u \ge f'_u-(1-\ln 2)$.
 \end{lemma}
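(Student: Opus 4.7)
The plan is to make $q'_u$ explicit using properties \textbf{(P1)} and \textbf{(P2)}, reduce the inequality to a single ``surplus'' bound, and then invoke the strengthened LP constraint \eqref{lp3_extra}.

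First I would unpack $q'_u$ edge by edge. For each $e = (u,v) \in \pa(u)$,
\[
\Pr[\cL_v(2)=e \wedge \cL_v(1) \neq e] = \Pr[\cL_v(2)=e] - \Pr[\cL_v(1)=\cL_v(2)=e] = f_e^{*} - \max(2f_e^{*}-1, 0) = \min(f_e^{*}, 1-f_e^{*}).
\]
Summing over $e \in \pa(u)$ and subtracting from $f'_u = p\sum_{e \in \pa(u)} f_e^{*}$, the target inequality $q'_u \ge f'_u - (1-\ln 2)$ becomes
\[
p \sum_{e \in S}(2f_e^{*}-1) \le 1-\ln 2, \qquad \text{where } S := \{e \in \pa(u) : f_e^{*} > 1/2\}.
\]
So the whole proof reduces to upper-bounding this surplus.

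I would then split on $|S|$. When $|S| \le 2/p$, constraint \eqref{lp3_extra} applies and yields $p\sum_{e \in S} f_e^{*} \le 1-e^{-p|S|}$, so
\[
p \sum_{e \in S}(2f_e^{*}-1) \le 2(1-e^{-p|S|}) - p|S| \doteq g(p|S|).
\]
A one-variable analysis ($g'(x)=2e^{-x}-1$, concave, $g(0)=0$) shows $g$ attains its global maximum on $[0,\infty)$ at $x=\ln 2$, with $g(\ln 2) = 1-\ln 2$. When $|S| > 2/p$, I cannot invoke \eqref{lp3_extra}, but the ordinary matching constraint $p\sum_{e \in \pa(u)} f_e^{*} \le 1$ already suffices: $p \sum_{e \in S}(2f_e^{*}-1) \le 2 - p|S| < 0 \le 1 - \ln 2$. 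Combining the two cases gives the required bound.

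The only subtlety is choosing the right constraint in each regime: constraint \eqref{lp3_extra} is tight exactly at $p|S|=\ln 2$, which is also where $g$ is maximized, so the strengthened constraint is both necessary and sufficient in the regime $|S|\le 2/p$, while outside that regime the basic matching constraint trivially does the job. The hypothesis $f'_u \ge \ln 2/2$ is not actually used in this direction of the proof (for smaller $f'_u$ the trivial $q'_u \ge 0$ already beats $f'_u - (1-\ln 2)$); it is presumably recorded because Lemma~\ref{lem:6-1} is only invoked in that regime when bounding $F(f'_u, q'_u)$ in \eqref{eqn:sec6-1}.
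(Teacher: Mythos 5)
Your proposal is correct and follows essentially the same route as the paper: unpack $q'_u$ via \textbf{(P1)}/\textbf{(P2)}, reduce to bounding $p\sum_{e\in S}(2f_e^*-1)$, invoke constraint \eqref{lp3_extra}, and maximize $g(x)=2(1-e^{-x})-x$ at $x=\ln 2$. Your two refinements are both sound but immaterial in practice: the case $|S|>2/p$ is vacuous (since $p|S|/2\le p\sum_{e\in S}f_e^*\le 1$ forces $|S|\le 2/p$, which is exactly how the paper rules it out), and you are right that the hypothesis $f'_u\ge\ln 2/2$ is never used since $g$ is bounded above by $1-\ln 2$ unconditionally.
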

 
 \begin{proof}
 Consider a given $u$ with $f'_u \ge \ln2/2$. Define $\Delta=f'_u-q'_u$. Thus we have the following.
 \begin{align*}
 &\Delta&\\
 &=p \cdot \sum_{e=(u,v) \in \pa(u)}\Big( f^*_e-\Pr[\cL_v(2)=e \wedge \cL_v(1) \neq e] \Big) &\\
 &=p \cdot \sum_{e=(u,v) \in \pa(u)}\Big(\Pr[\cL_v(2)=e]- \Pr[\cL_v(2)=e \wedge \cL_v(1) \neq e] \Big) & \mbox{ From \textbf{P1}} \\
&=p \cdot \sum_{e=(u,v) \in \pa(u)}\Big( \Pr[\cL_v(2)=e \wedge \cL_v(1)= e] \Big) 
& \\
&=p \cdot \sum_{e\in \pa(u)}\max \Big( 2f^*_e-1,0 \Big) 
&\mbox{ From \textbf{P2}} 
 \end{align*}
 
 Thus to lower bound $q'_u$, we essentially need to maximize $\Delta$. Let $S^* \subseteq \pa(u)$ be the set of edges in $\pa(u)$ with $f^*_e \ge 1/2$, which is called a \emph{contributing} edge. 
  Thus we have 
\begin{equation}
\label{eqn:6-1}
\Delta=p \cdot \sum_{e\in \pa(u)}\max \Big( 2f^*_e-1,0 \Big) =p \cdot \sum_{e\in S^*}(2f^*_e-1)=\sum_{e\in S^*} 2p f^*_e-p |S^*|
\end{equation} 
 
 Observe that  
\begin{equation}  \label{eqn:6-2}
 \frac{p}{2} |S^*| \le \sum_{e \in S^*} f^*_e \cdot p \le f'_u \Rightarrow |S^*| \le \frac{2 f'_u}{p} \le \frac{2}{p}
 \end{equation} 
From Constraint \eqref{lp3_extra}, we have $\sum_{e\in S^*} (p f^*_e) \le 1-\exp(-|S^*| p)$. Substituting this inequality back into Equation \eqref{eqn:6-1}, we get 
$$\Delta \le 2-2\exp \big(-|S^*| \cdot p\big)-|S^*| \cdot p$$
It is easy to verify that when $f'_u \ge  \ln 2/2$, the above expression has a maximum value of $1-\ln 2$ when $|S^*| \cdot p=\ln 2$. Thus we have that $\Delta \le 1-\ln2$ and $q'_u \ge f'_u -(1-\ln 2)$. 
 \end{proof}

\begin{theorem}\label{thm:uniform_p}
For unweighted online stochastic matching with integral arrival rates and uniform constant stochastic rewards, there exists an adaptive algorithm which achieves a competitive ratio of at least $0.702$. 
\end{theorem}

\begin{proof}
We need to prove that $F(f'_u, q'_u)$ defined in \eqref{eqn:sec6-1} has a lower bound of $0.702$ for all $f'_u \in [0,1]$.
 
Consider the first case when $f'_u \le \ln 2/2$. It is easy to verify that $F(f'_u, q'_u) \ge F(f'_u, 0) \ge F(\ln2/2,0) \sim 0.8$. Consider the second case when $f'_u \ge \ln 2/2$. From Lemma \ref{lem:6-1}, we have $q'_u \ge f'_u-(1-\ln 2)$. Once again, simple calculations show that
\[
	F(f'_u, q'_u) \ge F\big(f'_u, f'_u-(1-\ln 2)\big) \ge F(1,1-(1-\ln 2)) \sim 0.702
\] 
 \end{proof}

\section{Conclusion and future directions.}
	In this paper, we gave improved algorithms for the Edge-Weighted and Vertex-Weighted models. Previously, there was a gap between the best unweighted algorithm with a ratio of $1 - 2\euler^{-2}$ due to~\cite{bib:Jaillet} and the negative result of $1 - \euler^{-2}$ due to~\cite{bib:Manshadi}. We took a step towards closing that gap by showing that an algorithm can achieve $0.7299 > 1 - 2\euler^{-2}$ for both the unweighted and vertex-weighted variants with integral arrival rates. In doing so, we made progress on Open Questions $3$ and $4$ in the online matching and ad allocation survey~\cite{mehtaBook}. This was possible because our approach of rounding to a simpler fractional solution allowed us to employ a stricter LP. For the edge-weighted variant, we showed that one can significantly improve the power of two choices approach by generating two matchings from the same LP solution. For the variant with edge weights, non-integral arrival rates, and stochastic rewards, we presented a $(1-1/\euler)$-competitive algorithm. This showed that the $0.62 < 1-1/\euler$ bound given in~\cite{mehtaonline} for the adversarial model with stochastic rewards does not extend to the known I.I.D. model.
		 
	 A natural next step in the edge-weighted setting is to use an \emph{adaptive} strategy. For the vertex-weighted problem, one can easily see that the stricter LP we use still has a gap. In addition, we only utilize fractional solutions $\{0, 1/3, 2/3\}$. However, dependent rounding gives solutions in $\{0, 1/k, 2/k, \ldots, \ceil{k(1-1/\euler)}/k \}$; allowing for random lists of length greater than three. Stricter LPs and longer lists could both yield improved results. In the stochastic rewards model with non-integral arrival rates, an open question is to either improve upon the $\left(1- \frac{1}{e} \right)$ ratio in the general case. In this work, we showed how for certain restrictions it is possible to beat $1-1/\euler$. However, the serious limitation comes from the fact that a polynomial sized \LP is insufficient to capture the complexity of the problem.

~\\
\noindent
\textbf{Acknowledgments. }
The authors would like to thank Aranyak Mehta and the anonymous reviewers for their valuable comments, which have significantly helped improve the presentation of this paper.

\bibliographystyle{alpha} 
\bibliography{refs}
	
	\appendix
		\section{Appendix}
\subsection{Supplementary materials in section \ref{sec:eweight} (Edge-weighted model).}
\label{apx:AlgEW1}

\subsubsection{Proof of Lemma \ref{lem:ew1} }
\label{apx:analy-EW1}

We will prove Lemma \ref{lem:ew1} using the following three Claims. Recall that we had one kind of large edge, while two kinds of small edges. Hence, the following claim characterizes the performance of each of them. \\
\begin{claim} \label{cl:ew1-a}
For a large edge $e$, $\EW_{1}[h]$ \eqref{alg:ew1} with parameter $h$ achieves a competitive ratio of $\RR{\EW_{1}}{ 2/3} = 0.67529+(1-h) \ast 0.00446$.\\
\end{claim}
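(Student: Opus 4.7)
The plan is to compute $\Pr[e \text{ matched}]$ for a large edge $e=(u,v)$ (where $F_e = 2$, so $f_e = 2/3$) by tracking the randomness from $\PM{3}$ together with the Poisson-approximated online arrivals, and then divide by $f_e$. First I would pin down the local structure at $u$: since the WLOG assumption forces $F_u = 3$ and $F_e = 2$, the vertex $u$ has exactly one other edge $e' = (u, v')$ in $G_\FF$ with $F_{e'} = 1$, which is necessarily a small edge of type $\Gamma_2$ (its only other edge incident to $u$, namely $e$, is large). This is precisely why the parameter $h$ enters the analysis: the $\Gamma_2$ coin in $\EW_1$ is flipped for $e'$ only when $v'$ arrives for the third time and $e' \in M_3$.

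Next I would enumerate the three equally likely placements of the two copies of $e$ (landing in a uniformly random $2$-subset of $\{M_1, M_2, M_3\}$) with $e'$ in the remaining matching. Let $i$ be the smaller index with $e \in M_i$, let $k$ be the index with $e' \in M_k$, and write $T_w^m$ for the time of the $m$-th arrival of online vertex $w$. The key simplification is that since the only edges incident to $u$ are $e$ and $e'$, the event ``$e$ is matched'' coincides with ``$v$'s $i$-th arrival occurs by time $1$ AND $u$ is still available at $T_v^i$''---the later opportunity at $T_v^j$ is redundant, because if $u$ were unavailable at $T_v^i$ it must already have been matched via $e'$ and would stay unavailable. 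Using independent rate-$1$ Poisson arrivals for $v$ and $v'$ on $[0,1]$ (valid up to the standard $o(1)$), I would then express
\begin{equation*}
\Pr[e \text{ matched} \mid (i,k)] \;=\; \Pr[T_v^i \le 1] \;-\; h_{i,k} \cdot \Pr\bigl[T_v^i \le 1 \wedge T_{v'}^k \le T_v^i\bigr],
\end{equation*}
where $h_{i,k}=h$ exactly when $k=3$ (the one configuration where the $\Gamma_2$ coin applies to $e'$) and $h_{i,k}=1$ otherwise. Each of the three resulting integrals reduces to a closed form in $e^{-1}$ and $e^{-2}$ via the Erlang density $t^{i-1}e^{-t}/(i-1)!$ and the Poisson tail $e^{-t}\sum_{j=0}^{k-1} t^j/j!$, giving only elementary integrals of the form $\int_0^1 t^a e^{-bt}\,dt$.

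Finally I would average the three conditional probabilities with weight $1/3$ and divide by $f_e = 2/3$. Because the factor $h$ enters only in the single configuration with $k = 3$, the ratio is affine in $h$, and regrouping constants casts it in the stated form $0.67529 + (1-h)\cdot 0.00446$. The main obstacle will be the careful bookkeeping of the Poisson integrals, together with the initial structural observation that $e'$ is forced to be of type $\Gamma_2$---without this, the specific form in $(1-h)$, with $h$ appearing in exactly one of the three configurations, would not be transparent.
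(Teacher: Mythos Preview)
Your approach is essentially the paper's: condition on the three equiprobable placements of the two copies of $e$ and the single copy of $e'$ among $[M_1,M_2,M_3]$, compute each conditional match probability via Poisson--arrival integrals, average, and divide by $2/3$. The structural observation that $e'$ is forced to be $\Gamma_2$, the ``later copy is redundant'' simplification, and the fact that $h$ enters only in the $k=3$ configuration are all exactly what the paper uses (the paper phrases the first two cases by quoting the $P_1,P_2$ values from Lemma~\ref{lem:ew0} rather than re-deriving them, but your direct computation yields the same $0.5808$ and $0.1485$).

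One numerical point to be prepared for: your exact formula for the $k=3$ case gives
\[
\Pr[e\text{ matched}\mid \alpha_3]=(1-e^{-1})-h\cdot\Pr[T_{v'}^3<T_v^1\le 1],
\]
and the second probability evaluates to $1/8 - e^{-1} + 15e^{-2}/8 \approx 0.01097$, leading to a $(1-h)$-coefficient of roughly $0.00548$ in the final ratio. The paper's stated $0.00446$ comes from a slightly looser lower bound on this case: its fourth summand only tracks the event ``$v'$ has \emph{exactly} three arrivals before $T_v^1$'' (contributing $\int_0^1 \tfrac{s^3}{6}e^{-2s}\,ds\approx 0.00893$) rather than ``$\ge 3$.'' So your computation will not reproduce the constant $0.00446$ exactly, but it proves a \emph{stronger} lower bound and hence the claim.
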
	

\begin{claim} \label{cl:ew1-b}
For a small edge $e$ of type $\Gamma_{1}$,  $\EW_{1}[h]$ \eqref{alg:ew1} achieves a competitive ratio of $\RR{\EW_{1}}{ 1/3} =0.751066$, regardless of the value $h$.\\
\end{claim}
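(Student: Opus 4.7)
The approach is to exploit the symmetry afforded by the random permutation inside $\PM{3}$ and to argue that the parameter $h$ is simply irrelevant to the analysis of $\Gamma_1$ edges.

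First, I would observe that since $e=(u,v)$ is small of type $\Gamma_1$, the offline vertex $u$ has exactly three small incident edges in $G_\FF$, namely to $v, v_1, v_2$. The type of an edge is a property of its offline endpoint, so all three of these edges are themselves of type $\Gamma_1$. Consequently, the rule involving $h$ in Algorithm~\ref{alg:ew1}---which fires only when a third-time arrival probes a $\Gamma_2$ edge---never activates at $u$. The behavior of $\EW_1[h]$ at $u$ is therefore deterministic in the sense relevant for $e$: on the $j$-th arrival of the online neighbor whose edge at $u$ lies in $M_j$, we simply try to match it.

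Next I would exploit symmetry. Under $\PM{3}$, the three edges at $u$ are placed one each into $M_1,M_2,M_3$ and then randomly permuted, so every assignment of edges to matching indices is equally likely. Let $v^{(j)}$ denote the online neighbor whose edge to $u$ lies in $M_j$. Then $u$ is matched iff, for some $j$, the neighbor $v^{(j)}$ arrives at least $j$ times; equivalently, $u$ is \emph{unmatched} iff $N_{v^{(1)}}=0$, $N_{v^{(2)}}\le 1$, and $N_{v^{(3)}}\le 2$. Under the standard large-$n$ Poisson approximation already in force in the paper, the three counts are independent Poisson$(1)$, giving
\[
\Pr[u\text{ unmatched}] = \frac{1}{e}\cdot\frac{2}{e}\cdot\frac{5}{2e} = \frac{5}{e^{3}}.
\]

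Finally, since the three edges at $u$ play symmetric roles under the random permutation and at most one of them can be matched, each is matched with probability exactly $(1-5/e^{3})/3$. Normalizing by $H_e=1/3$ yields
\[
\RR{\EW_{1}}{1/3} \;=\; \frac{(1-5/e^{3})/3}{1/3} \;=\; 1-\frac{5}{e^{3}} \;\approx\; 0.751066,
\]
independent of $h$, as claimed. The only mild obstacle is justifying the independence of the three Poisson arrival counts, but this is the standard asymptotic already used throughout; no case analysis over the six permutations is required because everything passes through $\Pr[u\text{ matched}]$ via symmetry.
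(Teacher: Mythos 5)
Your proof is correct, and it takes a genuinely different and cleaner route than the paper. The paper's argument (Appendix~\ref{apx:analy-EW1}) conditions on which matching $e$ lands in (events $\psi_1,\psi_2,\psi_3$, each with probability $1/3$), computes each conditional probability $\Pr[e\text{ matched}\mid\psi_j]$ by an explicit Poisson/sum calculation (obtaining $0.571861$, $0.144776$, $0.0344288$), and then averages. You instead observe that $\Pr[e\text{ matched}]=\Pr[u\text{ matched}]/3$ by exchangeability, and compute $\Pr[u\text{ matched}]=1-5/e^3$ directly from the independence of the three Poisson$(1)$ arrival counts, using the exact characterization that $u$ is matched iff $v^{(j)}$ arrives at least $j$ times for some $j$. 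The two routes are consistent, since $0.571861+0.144776+0.0344288=0.751066=1-5/e^3$. Your version sidesteps the three conditional calculations entirely; the paper's more granular case analysis is needed anyway for Claims~\ref{cl:ew1-a} and~\ref{cl:ew1-c}, where the neighborhood structure is asymmetric and the $h$-rule actually fires, which likely explains why the authors used the same conditioning style here. One point worth making explicit in your write-up: the step "$u$ is matched iff $\exists j: N_{v^{(j)}}\ge j$" is an \emph{exact} characterization because $u$'s only incident edges in $G_\FF$ are to $v^{(1)},v^{(2)},v^{(3)}$ (one per matching), so the only arrival that ever probes $u$ is the $j$-th arrival of $v^{(j)}$; and the symmetry $\Pr[(u,v)\text{ matched}]=\Pr[(u,v_i)\text{ matched}]$ follows because relabeling the three neighbors induces a measure-preserving bijection on the joint space of (permutation, arrival sequence).
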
	

\begin{claim} \label{cl:ew1-c}
For a small edge $e$ of type $\Gamma_{2}$,  $\EW_{1}[h]$ \eqref{alg:ew1} achieves a competitive ratio of $\RR{\EW_{1}}{1/3}= 0.72933+ h \ast 0.040415$.\\
\end{claim}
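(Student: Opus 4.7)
The plan is to condition on which of the three matchings $M_1, M_2, M_3$ contains the single copy of $e = (u, v)$, compute the conditional matching probability in each case, and sum. Since $e$ is a small edge of type $\Gamma_2$, vertex $u$ has only two neighbors in $G_\FF$: $v$ (via one copy of $e$) and $v_1$ (via two copies of the large edge $(u, v_1)$). After $\PM{3}$ decomposes $G_\FF$ into three matchings and permutes them uniformly at random, $e$ lands in $M_k$ with probability $1/3$ for each $k \in \{1,2,3\}$, and the two copies of $(u, v_1)$ then occupy the other two matchings. Let $P_k := \Pr[e \text{ matched} \mid e \in M_k]$; by the same convention used in the convex-combination analysis (where the small-edge ratio equals $3\,\Pr[e \text{ matched} \mid F_e = 1]$, since $\Pr[F_e = 1] = 3 f_e$), the quantity to bound is
\[
\RR{\EW_1}{1/3} \;=\; 3\cdot\tfrac{1}{3}(P_1 + P_2 + P_3) \;=\; P_1 + P_2 + P_3.
\]

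Given $e \in M_k$, the earliest moment $u$ can be blocked is the $m_k$-th arrival of $v_1$, where $m_k$ is the smallest index $i$ with $(u, v_1) \in M_i$; thus $m_1 = 2$ and $m_2 = m_3 = 1$. In the standard Poisson limit, the arrivals of $v$ and of $v_1$ are independent unit-rate Poisson processes on $[0, 1]$: the $k$-th arrival of $v$ has density $\tfrac{s^{k-1}}{(k-1)!}\,e^{-s}$ at time $s$, and the probability that $v_1$ has fewer than $m$ arrivals by time $s$ is $e^{-s}\sum_{j=0}^{m-1} s^j/j!$. Combining these (and noting that for $k = 3$ algorithm $\EW_1[h]$ tries to match only with probability $h$), I would evaluate
\begin{align*}
P_1 &= \int_0^1 e^{-s}\cdot e^{-s}(1+s)\,ds = \tfrac{3}{4} - \tfrac{5}{4}e^{-2}, \\
P_2 &= \int_0^1 s\,e^{-s}\cdot e^{-s}\,ds = \tfrac{1}{4} - \tfrac{3}{4}e^{-2}, \\
P_3 &= h\int_0^1 \tfrac{s^2}{2}\,e^{-s}\cdot e^{-s}\,ds = \tfrac{h}{8}\bigl(1 - 5 e^{-2}\bigr).
\end{align*}
The cancellation $P_1 + P_2 = 1 - 2 e^{-2}$ is clean, giving $\RR{\EW_1}{1/3} = (1 - 2 e^{-2}) + \tfrac{h}{8}(1 - 5 e^{-2}) \approx 0.72933 + 0.040415\,h$, as claimed.

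The main obstacle is pinning down the blocking structure: one must confirm that $v_1$'s $m_k$-th arrival is indeed the first event that can match $u$ (subsequent arrivals of $v_1$ via the other matching containing $(u, v_1)$ occur strictly later and are therefore dominated), and that other incident edges of $v$ or of $v_1$ in $G_\FF$ do not interfere, because each arrival acts only on its own matching and the Poisson processes of distinct offline neighbors are independent. Once the blocking structure is correctly described, the three integrals reduce to routine integration by parts.
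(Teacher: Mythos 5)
Your proof is correct and takes essentially the same approach as the paper: condition on which of the three matchings contains $e$ (each with probability $1/3$ under $\PM{3}$), observe that $u$ is blocked exactly when $v_1$ has accumulated $m_k$ arrivals where $m_k$ is the earliest index in which $(u,v_1)$ sits, and then sum the three conditional matching probabilities. Your Poisson-limit integrals evaluate to precisely the paper's values $0.580831$, $0.148499$, and $h\cdot 0.0404154$, and the final expression $\RR{\EW_1}{1/3} = (1-2e^{-2}) + \tfrac{h}{8}(1-5e^{-2})$ matches the stated claim.
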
	
By setting $h = 0.537815$,  the two types of small edges have the same ratio and we get that $\EW_{1}[h]$ achieves $(\RR{\EW_{1} }{2/3}, \RR{\EW_{1}} { 1/3}) =(0.679417, 0.751066)$. Thus, this proves Lemma \ref{lem:ew1}.	\\

\xhdr{Proof of Claim \ref{cl:ew1-a}.}
Consider a large edge $e=(u, v_1)$ in the graph $G_{\FF}$. Let $e'=(u,v_{2})$ be the other small edge incident to $u$. Edges $e$ and $e'$ can appear in $[M_{1}, M_{2}, M_{3}]$ in the following three ways.		
				\begin{itemize}
					\item
						$\alpha_1$: $e \in M_1, e' \in M_2, e\in M_3$.
					\item
						$\alpha_2$: $e' \in M_1, e \in M_2, e\in M_3$.					\item
						$\alpha_3$: $e \in M_1, e \in M_2, e'\in M_3$.	
				\end{itemize}
Notice that the random triple of matchings $[M_{1}, M_{2}, M_{3}]$ is generated by invoking $\PM{3}$. Since $\PM{3}$ considers a uniform random permutation we have that
$\alpha_{i}$ will occur with probability $1/3$ for $1 \le i \le 3$. For $\alpha_{1}$ and $\alpha_{2}$, we can ignore the second copy of $e$ in $M_3$ and from Lemma \ref{lem:ew0} we have
$$\Pr[\text{$e$ is matched }|~\alpha_1]  \myEdit{=} 0.580831
\text{ and } 
\Pr[\text{$e$ is matched }|~\alpha_2]    \myEdit{=} 0.148499 $$
For $\alpha_{3}$, we have
\begin{eqnarray*}
\Pr[\text{$e$ is matched }|~\alpha_3] & \ge & \sum\limits_{t=1}^{n} \frac{1}{n} \left( 1 - \frac{2}{n} \right)^{t-1}  + \sum\limits_{t=1}^{n} \frac{1}{n} \left( \frac{t-1}{n} \right) \left( 1 - \frac{2}{n} \right)^{t-2}\\
&& + \sum\limits_{t=1}^{n} \frac{1}{n} \left( \frac{(t-1)(t-2)}{2n^2} \right) \left( 1 - \frac{2}{n} \right)^{t-3}\\
&&
+ (1-h) \sum\limits_{t=1}^{n} \frac{1}{n} \left( \frac{1}{n^3} \right) {t-1 \choose 3} \left( 1 - \frac{2}{n} \right)^{t-4}\\	
 &\geq & 0.621246 + (1-h)\ast 0.00892978						
\end{eqnarray*}
\myEdit{There are four terms in the summation above. The four terms denote the probabilities that $v_1$ comes for \emph{the first time} at some time $t \in [T]$ and $v_2$ arrives for $0$, $1$, $2$ and $3$ times before $t$ respectively. Note that in the last term when $v_2$ comes for a third time at some time before $t$, we need to ensure that $v_3$ never matches $u$ which occurs with probability $1-h$ as described in $\EW_1$}.

\myEdit{Recall that $\RR{\EW_{1}}{ 2/3}$ denotes the competitive ratio for a large edge. By definition, we have
\begin{align*}
\RR{\EW_{1}}{ 2/3}&=\frac{\Pr[e \text{~is matched}]}{2/3}\\
&=\frac{\frac{1}{3}\sum_{i=1}^3\Pr[e \text{~is matched} |~ \alp_i]}{2/3}\\
& \ge 0.67529 + (1-h)\ast 0.00446489
\end{align*}
}

\xhdr{Proof of Claims \ref{cl:ew1-b} and \ref{cl:ew1-c}.}
Consider a small edge $e=(u,v)$ of type $\Gamma_{1}$. Let $e_{1}$ and $e_{2} $ be the two other small edges incident to $u$. For a given triple of matchings $[M_{1}, M_{2}, M_{3}]$, we say $e$ is of type $\psi_{1}$ if $e$ appears in
$M_{1}$ while the other two in the remaining two matchings. Similarly, we define the type $\psi_{2}$ and $\psi_{3}$ for the case where $e$ appears in $M_{2}$ and $M_{3}$ respectively. Notice that the probability that $e$ is of type $\psi_{i},~1 \le i \le 3$ is $1/3$.

Similar to the calculations in the proof of Claim \ref{cl:ew1-a}, we have
$\Pr[\text{$e$ is matched} |~\psi_{1}] \ge 0.571861$,
 $\Pr[\text{$e$ is matched} |~\psi_{2}] \ge 0.144776$ and
 $ \Pr[\text{$e$ is matched} |~\psi_{3}] \ge 0.0344288$.
  Therefore we have 
  $$\Pr[ \text{$e$ is matched}] =\frac{1}{3}\sum\limits_{i=1}^{3} \Pr[\text{$e$ is matched }|~\psi_i]  \ge \frac{1}{3}\RR{\EW_{1}}{ 1/3} $$ 
where $\RR{\EW_{1}}{ 1/3}=0.751066$.

Consider a small edge $e=(u,v)$ of type $\Gamma_{2}$, we define 
type $\beta_{i}, 1 \le  i \le 3$, if $e$ appears in $M_{i}$ while the large edge $e'$ incident to $u$ appears in the remaining two matchings. Similarly,  we have 
$\Pr[\text{$e$ is matched} |~\psi_{1}] \ge 0.580831$,
$ \Pr[\text{$e$ is matched} |~\psi_{2}] \ge 0.148499$ and 
  $\Pr[\text{$e$ is matched} |~\psi_{3}] \ge h *0.0404154$.

Hence, the ratio for a small edge of type $\Gamma_{2}$ is $\RR{\EW_{1}}{ 1/3}=0.72933 + h*0.0404154$. 

\subsubsection{Proof of Lemma \ref{lem:ew2}}
\label{apx:analy-EW2}

We will prove Lemma \ref{lem:ew2} using the following two Claims. \\
\begin{claim} \label{cl:ew2-a}
For a large edge $e$, $\EW_{2}[y_{1},y_{2}]$ \eqref{alg:ew2} achieves a competitive ratio of 
$$\RR{\EW_{2}}{ 2/3} =\min\Big( 0.948183 - 0.099895 y_{1} - 0.025646 y_{2}, 0.871245\Big)$$\\
\end{claim}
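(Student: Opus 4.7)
The plan is to do a local case analysis on the neighborhood of the large edge $e=(u,v)$ in $G_\FF$, and then reduce the matching probability of $e$ to a small number of Poisson-arrival computations. Since $F_e = 2$ and we have assumed $F_u = F_v = 3$ (by padding with dummies), each endpoint of $e$ has exactly one additional incident edge in $G_\FF$, and that extra edge must be small. The $v$-side is therefore forced: $v$ has exactly two neighbors in $G_\FF$, and the rule of $\mathsf{PM}^*[\FF,2]$ for a degree-two vertex deterministically places the large edge $e$ in $M_1$ and the remaining small edge incident to $v$ in $M_2$. Consequently $e$ is matched iff the first arrival of $v$ occurs during the horizon at a moment when $u$ is still available, and since $e' = (u,v')$ is the only other edge at $u$ in $G_\FF$, only $e'$ can ever block $u$.

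The case split comes from the neighborhood of $v'$. In the first case $v'$ has two neighbors, i.e.~its other edge is large, so $\mathsf{PM}^*[\FF,2]$ deterministically puts $e' \in M_2$. In the second case $v'$ has three neighbors, all small; a uniformly random permutation of $\partial(v')$ then places $e'$ into $M_1$ with probability $y_1/3$, into $M_2$ with probability $y_2/3$, and into neither with probability $1 - y_1/3 - y_2/3$. In all sub-cases $u$ is blocked (if at all) only through $e'$: at the first arrival of $v'$ if $e' \in M_1$, at the second arrival of $v'$ if $e' \in M_2$, and never if $e'$ is absent.

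Using the standard Poisson approximation (each $v \in V$ arrives as an independent unit-rate Poisson process on $[0,1]$) I will compute, for the three possible fates of $e'$, the probability that $v$'s first arrival precedes the relevant blocking event. These are essentially the three integrals already implicit in Lemma~\ref{lem:ew0}, yielding $\tfrac{1}{2}(1-e^{-2})$, $\tfrac{3}{4} - \tfrac{5}{4}e^{-2}$, and $1 - e^{-1}$ respectively. Dividing by $f_e = 2/3$, the two-neighbor case for $v'$ contributes the constant ratio $\tfrac{3}{2}\bigl(\tfrac{3}{4} - \tfrac{5}{4}e^{-2}\bigr) \approx 0.871245$, and the three-neighbor case combines the three sub-probabilities linearly with weights $y_1/3$, $y_2/3$, $1 - y_1/3 - y_2/3$ to give $0.948183 - 0.099895\,y_1 - 0.025646\,y_2$. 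The claimed bound is just the minimum of these two expressions, since a competitive-ratio statement must hold in the worst configuration.

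The main thing to be careful about is verifying that no other edge incident to $u$ can appear in $M_1 \cup M_2$, so that the above event decomposition is exhaustive; this follows immediately from $F_u = 3$ and $F_e = 2$, which leaves exactly one unit of degree at $u$ and hence a single additional edge $e'$. Everything else is routine arithmetic on Poisson probabilities of at most two arrivals, so I do not expect any serious obstacle in the calculations.
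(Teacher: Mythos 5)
Your proposal is correct and follows essentially the same route as the paper's proof: it is the identical two-configuration case split (the paper's configurations $(A)$ and $(B)$ in Figure~\ref{fig:ew-large}), with the same observation that $v$'s degree-two rule forces $e\in M_1$, so that only the single remaining edge $e'$ at $u$ can block, and the same three-way sub-split on where $e'$ lands (probabilities $y_1/3$, $y_2/3$, $1-y_1/3-y_2/3$). The only cosmetic difference is that you express the three conditional matching probabilities in closed form ($\tfrac{1}{2}(1-e^{-2})$, $\tfrac{3}{4}-\tfrac{5}{4}e^{-2}$, $1-e^{-1}$) rather than the paper's decimals ($0.432332$, $0.580831$, $0.632121$); these agree.
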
	
	
\begin{claim} \label{cl:ew2-b}
For a small edge $e$, $\EW_{2}[y_{1},y_{2}]$ \eqref{alg:ew2} achieves a competitive ratio of $\RR{\EW_{2}}{ 1/3} =0.4455$, when  $y_{1}=0.687, y_{2}=1$.\\
\end{claim}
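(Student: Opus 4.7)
The plan is to show that for every small edge $e=(u,v)$ in $G_\FF$ (i.e.\ with $F_e=1$), the probability that $e$ is matched by $\EW_2[y_1,y_2]$ at $(y_1,y_2)=(0.687,1)$ is at least $0.4455\cdot f_e = 0.4455/3$, so that $\RR{\EW_2}{1/3}\ge 0.4455$. Because $\EW_2$ is non-adaptive and its offline phase depends only on $\mathsf{PM}^*[\FF,2]$, the matching probability of $e$ is determined entirely by the local structure of $e$ in $G_\FF$. Using the WLOG convention $F_u=3$, the remaining two units of mass at $u$ come from either one large neighbor or two small neighbors; at $v$ the neighborhood is classified by whether it contains a large edge and by how many small edges are present, yielding a total of $2\times 4=8$ sub-cases, matching the count announced in the sketch of Lemma~\ref{lem:ew2}.

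Within each sub-case I would compute two quantities. The first is the pair $(\Pr[e\in M_1],\Pr[e\in M_2])$, which reads straight off the definition of $\mathsf{PM}^*[\FF,2]$: when $v$ has a large neighbor, the small edge $e$ is forced into the secondary matching $M_2$ deterministically; when $v$ has only small neighbors, a uniformly random ordering of $\partial(v)$ is drawn and $e$ lands in $M_1$ with probability $y_1/|\partial(v)|$ and in $M_2$ with probability $y_2/|\partial(v)|$. The second quantity is $\Pr[e \text{ matches } v\mid e\in M_i]$, which decomposes as a sum over the possible arrival time of $v$ in $\{1,\dots,n\}$, weighted by the probability that $u$ is still free. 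That freedom probability is controlled by the other edges at $u$ (one large or two small), and in turn by the placements those edges obtain in $M_1,M_2$ via the random choices made at \emph{their} $v$-endpoints. Taking $n\to\infty$ turns all these sums into Poisson-type closed forms analogous to those in the analysis of $\EW_0$ and Claim~\ref{cl:ew1-a}.

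Combining the two factors in each configuration and dividing by $f_e=1/3$ gives eight explicit lower bounds on the competitive ratio of $e$ as functions of $(y_1,y_2)$. Substituting $y_1=0.687,\ y_2=1$ evaluates each of these to a number; one then checks that the minimum of the eight is exactly $0.4455$, and identifies this minimizing configuration as the worst case that pins down $\RR{\EW_2}{1/3}$.

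The main obstacle is the bookkeeping rather than any single hard estimate: the blocking probability at $u$ does not depend only on whether $u$'s other neighbor is large or small, but also on whether that neighbor's own $v$-endpoint has a large partner (which forces a deterministic $M_1/M_2$ placement) or only small partners (which introduces the $y_1,y_2$-weighted random choice). To keep the accounting tractable I would accompany the proof with a figure, modeled on Figure~\ref{fig:JailletCycles}, that lists all eight local configurations together with their induced $M_1,M_2$ patterns; once this case tree is drawn, the Poisson-sum computations inside each case are mechanical and the final numerical minimization is a one-line verification.
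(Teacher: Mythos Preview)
Your approach is essentially the paper's: enumerate the local configurations of the small edge $e=(u,v_1)$, compute in each one the matching probability of $e$ via the placement law of $\mathsf{PM}^*[\FF,2]$ together with Poisson-type sums for the blocking at $u$, and take the minimum. One bookkeeping correction: the eight configurations do not factor as $2\times 4$ over the neighborhoods of $u$ and $v_1$. They split as $2+6$. The first two arise when $u$'s remaining mass is a single large edge $(u,v_2)$, and are distinguished only by whether $v_1$'s other neighbor is large or has two small edges (the structure at $v_2$ is then irrelevant, since the large edge is forced into $M_1$). The remaining six arise when $u$ has two further small edges $(u,v_2),(u,v_3)$; here one must record the neighborhood type at \emph{each} of $v_1,v_2,v_3$ (up to the $v_2\leftrightarrow v_3$ symmetry), because for each $v_i$ that type determines whether $(u,v_i)$ is forced into $M_2$ or placed randomly via $(y_1,y_2)$. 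You already anticipate this dependence in your final paragraph, so the fix is purely in how the eight cases are indexed. The paper's proof then proceeds exactly as you outline, and the minimum $0.4455$ is attained at the two configurations where $u$'s other edge is large.
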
	

Therefore, by setting $y_{1}=0.687, y_{2}=1$ we get that $\RR{\EW_{2}}{ 2/3} = 0.8539 $ and $\RR{\EW_{2}}{ 1/3} = 0.4455 $, which proves Lemma \ref{lem:ew2}. \\


\xhdr{Proof of Claim \ref{cl:ew2-a}.}
Figure \ref{fig:ew-large} shows the two possible configurations for a large edge. 
	\begin{figure}[!h]
	\centering
	\begin{tikzpicture}
[
	xscale=1,yscale=1,auto,thick,
  	gray node/.style={circle,
  		inner sep=0pt,minimum size=14pt, 
  		fill=black!20,draw,font=\small},  
  	white node/.style={circle,
  		inner sep=0pt,minimum size=14pt, 
  		fill=white,draw,font=\small},
  node distance=24pt 
]


	\node [gray node] (A_u) at (0,0) {$u$};
	\node [gray node] (A_v1) [right of=A_u, xshift=30, yshift=15] {$v_1$};
	\node [white node] (A_v2) [right of=A_u, xshift=30, yshift=-15] {$v_2$};
	\draw[ultra thick] (A_u)  -- (A_v1);
	\draw[thin] (A_u)  -- (A_v2);
	\node (A_v1_1) [left of=A_v1, yshift=10] {};
		\draw[thin] (A_v1) -- (A_v1_1);
	\node (A_v2_1) [left of=A_v2, yshift=-10] {};
		\draw[thin] (A_v2) -- (A_v2_1);
	\node (A_v2_2) [left of=A_v2] {};
		\draw[thin] (A_v2) -- (A_v2_2);
	\node (A) [above of=A_u, xshift=0, yshift=0] {(A)};


	\node [gray node] (B_u) at (4,0) {$u$};
	\node [gray node] (B_v1) [right of=B_u, xshift=30, yshift=15] {$v_1$};
	\node [white node] (B_v2) [right of=B_u, xshift=30, yshift=-15] {$v_2$};
	\draw[ultra thick] (B_u)  -- (B_v1);
	\draw[thin] (B_u)  -- (B_v2);
	\node (B_v1_1) [left of=B_v1, yshift=10] {};
		\draw[thin] (B_v1) -- (B_v1_1);
	\node (B_v2_1) [left of=B_v2, yshift=-10] {};
		\draw[ultra thick] (B_v2) -- (B_v2_1);
	\node (B) [above of=B_u, xshift=0, yshift=0] {(B)};



  



\end{tikzpicture}
	\caption{Diagram of configurations for a large edge $e=(u,v_{1})$. Thin and Thick lines represent small and large edges respectively.}
	\label{fig:ew-large}
	\end{figure}
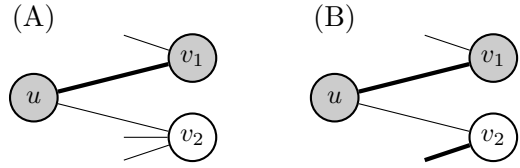

Consider a large edge $e=(u,v_{1})$ with the configuration $(A)$. From $\mathsf{PM}^{*}[\FF,2][y_{1},y_{2}]$, we know that $e$ will always be in $M_{1}$ while $e'=(u,v_{2})$ will be in $M_{1}$ and $M_{2}$ with probability $y_{1}/3$ and $y_{2}/3$ respectively. \\

We now have the following cases:
\myEdit{
\begin{itemize}
\item	$\alpha_1$: $e \in M_1$ and $e'\in M_1$. This happens with probability $y_{1}/3$. In this case, $e$ is matched if $v_1$ comes for the first time at some time $t \in [T]$ and $v_2$ never comes before $t$. Thus, 

$$\Pr[\text{$e$ is matched }|~\alpha_1]  =\sum_{t=1}^n \frac{1}{n} \Big(1-\frac{2}{n}\Big)^{t-1}\geq 0.432332$$

\item   $\alpha_2$: $e\in M_1$ and $e'\in M_2$. This happens with probability $y_{2}/3$.  In this case, $e$ is matched if $v_1$ comes for the first time at some time $t \in [T]$ and $v_2$ comes at most once before $t$. Note that this case is essentially the same as $P_1$ described in Lemma~\ref{lem:ew0}. Thus, we have

$$\Pr[\text{$e$ is matched }|~\alpha_2]  \geq  0.580831$$

\item	$\alpha_3$: $e\in M_1$ and $e'\not \in M_1,  
e' \not \in M_{2}$. This happens with probability $(1-y_{1}/3-y_{2}/3)$. In this case, $e$ is matched if $v_1$ comes at least once. Thus, $\Pr[\text{$e$ is matched}|~\alpha_1] =1-1/\euler \geq 0.632121$.\\
\end{itemize}
}

Therefore we have 
\begin{eqnarray*}
 \Pr[ \text{$e$ is matched}] &=&\Big(\frac{y_{1}}{3}\Pr[\text{$e$ is matched }|~\alpha_1]  +\frac{y_{2}}{3}\Pr[\text{$e$ is matched }|~\alpha_2]\\
 &&+ (1-\frac{y_{1}}{3}-\frac{y_{2}}{3})\Pr[\text{$e$ is matched }|~\alpha_3] \Big)\\
  &\ge &\frac{2}{3}  (0.948183 - 0.099895 y_{1} - 0.025646 y_{2})
  \end{eqnarray*}

Consider the configuration $(B)$. From $\mathsf{PM}^{*}[\FF,2][y_{1},y_{2}]$, we know that $e$ will always be in $M_{1}$ and $e'=(u,v_{2})$ will always  be in $M_{2}$. Thus we have 
$$\Pr[ \text{$e$ is matched}] =\Pr[\text{$e$ is matched }|~\alpha_2] =\frac{2}{3}*0.871245$$

Hence, this completes the proof of Claim \ref{cl:ew2-a}. 

\xhdr{Proof of Claim \ref{cl:ew2-b}.}
Figure \ref{fig:ew-small} shows all possible configurations for a small edge. 
	\begin{figure}[!h]
	\centering
	\begin{tikzpicture}
[
	xscale=0.8,yscale=1,auto,thick,
  	gray node/.style={circle,
  		inner sep=0pt,minimum size=14pt, 
  		fill=black!20,draw,font=\small},  
  	white node/.style={circle,
  		inner sep=0pt,minimum size=14pt, 
  		fill=white,draw,font=\small},
  node distance=24pt 
]


	\node [gray node] (1a_u) at (0,0) {$u$};
	\node [gray node] (1a_v1) [right of=1a_u, xshift=30, yshift=15] {$v_1$};
	\node [white node] (1a_v2) [right of=1a_u, xshift=30, yshift=-15] {$v_2$};
	\draw[thin] (1a_u)  -- (1a_v1);
	\draw[ultra thick] (1a_u)  -- (1a_v2);
	\node (1a_v1_1) [left of=1a_v1, yshift=10] {};
		\draw[ultra thick] (1a_v1) -- (1a_v1_1);
	\node (1a_v2_1) [left of=1a_v2, yshift=-10] {};
		\draw[thin] (1a_v2) -- (1a_v2_1);
	\node (1a) [above of=1a_u, xshift=0, yshift=0] {(1a)};


	\node [gray node] (1b_u) at (3.75,0) {$u$};
	\node [gray node] (1b_v1) [right of=1b_u, xshift=30, yshift=15] {$v_1$};
	\node [white node] (1b_v2) [right of=1b_u, xshift=30, yshift=-15] {$v_2$};
	\draw[thin] (1b_u)  -- (1b_v1);
	\draw[ultra thick] (1b_u)  -- (1b_v2);
	\node (1b_v1_1) [left of=1b_v1, yshift=10] {};
		\draw[thin] (1b_v1) -- (1b_v1_1);
	\node (1b_v1_2) [left of=1b_v1] {};
		\draw[thin] (1b_v1) -- (1b_v1_2);
	\node (1b_v2_1) [left of=1b_v2, yshift=-10] {};
		\draw[thin] (1b_v2) -- (1b_v2_1);
	\node (1b) [above of=1b_u, xshift=0, yshift=0] {(1b)};


	\node [gray node] (2a_u) at (7.5,0) {$u$};
	\node [gray node] (2a_v1) [right of=2a_u, xshift=30, yshift=25] {$v_1$};
	\node [white node] (2a_v2) [right of=2a_u, xshift=30, yshift=0] {$v_2$};
	\node [white node] (2a_v3) [right of=2a_u, xshift=30, yshift=-25] {$v_3$};
	\draw[thin] (2a_u)  -- (2a_v1);
	\draw[thin] (2a_u)  -- (2a_v2);
	\draw[thin] (2a_u)  -- (2a_v3);
	\node (2a_v1_1) [left of=2a_v1, yshift=10] {};
		\draw[ultra thick] (2a_v1) -- (2a_v1_1);
	\node (2a_v2_1) [left of=2a_v2, yshift=-10] {};
		\draw[ultra thick] (2a_v2) -- (2a_v2_1);
	\node (2a_v3_1) [left of=2a_v3, yshift=-10] {};
		\draw[ultra thick] (2a_v3) -- (2a_v3_1);
	\node (2a) [above of=2a_u, xshift=0, yshift=0] {(2a)};


	\node [gray node] (2b_u) at (11.25,0) {$u$};
	\node [gray node] (2b_v1) [right of=2b_u, xshift=30, yshift=25] {$v_1$};
	\node [white node] (2b_v2) [right of=2b_u, xshift=30, yshift=0] {$v_2$};
	\node [white node] (2b_v3) [right of=2b_u, xshift=30, yshift=-25] {$v_3$};
	\draw[thin] (2b_u)  -- (2b_v1);
	\draw[thin] (2b_u)  -- (2b_v2);
	\draw[thin] (2b_u)  -- (2b_v3);
	\node (2b_v1_1) [left of=2b_v1, yshift=10] {};
		\draw[thin] (2b_v1) -- (2b_v1_1);
	\node (2b_v1_2) [left of=2b_v1] {};
		\draw[thin] (2b_v1) -- (2b_v1_2);
	\node (2b_v2_1) [left of=2b_v2, yshift=-10] {};
		\draw[ultra thick] (2b_v2) -- (2b_v2_1);
	\node (2b_v3_1) [left of=2b_v3, yshift=-10] {};
		\draw[ultra thick] (2b_v3) -- (2b_v3_1);
	\node (2b) [above of=2b_u, xshift=0, yshift=0] {(2b)};


	\node [gray node] (3a_u) at (0,-3) {$u$};
	\node [gray node] (3a_v1) [right of=3a_u, xshift=30, yshift=25] {$v_1$};
	\node [white node] (3a_v2) [right of=3a_u, xshift=30, yshift=0] {$v_2$};
	\node [white node] (3a_v3) [right of=3a_u, xshift=30, yshift=-25] {$v_3$};
	\draw[thin] (3a_u)  -- (3a_v1);
	\draw[thin] (3a_u)  -- (3a_v2);
	\draw[thin] (3a_u)  -- (3a_v3);
	\node (3a_v1_1) [left of=3a_v1, yshift=10] {};
		\draw[ultra thick] (3a_v1) -- (3a_v1_1);
	\node (3a_v2_1) [left of=3a_v2, yshift=10] {};
		\draw[thin] (3a_v2) -- (3a_v2_1);
	\node (3a_v2_2) [left of=3a_v2, yshift=-10] {};
		\draw[thin] (3a_v2) -- (3a_v2_2);
	\node (3a_v3_1) [left of=3a_v3, yshift=-10] {};
		\draw[ultra thick] (3a_v3) -- (3a_v3_1);
	\node (3a) [above of=3a_u, xshift=0, yshift=0] {(3a)};


	\node [gray node] (3b_u) at (3.75,-3) {$u$};
	\node [gray node] (3b_v1) [right of=3b_u, xshift=30, yshift=25] {$v_1$};
	\node [white node] (3b_v2) [right of=3b_u, xshift=30, yshift=0] {$v_2$};
	\node [white node] (3b_v3) [right of=3b_u, xshift=30, yshift=-25] {$v_3$};
	\draw[thin] (3b_u)  -- (3b_v1);
	\draw[thin] (3b_u)  -- (3b_v2);
	\draw[thin] (3b_u)  -- (3b_v3);
	\node (3b_v1_1) [left of=3b_v1, yshift=10] {};
		\draw[thin] (3b_v1) -- (3b_v1_1);
	\node (3b_v1_2) [left of=3b_v1] {};
		\draw[thin] (3b_v1) -- (3b_v1_2);
	\node (3b_v2_1) [left of=3b_v2, yshift=10] {};
		\draw[thin] (3b_v2) -- (3b_v2_1);
	\node (3b_v2_2) [left of=3b_v2, yshift=-10] {};
		\draw[thin] (3b_v2) -- (3b_v2_2);
	\node (3b_v3_1) [left of=3b_v3, yshift=-10] {};
		\draw[ultra thick] (3b_v3) -- (3b_v3_1);
	\node (3b) [above of=3b_u, xshift=0, yshift=0] {(3b)};


	\node [gray node] (4a_u) at (7.5,-3) {$u$};
	\node [gray node] (4a_v1) [right of=4a_u, xshift=30, yshift=25] {$v_1$};
	\node [white node] (4a_v2) [right of=4a_u, xshift=30, yshift=0] {$v_2$};
	\node [white node] (4a_v3) [right of=4a_u, xshift=30, yshift=-25] {$v_3$};
	\draw[thin] (4a_u)  -- (4a_v1);
	\draw[thin] (4a_u)  -- (4a_v2);
	\draw[thin] (4a_u)  -- (4a_v3);
	\node (4a_v1_1) [left of=4a_v1, yshift=10] {};
		\draw[ultra thick] (4a_v1) -- (4a_v1_1);
	\node (4a_v2_1) [left of=4a_v2, yshift=10] {};
		\draw[thin] (4a_v2) -- (4a_v2_1);
	\node (4a_v2_2) [left of=4a_v2, yshift=-10] {};
		\draw[thin] (4a_v2) -- (4a_v2_2);
	\node (4a_v3_1) [left of=4a_v3, yshift=0] {};
		\draw[thin] (4a_v3) -- (4a_v3_1);
	\node (4a_v3_2) [left of=4a_v3, yshift=-10] {};
		\draw[thin] (4a_v3) -- (4a_v3_2);
	\node (4a) [above of=4a_u, xshift=0, yshift=0] {(4a)};


	\node [gray node] (4b_u) at (11.25,-3) {$u$};
	\node [gray node] (4b_v1) [right of=4b_u, xshift=30, yshift=25] {$v_1$};
	\node [white node] (4b_v2) [right of=4b_u, xshift=30, yshift=0] {$v_2$};
	\node [white node] (4b_v3) [right of=4b_u, xshift=30, yshift=-25] {$v_3$};
	\draw[thin] (4b_u)  -- (4b_v1);
	\draw[thin] (4b_u)  -- (4b_v2);
	\draw[thin] (4b_u)  -- (4b_v3);
	\node (4b_v1_1) [left of=4b_v1, yshift=10] {};
		\draw[thin] (4b_v1) -- (4b_v1_1);
	\node (4b_v1_2) [left of=4b_v1, yshift=0] {};
		\draw[thin] (4b_v1) -- (4b_v1_2);
	\node (4b_v2_1) [left of=4b_v2, yshift=10] {};
		\draw[thin] (4b_v2) -- (4b_v2_1);
	\node (4b_v2_2) [left of=4b_v2, yshift=-10] {};
		\draw[thin] (4b_v2) -- (4b_v2_2);
	\node (4b_v3_1) [left of=4b_v3, yshift=0] {};
		\draw[thin] (4b_v3) -- (4b_v3_1);
	\node (4b_v3_2) [left of=4b_v3, yshift=-10] {};
		\draw[thin] (4b_v3) -- (4b_v3_2);
	\node (4b) [above of=4b_u, xshift=0, yshift=0] {(4b)};



  



\end{tikzpicture}
	\caption{Diagram of configurations for a small edge $e=(u,v_{1})$. Thin and Thick lines represent small and large edges respectively.}\label{fig:ew-small}
	\end{figure}
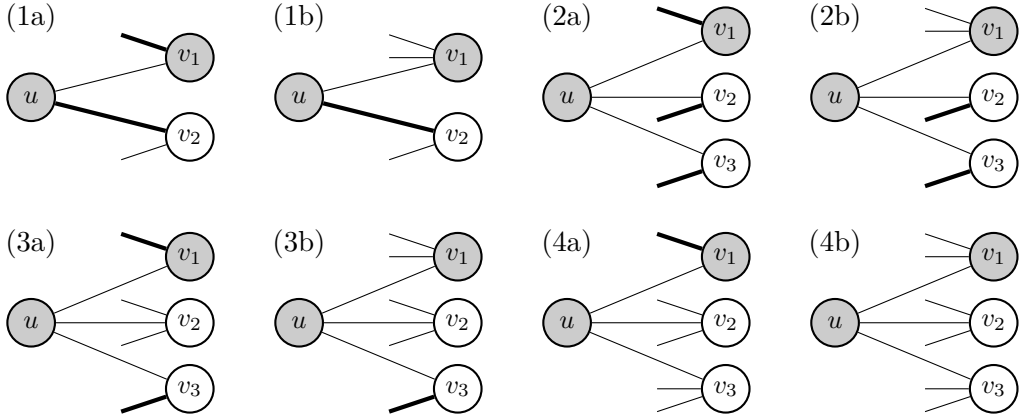

Similar to the proof of Claim \ref{cl:ew2-a}, we will do a case-by-case analysis on the various configurations. Let $e_{i}=(u,v_{i})$ for $1 \le i \le 3$ and $\mathcal{E}$
 be the event that $e_{1}$ gets matched. For a given $e_{i}$, denote $e_{i} \in M_{0}$ if $e_{i} \notin M_{1}, e_{i} \not \in M_{2}$.

\begin{itemize}
\item $(1a)$. Observe that $e_{1} \in M_{2}$ and $e_{2} \in M_{1}$. Thus we have $\Pr[ \EV] =\frac{1}{3}*0.44550$.\\

\item $(1b)$. Observe that we have two cases: $\{ \alpha_{1}: e_{2} \in M_{1}, e_{1} \in M_{1}\}$ and $\{ \alpha_{2}: e_{2} \in M_{1}, e_{1} \in M_{2}\}$. Case $\alpha_{1}$ happens with probability $y_{1}/3$ and the conditional probability is $\Pr[ \EV| \alpha_{1}]=0.432332$. Case $\alpha_{2}$ happens with probability $y_{2}/3$ and the conditional is $\Pr[ \EV| \alpha_{2}]=0.148499$. Thus we have
$$\Pr[\EV]=y_{1}/3 \ast \Pr[ \EV| \alpha_{1}] +y_{2}/3 \ast \Pr[ \EV| \alpha_{2}] \ge
\frac{1}{3}(0.432332 y_1 + 0.148499 y_2 )$$\\

\item $(2a)$. Observe that $e_{1} \in M_{2}, e_{2} \in M_{2}, e_{3} \in M_{2}$. $\Pr[ \EV] =\frac{1}{3}*0.601704$\\

\item $(2b)$. Observe that we have two cases: $\{ \alpha_{1}: e_{1} \in M_{1}, e_{2} \in M_{2}, e_{3} \in M_{2}\}$ and $\{ \alpha_{2}: e_{1} \in M_{2}, e_{2} \in M_{2}, e_{3} \in M_{2}\}$. 
Case $\alpha_{1}$ happens with probability $y_{1}/3$ and the conditional is $\Pr[ \EV| \alpha_{1}]=0.537432$. Case $\alpha_{2}$ happens with probability $y_{2}/3$ and conditional is $\Pr[ \EV| \alpha_{2}]=0.200568$. 
Thus we have
$$\Pr[\EV]=y_{1}/3 \ast \Pr[ \EV| \alpha_{1}] +y_{2}/3 \ast \Pr[ \EV| \alpha_{2}] \ge
\frac{1}{3}(0.537432 y_1 + 0.200568 y_2 )$$\\

\item $(3a)$.  Observe that we have three cases: $\{ \alpha_{1}: e_{1} \in M_{2}, e_{2} \in M_{1}, e_{3} \in M_{2}\}$, $\{ \alpha_{2}: e_{1} \in M_{2}, e_{2} \in M_{2}, e_{3} \in M_{2}\}$ and $\{ \alpha_{2}: e_{1} \in M_{2}, e_{2} \in M_{0}, e_{3} \in M_{2}\}$.
 Case $\alpha_{1}$ happens with probability $y_{1}/3$ and conditional is $\Pr[ \EV| \alpha_{1}]=0.13171$. Case $\alpha_{2}$ happens with probability $y_{2}/3$ and conditional is $\Pr[ \EV| \alpha_{2}]=0.200568$. Case $\alpha_{3}$ happens with probability $(1-y_{1}/3-y_{2}/3)$ and conditional is $\Pr[ \EV| \alpha_{3}]=0.22933$. \\

 Similarly, we have
$$\Pr[\EV]=y_{1}/3 \ast \Pr[ \EV| \alpha_{1}] +y_{2}/3 \ast \Pr[ \EV| \alpha_{2}]+
(1-y_{1}/3-y_{2}/3) \ast \Pr[ \EV| \alpha_{3}] $$
$$\ge
\frac{1}{3}(0.13171 y_1 + 0.200568 y_2+(3-y_{1}-y_{2}) 0.22933 )$$

\item $(3b)$. Observe that we have six cases.
\begin{itemize} 
\item $\alpha_{1}: e_{1} \in M_{1}, e_{2} \in M_{1}, e_{3} \in M_{2}$. $\Pr[\alpha_{1}]=y_{1}^{2}/9$ and $ \Pr[ \EV| \alpha_{1}]=0.4057$.\\
\item $\alpha_{2}: e_{1} \in M_{1}, e_{2} \in M_{2}, e_{3} \in M_{2}$. $\Pr[\alpha_{2}]=y_{1}y_{2}/9$ and $\Pr[ \EV| \alpha_{2}]=0.5374$.\\

\item $\alpha_{3}: e_{1} \in M_{1}, e_{2}  \in M_{0}, e_{3} \in M_{2}$.  $\Pr[\alpha_{3}]=y_{1}/3(1-y_{1}/3-y_{2}/3)$ and $\Pr[ \EV| \alpha_{3}]=0.58083$.\\

\item $\alpha_{4}: e_{1} \in M_{2}, e_{2} \in M_{1}, e_{3} \in M_{2}$. $\Pr[\alpha_{4}]=y_{1}y_{2}/9, \Pr[ \EV| \alpha_{4}]=0.1317$.\\

\item $\alpha_{5}: e_{1} \in M_{2}, e_{2} \in M_{2}, e_{3} \in M_{2}$. $\Pr[\alpha_{5}]=y_{2}^{2}/9, \Pr[ \EV| \alpha_{5}]=0.2006$.\\

\item $\alpha_{6}: e_{1} \in M_{2}, e_{2}  \in M_{0}, e_{3} \in M_{2}$. $\Pr[\alpha_{6}]=y_{2}/3(1-y_{1}/3-y_{2}/3)/3$ and $\Pr[ \EV| \alpha_{6}]=0.22933$.\\

\end{itemize}

Therefore we have
\begin{eqnarray*}
\Pr[\EV] &\ge&  \frac{1}{3} \Big(0.135241y_1^2 +
0.223033 y_1 y_2+ 0.066856 y_2^2	\\
&&+y_1(3-y_1-y_2)0.193610 
+ y_2(3-y_1-y_2)0.076443 \Big)
\end{eqnarray*}

\item $(4a)$. Observe that we have following six cases.
\begin{itemize} 
\item $\alpha_{1}: e_{1} \in M_{2}, e_{2} \in M_{1}, e_{3} \in M_{1}$. $\Pr[\alpha_{1}]=y_{1}^{2}/9$ and $ \Pr[ \EV| \alpha_{1}]=0.08898$.\\

\item $\alpha_{2}: e_{1} \in M_{2}, e_{2} \in M_{2}, e_{3} \in M_{2}$. $\Pr[\alpha_{2}]=y_{2}^{2}/9$ and $ \Pr[ \EV| \alpha_{2}]=0.2006$.\\

\item $\alpha_{3}: e_{1} \in M_{2}, e_{2} \in M_{0}, e_{3} \in M_{0}$. $\Pr[\alpha_{3}]=(1-y_{1}/3-y_{1}/3)^{2},$ and $ \Pr[ \EV| \alpha_{3}]=0.2642$.\\

\item $\alpha_{4}$: $e_{1} \in M_{2}$ while either $e_{2} \in M_{1}, e_{3} \in M_{2}$ or $e_{2} \in M_{2}, e_{3} \in M_{1}$. 
$\Pr[\alpha_{2}]=2y_{1}y_{2}/9$ and $ \Pr[ \EV| \alpha_{4}]=0.1317$.\\

\item $\alpha_{5}$: $e_{1} \in M_{2}$ while either $e_{2} \in M_{1}, e_{3} \in M_{0}$ or $e_{2} \in M_{0}, e_{3} \in M_{1}$. 
$\Pr[\alpha_{5}]=2y_{1}/3(1-y_{1}/3-y_{2}/3)$ and $\Pr[ \EV| \alpha_{5}]=0.14849$.\\

\item $\alpha_{6}$: $e_{1} \in M_{2}$ while either $e_{2} \in M_{2}, e_{3} \in M_{0}$ or $e_{2} \in M_{0}, e_{3} \in M_{2}$. 
$\Pr[\alpha_{5}]=2y_{2}/3(1-y_{1}/3-y_{2}/3)$ and $\Pr[ \EV|\alpha_{6}]=0.22933$.\\

\end{itemize}

Therefore we have
\begin{eqnarray*}
\Pr[\EV] &\ge &\frac{1}{3} 
				\Big( 0.029661y_1^2 + 2 \ast 0.043903y_1y_2 + 0.066856y_2^2  + 2 y_1(3-y_1-y_2) 0.0494997	\\
						&&  + 2 y_2(3-y_1-y_2)(0.076443) +(3 - y_1-y_2)^2 0.0880803 \Big)
\end{eqnarray*}

\item $(4b)$. Observe that in this configuration, we have additional six cases to the ones discussed in $(4a)$. Let $\alpha_{i}$ be the cases defined in $(4a)$ for each $1 \le i \le 6$. Notice that each $\Pr[\alpha_{i}]$ has a multiplicative factor of $y_{2}/3$.  
Now, consider the six new cases. 

\begin{itemize} 

\item $\beta_{1}: e_{1} \in M_{1}, e_{2} \in M_{1}, e_{3} \in M_{1}$. $\Pr[\alpha_{1}]=y_{1}^{3}/27$ and $ \Pr[ \EV| \alpha_{1}]=0.3167$. \\

\item $\beta_{2}: e_{1} \in M_{1}, e_{2} \in M_{2}, e_{3} \in M_{2}$. $\Pr[\alpha_{2}]=y_{1}y_{2}^{2}/27$ and $ \Pr[ \EV| \alpha_{2}]=0.5374$.\\

\item $\beta_{3}: e_{1} \in M_{1}, e_{2} \in M_{0}, e_{3} \in M_{0}$. $\Pr[\alpha_{3}]=y_{1}/3 \ast (1-y_{1}/3-y_{2}/3)^{2}$ and $ \Pr[ \EV| \alpha_{3}]=0.632$.\\

\item $\beta_{4}$: $e_{1} \in M_{1}$ and either $e_{2} \in M_{1}, e_{3} \in M_{2}$ or $e_{2} \in M_{2}, e_{3} \in M_{1}$. 
$\Pr[\alpha_{2}]=2y_{1}^{2}y_{2}/27$ and $ \Pr[ \EV| \alpha_{4}]=0.4057$.\\

\item $\beta_{5}$: $e_{1} \in M_{1}$ and either $e_{2} \in M_{1}, e_{3} \in M_{0}$ or $e_{2} \in M_{0}, e_{3} \in M_{1}$. 
$\Pr[\alpha_{5}]=2y_{1}^{2}/9 \ast (1-y_{1}/3-y_{2}/3)$ and $ \Pr[ \EV| \alpha_{5}]=0.4323$.\\

\item $\beta_{6}$: $e_{1} \in M_{1}$ and either $e_{2} \in M_{2}, e_{3} \in M_{0}$ or $e_{2} \in M_{0}, e_{3} \in M_{2}$. 
$\Pr[\alpha_{5}]=2y_{1}y_{2}/9 \ast (1-y_{1}/3-y_{2}/3)$ and $\Pr[ \EV|\alpha_{6}]=0.58083$.\\

Hence, we have
\begin{eqnarray*}
\Pr[\EV]&\ge& \frac{1}{3}\Big(0.632 y_{1} - 0.133133 y_{1}^2 + 0.0093y_{1}^3 + 0.264241 y_{2} \\
  &&
 -0.11127 y_{1} y_{2} + 0.01170 y_{1}^2 y_{2} - 0.0232746 y_{2}^2
  + 
 0.00488 y_{1} y_{2}^2 + 0.00068 y_{2}^3\Big) 
\end{eqnarray*}
	
\end{itemize}
\end{itemize}	

Setting $y_{1}=0.687,~y_{2}=1$, we get that the competitive ratio for a small edge is $0.44550$. The bottleneck cases are configurations $(1a)$ and $(1b)$.


\subsection{Supplemental materials in section \ref{sec:vweight}}
\label{apx:c}


	


\subsubsection{Proof of Lemma \ref{lem:offline} (Vertex-weighted and Unweighted) }
\label{apx:offline-vw}

When $H_u=1$ and $u$ is in the cycle $C_{1}$, \cite{bib:Jaillet} show that the competitive ratio of $u$ is $1-2\euler^{-2}$.  Hence, for the remaining cases, we use the following Claims. \\

\begin{claim} \label{cl:vw1a}
	If $H_u=1$ and $u$ is not in $C_{1}$, then we have 
	$\mathsf{R}[\RLA,1]\ge 0.735622$.\\
\end{claim}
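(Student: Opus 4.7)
The plan is to reduce to Lemma~\ref{lem:cycle} and then grind through a bounded number of neighborhood configurations. Since by hypothesis $u \notin C_1$, and since Lemma~\ref{lem:alg-cycle} guarantees that after the first modification there are no $C_2$ or $C_3$ cycles, $u$ is not part of any length-$4$ cycle in $G_\HH$. Hence Lemma~\ref{lem:cycle} applies, giving
\[
	\mathsf{P}_u \;=\; 1 - \bigl(1 - \Pr[B_u]\bigr)\prod_{v \sim u}\bigl(1 - \Pr[G_u^v]\bigr) + o(1/n).
\]
Because $H_u = 1$ and every nonzero $H_e$ lies in $\{1/3, 2/3\}$, up to symmetry $u$ has one of two local structures: (a)~three neighbors $v_1, v_2, v_3$ with $H_{(u,v_i)} = 1/3$, or (b)~two neighbors $v_1, v_2$ with $H_{(u,v_1)} = 2/3$ and $H_{(u,v_2)} = 1/3$. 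Note that in structure (b), the hypothesis $u \notin C_1$ prevents the particular $2$-thick/$2$-thin cycle pattern through $u$ and some other $u'$ with $H_{u'} = 1$.

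Next I would enumerate, for each $v \sim u$, the possible structures of $v$'s remaining neighborhood. Each such $v$-neighborhood falls into exactly one of the cases (1)--(12) of Figure~\ref{fig:secondModification}, which prescribes the modified values $H'_{(u,v)}$ and $H'_{(u',v)}$ for the other neighbors $u'$ of $v$. From these one reads off the two ingredients of Lemma~\ref{lem:cycle}: first, $\Pr[B_u] = 1 - \exp\!\bigl(-\sum_{v \sim u} H'_{(u,v)}\bigr)$, the probability that at least one list among the $n$ arrivals starts with $u$; second, for each $v \sim u$, $\Pr[G_u^v] = \Pr[X_v \geq k_v]$ where $X_v$ is Poisson with mean equal to the rate at which a list of the form $(u', \ldots, u, \ldots)$ with $u' \neq u$ arrives at $v$, and $k_v$ is the number of such arrivals needed to exhaust the alternatives ahead of $u$ in $\cR_v$ ($k_v = 1$ when $v$ has only two neighbors in $G_\HH$, and $k_v = 2$ when $v$ has three).

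With these formulas in hand, I would substitute the modified rates $H'_{(u,v_i)}$ from Figure~\ref{fig:secondModification} — in particular using $x_1 = 0.2744$ and $x_2 = 0.15877$ when cases (11) or (12) arise — and evaluate $\mathsf{P}_u$ as a product of closed-form Poisson tail expressions. The relevant comparison is against the pre-modification Jaillet--Lu worst case shown in Figure~\ref{fig:JailletWS}, where a direct calculation gave $0.725$; the purpose of the second modification was precisely to boost the rate of the list $(u, u', \ldots)$ at the one $v$ that sits between $u$ and a neighbor $u'$ with $H_{u'} = 1$, at the mild cost of lowering the companion rate. After the boost, re-evaluating the same bottleneck configuration yields $\mathsf{P}_u \geq 0.735622$.

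The main obstacle is simply the combinatorial explosion of sub-cases: each of the two structures (a) and (b) multiplied across independent choices of the $v$-neighborhood from Figure~\ref{fig:secondModification}. I would argue monotonicity to prune most cases — increasing any $H'_{(u,v)}$ only helps $\mathsf{P}_u$, and any $v$-configuration in which $u$ is the unique degree-$1$ neighbor of $v$ yields $\Pr[G_u^v] = 1$, which is strictly better than the configurations with competing high-rate siblings. This reduces the verification to the handful of configurations that plausibly minimise $\mathsf{P}_u$, and each of these is then checked by a numerical evaluation of the Poisson expression above, with the binding worst case being precisely the one corresponding to cases (11) and (12).
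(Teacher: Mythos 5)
Your high-level strategy matches the paper's: invoke Lemma~\ref{lem:cycle} (valid because $u \notin C_1$ and Lemma~\ref{lem:alg-cycle} guarantees no $C_2$ or $C_3$ remain), split into the two local structures of $u$ ($3 \times 1/3$ or $2/3 + 1/3$), enumerate $v$-neighborhood cases from Figure~\ref{fig:secondModification}, and minimize $\mathsf{P}_u$ over the finite case list. That is exactly what the paper's proof does, right down to computing a per-neighbor contribution $\gamma(v, \cdot)$ and taking the worst combination. However, two of your details would not survive execution.

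First, your formula $\Pr[G_u^v] = \Pr[X_v \ge k_v]$ with $X_v$ a single Poisson variable is not correct in general. When $v$ has two neighbors $u, u'$ with modified rates $H'_{(u',v)} = y$ and total rate $H'_{u'} = x$ for lists starting with $u'$, the event $G_u^v$ requires one occurrence at rate $x/n$ followed by a second, \emph{different-rate} occurrence at rate $y/n$; the paper's Equation~\eqref{eqn:vw} for this is $\frac{x - e^{-y}x + (e^{-x}-1)y}{x - y}$, which collapses to a Poisson tail $1 - e^{-x}(1+x)$ only in the special case $x = y$. When $v$ has three neighbors, the paper does not have a closed form at all and resorts to an $n$-step Markov-chain approximation with a $5\times 5$ transition matrix. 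Your plan to substitute Poisson tails throughout would give incorrect numbers for most of the configurations $\alpha_3, \alpha_4, \beta_1, \dots, \beta_9$ in Figure~\ref{fig:fu=1}, and it is precisely those configurations that determine which one is tight.

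Second, your pruning argument is unsound. The claim that a configuration ``in which $u$ is the unique degree-$1$ neighbor of $v$ yields $\Pr[G_u^v] = 1$'' cannot be right: $H_u = 1$ forces $u$ to have two or three neighbors, and $\Pr[G_u^v]$ is a probability over a finite arrival sequence that is never $1$. Moreover, the assertion that ``the binding worst case is precisely the one corresponding to cases (11) and (12)'' is not what the paper finds. After computing all the $\gamma$ values, the paper's binding case for two-neighbor $u$ is the combination $\{\beta_1$ or $\beta_2,\ \alpha_3\}$ giving $0.735622$, and for three-neighbor $u$ it is all three of type $\beta_3$ giving $0.73967$; type $\beta_1$ in particular is the unmodified configuration where all three of $v$'s neighbors have $H = 1$ (not a modification case). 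Pruning by ``increasing $H'_{(u,v)}$ only helps'' is also not available, because the modifications trade rate away from $u$ in some configurations to protect it in others, so the dependence is not monotone in a way that lets you skip cases; you really do have to compute every $\gamma(v,\alpha_i)$ and $\gamma(v,\beta_j)$.
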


\begin{claim}\label{cl:vw2}
	$\RR{\RLA}{2/3} \ge 0.7870$.\\
\end{claim}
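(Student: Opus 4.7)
The plan is to carry out a case analysis over the possible local structures at a vertex $u$ with $H_u=2/3$, in the same spirit as the analysis in \cite{bib:Jaillet}. Since $H_u=2/3$ and each nonzero $H_e$ lies in $\{1/3,2/3\}$, $u$ has either (i) a single neighbor $v$ with $H_{(u,v)}=2/3$, or (ii) two neighbors $v_1,v_2$ with $H_{(u,v_1)}=H_{(u,v_2)}=1/3$. At each such $v$ the second modification of Section~\ref{apx:second-modification} (Figure~\ref{fig:secondModification}) prescribes explicit modified values $H'_{(u,v)}$ and $H'_{(u',v)}$ for every competing neighbor $u' \in \delta_\HH(v)\setminus\{u\}$, depending on a finite list of subcases indexed by the $H_w$-values of $u'$ and by the size of $\delta_\HH(v)$.

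First, I would invoke Lemma~\ref{lem:cycle}. A $C_1$ cycle requires $H_w=1$ at every incident vertex, so $u$ with $H_u=2/3$ cannot belong to one, and Algorithm~\ref{alg:CycleBreak} has already removed all $C_2$ and $C_3$ cycles. Thus
$$\mathsf{P}_u \;=\; 1 - (1-\Pr[B_u])\prod_{v \sim u}(1-\Pr[G_u^v]) + o(1/n).$$

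Next I would express $\Pr[B_u]$ and $\Pr[G_u^v]$ in closed form. From \eqref{eqn:rl_1}--\eqref{eqn:rl_2}, the probability that a list placing $u$ first at $v$ arrives in a given round is $(1/n)\,H'_{(u,v)}$, so by the Poisson approximation
$$\Pr[B_u]\;=\;1-\exp\Bigl(-\sum_{v\sim u}H'_{(u,v)}\Bigr)+o(1/n).$$
Similarly, $\Pr[G_u^v]$ is the probability that a Poisson variable with mean $\mu_v$ takes value at least $2$, where $\mu_v$ is the aggregate rate of lists at $v$ that place some $u'\neq u$ ahead of $u$; this $\mu_v$ is read off directly from \eqref{eqn:rl_1}--\eqref{eqn:rl_2} using the $H'$-values dictated by the relevant subcase of Figure~\ref{fig:secondModification}.

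Finally I would substitute the modified values from each applicable subcase --- namely (2), (3), (8), (9) for configuration (i), and appropriate combinations of (5), (6), (7), (8) for configuration (ii) --- and verify numerically that $\mathsf{P}_u/(2/3)\ge 0.7870$ in every instance. The main obstacle is bookkeeping: in configuration (ii) each of $v_1,v_2$ independently contributes a factor $1-\Pr[G_u^{v_i}]$, and at each $v_i$ one must track both $|\delta_\HH(v_i)|$ and the $H_w$-values of its other neighbors to pick the correct subcase. The worst case is expected to come from the configuration in which the neighbors of $u$ are maximally ``crowded'' (analogous to the $\mathsf{WS}$ diagrams used when analyzing $H_u=1$), and one minimization over the finite list of subcases yields the claimed bound.
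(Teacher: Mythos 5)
You follow essentially the same overall route as the paper: invoke the formula of Lemma~\ref{lem:cycle} (valid since a vertex with $H_u=2/3$ cannot lie in a $C_1$ cycle and Algorithm~\ref{alg:CycleBreak} has already removed all $C_2,C_3$ cycles), enumerate the two families of local configurations for $u$ (a single $2/3$-neighbor, or two $1/3$-neighbors), compute the per-neighbor contributions from the $H'$-values of Figure~\ref{fig:secondModification}, and minimize $\mathsf{P}_u/H_u$ over the finite case list. The paper organizes the same search by comparing the quantities $\gamma(v,\alpha_i)$ and $\gamma(v,\beta_j)$ and lands on the single-neighbor case $\alpha_3$ (where $u$'s competitor at $v$ has $H_{u_1}=1$), giving the bound $0.7870$.

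There is, however, a substantive gap in how you propose to compute $\Pr[G_u^v]$. It is \emph{not} the probability that a single Poisson variable with one aggregate rate $\mu_v$ is at least $2$. When $v$ has two neighbors $u$ and $u'$, the event $G_u^v$ is a genuinely sequential two-stage event: first a list starting with $u'$ (at \emph{any} of $u'$'s neighbors) must arrive, with rate $H'_{u'}/n$, and only afterwards must the specific list $\mathcal R_v = (u',u)$ arrive, with rate $H'_{(u',v)}/n$. The paper's Equation~\eqref{eqn:vw} gives the resulting closed form, which collapses to the Poisson-$\ge 2$ expression $1-e^{-x}(1+x)$ only in the degenerate case $H'_{u'} = H'_{(u',v)}$; but the second modification deliberately makes these unequal in nearly every subcase (in $\alpha_3$, for instance, $H'_{(u_1,v)}=0.4$ while $H'_{u_1}\ge 1$). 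When $v$ has three neighbors, the dependency structure is worse still --- the paper uses a five-state Markov chain to track which of $u$'s competitors at $v$ have already been matched --- and this likewise cannot be folded into a single Poisson count. Substituting your single-Poisson approximation would produce incorrect $\gamma$-values and an incorrect worst-case bound, so this step needs to be replaced with the paper's two computations before the numerical minimization is carried out.
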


\begin{claim}\label{cl:vw3}
	$\RR{\RLA}{1/3} \ge 0.8107$.\\
\end{claim}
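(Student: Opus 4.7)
The plan is to lower bound the probability $P_u$ that $\RLA[\HH']$ matches a vertex $u$ with $H_u = 1/3$, and show that $3 P_u \ge 0.8107$. Because $H_u = 1/3$, the vertex $u$ has a unique neighbor $v_1$ in $G_\HH$, with $H_{(u,v_1)} = 1/3$, so every arrival capable of matching $u$ must be an arrival of $v_1$, which simplifies the product over $v \sim u$ in Lemma~\ref{lem:cycle} to a single factor.

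I will first invoke Lemma~\ref{lem:cycle}: since $H_u = 1/3 < 1$, $u$ cannot appear in a length-four cycle in $G_\HH$, and hence $P_u \ge 1 - (1-\Pr[B_u])(1 - \Pr[G_u^{v_1}]) + o(1/n)$. The term $\Pr[B_u] = 1 - e^{-H'_{(u,v_1)}}$ follows from the Poissonization of arrivals combined with the first-slot sampling rule in Equations~\eqref{eqn:rl_1} and~\eqref{eqn:rl_2}, while $\Pr[G_u^{v_1}]$ is the probability that some $v_1$-list places $u$ after the first slot and the preceding neighbors of $v_1$ in that list are already matched by the time $v_1$ arrives.

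The core step is a case analysis over the configurations at $v_1$. Since $(u,v_1)$ is a small edge of weight $1/3$ at $v_1$, the second modification from Section~\ref{apx:second-modification} places $v_1$ into one of the cases (1), (4), (6), (7), or (8) in Figure~\ref{fig:secondModification}, which are precisely those configurations having an incident edge of weight $1/3$, together with a default case in which $v_1$ has only $1/3$-weight neighbors and no modification is applied. For each case I will read off the modified edge values $H'_e$ at $v_1$, obtain $\Pr[B_u]$ directly, and compute $\Pr[G_u^{v_1}]$ by the standard double-sum over arrival times used in the proofs of Claims~\ref{cl:vw1a} and~\ref{cl:vw2}: the outer sum ranges over the arrival time of $v_1$ with $u$ not-first in its list, while the inner probability captures the Poisson-type event that the preceding slot is already matched, with rate dictated by $H'$ and by the conditional sampling probability in Equation~\eqref{eqn:rl_2}.

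Finally I will combine the two terms for each case and verify that the minimum of $3 P_u$ across all configurations is at least $0.8107$. The main obstacle is the bookkeeping in the three-neighbor cases (4), (6), (7), where the conditional sampling rule in Equation~\eqref{eqn:rl_2} couples all three incident edges; the likely bottleneck is case (7), in which $H'_{(u,v_1)}$ is reduced to $0.1$, so $\Pr[B_u] \approx 0.095$ and the bulk of the $P_u$ budget must come from $u$ being matched in a later slot, behind the heavy weight-$0.8$ neighbor becoming matched early via its own arrivals. Tracking the arithmetic through these stacked Poisson events, exactly in the style of Claims~\ref{cl:vw1a} and~\ref{cl:vw2}, is where the real work lies; the rest is routine.
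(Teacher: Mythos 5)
Your approach is essentially the same as the paper's: invoke Lemma~\ref{lem:cycle} with the single-neighbor simplification (since $H_u=1/3$ forces $u$ to have exactly one neighbor $v_1$ with $H_{(u,v_1)}=1/3$, ruling out any length-four cycle through $u$), then case-analyze the configuration at $v_1$ induced by the second modification, compute $\gamma(v_1,\cdot)=e^{-H'_{(u,v_1)}}(1-\Pr[G_u^{v_1}])$ for each, and take $3(1-\max\gamma)$. Two small inaccuracies: the paper actually enumerates three unmodified configurations ($1/3$-$1/3$-$1/3$, $1/3$-$1/3$-$2/3$, $1/3$-$2/3$-$2/3$), not just the all-$1/3$ default you mention; and the bottleneck turns out to be case~(4) (i.e., $\alpha_3$, neighbors with $H$-values $1/3,1,1$) with $\gamma\le 0.729751$, not case~(7) as you guess, though the two are nearly tied ($\gamma_{\alpha_7}\le 0.72948$) and both have $H'_{(u,v_1)}=0.1$, so your heuristic for where the pain lies is sound. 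Neither discrepancy changes the structure of the argument or the final bound.
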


Recall that $A_{u,1}$ is the event that among the $n$ random lists, there exists a list starting with $u$ and $A_{u,2}^v$ is the event that among the $n$ lists, there exist successive lists such that (1) all start with some $u'$ which are different from $u$ but are neighbors of $v$; and (2) they ensure $u$ will be matched.

Notice that $A_u$ is the probability that $u$ gets matched in $\mathsf{RLA}[\HH']$. For each $u$, we compute $\Pr[A_{u,1}]$ and $\Pr[A_{u,2}^v]$ for all possibilities of $v \sim u$ and using Lemma \ref{lem:cycle} we get $A_u$. We  first discuss two different ways to calculate $\Pr[A_{u,2}^v]$ when $v$ has different neighboring structures. 

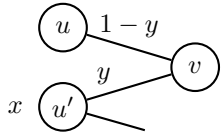
\begin{figure}[h]
	\centering
	\begin{tikzpicture}
[
	xscale=1,yscale=1,auto,thick,font=\small,
  	gray node/.style={circle,
  		inner sep=0pt,minimum size=14pt, 
  		fill=black!20,draw,font=\small},  
  	white node/.style={circle,
  		inner sep=0pt,minimum size=14pt, 
  		fill=white,draw,font=\small},
  node distance=24pt 
]


	\node [white node] (v) at (0,0) {$v$};
	\node [white node] (u1) [left of=v, xshift=-30, yshift=15] {$u$};
	\node [white node] (u2) [left of=v, xshift=-30, yshift=-15] {$u'$};
	\node (v1) [right of=u2, xshift=5, yshift=-10] {};
	\draw[] (u1)  -- (v) node [midway, above] {$1-y$};
	\draw[] (u2)  -- (v) node [above, pos=0.2] {$y$};
	\draw[] (u2)  -- (v1);
	\node (fu2) [left of=u2, xshift=12, yshift=0] {$x$};

\end{tikzpicture}
	\caption{Case 1 in calculation of $\Pr[A_{u,2}^v]$}
	\label{fig:2C1}
\end{figure}

\begin{figure}[h]
	\centering
	\begin{tikzpicture}
[
	xscale=1,yscale=1,auto,thick,font=\small,
  	gray node/.style={circle,
  		inner sep=0pt,minimum size=14pt, 
  		fill=black!20,draw,font=\small},  
  	white node/.style={circle,
  		inner sep=0pt,minimum size=14pt, 
  		fill=white,draw,font=\small},
  	white node/.style={circle,
  		inner sep=0pt,minimum size=14pt, 
  		fill=white,draw,font=\small},
  node distance=24pt 
]


	\node [white node] (v) at (0,0) {$v$};
	\node [white node] (u1) [left of=v, xshift=-30, yshift=30] {$u$};
	\node [white node] (u2) [left of=v, xshift=-30, yshift=0] {$u_1$};
	\node [white node] (u3) [left of=v, xshift=-30, yshift=-30] {$u_2$};
	\draw[] (u1)  -- (v) node [above right, pos=0.1] {$d=0.75$};
	\draw[] (u2)  -- (v) node [above, pos=0.37] {$b=0.1$};
	\draw[] (u3)  -- (v) node [below right, pos=0.1] {$c=0.15$};
	\node (fu1) [left of=u1, xshift=12, yshift=0] {$1$};
	\node (fu2) [left of=u2, xshift=8, yshift=0] {$1/3$};
	\node (fu3) [left of=u3, xshift=8, yshift=0] {$2/3$};

\end{tikzpicture}
	\caption{Case 2 in calculation of $\Pr[A_{u,2}^v]$}
	\label{fig:2C2}
\end{figure}

\xhdr{Two ways to compute the value $\Pr[A_{u,2}^v]$.}

\begin{enumerate}
	
	\item \textbf{Case 1: $v$ has two neighbors.}
	Consider the case when $v$ has two neighbors as shown in Figure \ref{fig:2C1}. In this case we choose a slightly direct approach to computing $\Pr[A_{u,2}^v]$. \\
	
Assume $v$ has two neighbors $u$ and $u'$ as shown in Figure \ref{fig:2C1}. After modifications, assume $H'_{(u',v)}=y$, $H'_{(u,v)}=1-y$ and $H'_{u'}=x$. Thus, the second certificate event $A_{u,2}^v$ corresponds to the event (1) a list starting with $u'$ comes at some time $1 \le i <n$; (2) the list $\mathcal{R}_{v}=(u',u)$ comes for a second time at some $j$ with $i<j \le n$. Note that the arrival rate of a list starting with $u'$ is $H'_{u'}=x/n$ and the rate of list $\mathcal{R}_{v}=(u',u)$ is $y/n$. Therefore we have
	\begin{eqnarray}\label{eqn:vw}
	\Pr[A_{u,2}^v] &=&\sum_{i=1}^{n-1}\Big(x/n (1 - x/n)^{(i- 1)} (1 - (1 - y/n)^{(n - i)} \Big)\\
	&\sim  &\frac{x - \euler^{-y} x + (-1 + \euler^{-x}) y}{x - y}  \text{~~~ (if $x \neq y$)}\\
	&\sim  &1- \euler^{-x}(1+x) \text{~~~ (if $x = y$)}
	\end{eqnarray}
	
	\item 
	\textbf{Case 2: $v$ has three neighbors.}
	Consider the case when $v$ has three neighbors  as shown in Figure \ref{fig:2C2}. In this case, we approximate the value $\Pr[A_{u,2}^v]$ using the Markov Chain method, similar to \cite{bib:Jaillet}. \\
	
Focus on the case shown in Figure \ref{fig:2C2} where $v$ has three neighbors $u$, $u_{1}$ and $u_{2}$ with 
	$H_u=1, H_{u_{1}}=1/3$ and $H_{u_{2}}=2/3$. Recall that after modifications, we have 
	$H'_{(u_{1},v)}=b=0.1, H'_{(u_{2},v)}=c=0.15$ and $H'_{(u,v)}=d=0.75$. We simulate the process of $u$ getting matched resulting from several successive random lists starting from either $u_{1}$ or $u_{2}$ by an $n$-step Markov Chain as follows. We have $5$ states: $s_{1}=(0,0,0), s_{2}=(0,1,0), s_{3}=(0,0,1), s_{4}=(0,1,1)$ and $s_{5}=(1,*,*)$ and the three numbers in each triple correspond to $u$, $u_{1}$ and $u_{2}$ being matched(or not) respectively. State $s_{5}$ corresponds to $u$ being matched; the matched status of $u_{1}$ and $u_{2}$ is irrelevant.
	The chain initially starts in $s_{1}$ and the probability of being in state $s_{5}$ after $n$ steps gives an approximation to $\Pr[A_{u,2}^v]$. 
	The one-step transition probability matrix $M$ is shown as follows. 
	\begin{eqnarray*}
		&&M_{1,2}=\frac{b}{n}, M_{1,3}= \frac{c+1/3}{n}, M_{1,1}=1-M_{1,2}-M_{1,3} \\
		&&M_{2,4}=\frac{c+1/3}{n}+\frac{bc}{(c+d)n}, M_{2,5}=\frac{bd}{(c+d)n}, \\
		&& M_{2,3}=1-M_{2,4}-M_{2,5}\\
		&& M_{3,4}=\frac{b}{n}+\frac{cb}{(b+d)n} ,  M_{3,5}=\frac{cd}{(b+d)n}\\
		&&M_{3,3}=1-M_{3,4}-M_{3,5}\\
		&& M_{4,5}=\frac{b+c}{n}, M_{4,4}=1-M_{4,5}\\
		&&M_{5,5}=1 \\
		&& M_{i,j} = 0 \text{ for all other $i, j$}
	\end{eqnarray*}
	
	Notice that $M_{1,3}=\frac{c+1/3}{n}$ and not $\frac{2}{3n}$ since after modifications, the arrival rate of a list starting with $u_{2}$ decreases correspondingly.
	
\end{enumerate}

Let us now prove the three Claims \ref{cl:vw1a}, \ref{cl:vw2} and \ref{cl:vw3}. 
Here we give the explicit analysis for the case when $H_u=1$. For the remaining cases, similar methods can be applied. Hence, we omit the analysis and only present the related computational results which leads to the conclusion. \\


\xhdr{Proof of Claim \ref{cl:vw1a}.}
	Notice that $u$ is not in the cycle $C_{1}$ and thus Lemma \ref{lem:cycle} can be used. Figure \ref{fig:fu=1} describes all possible cases when a node $u \in U$ has $H_u = 1$. (We ignore all those cases when $H_{u}<1$, since they will not appear in the $\WS$.)

	\begin{figure}[h]
		\centering
		\input{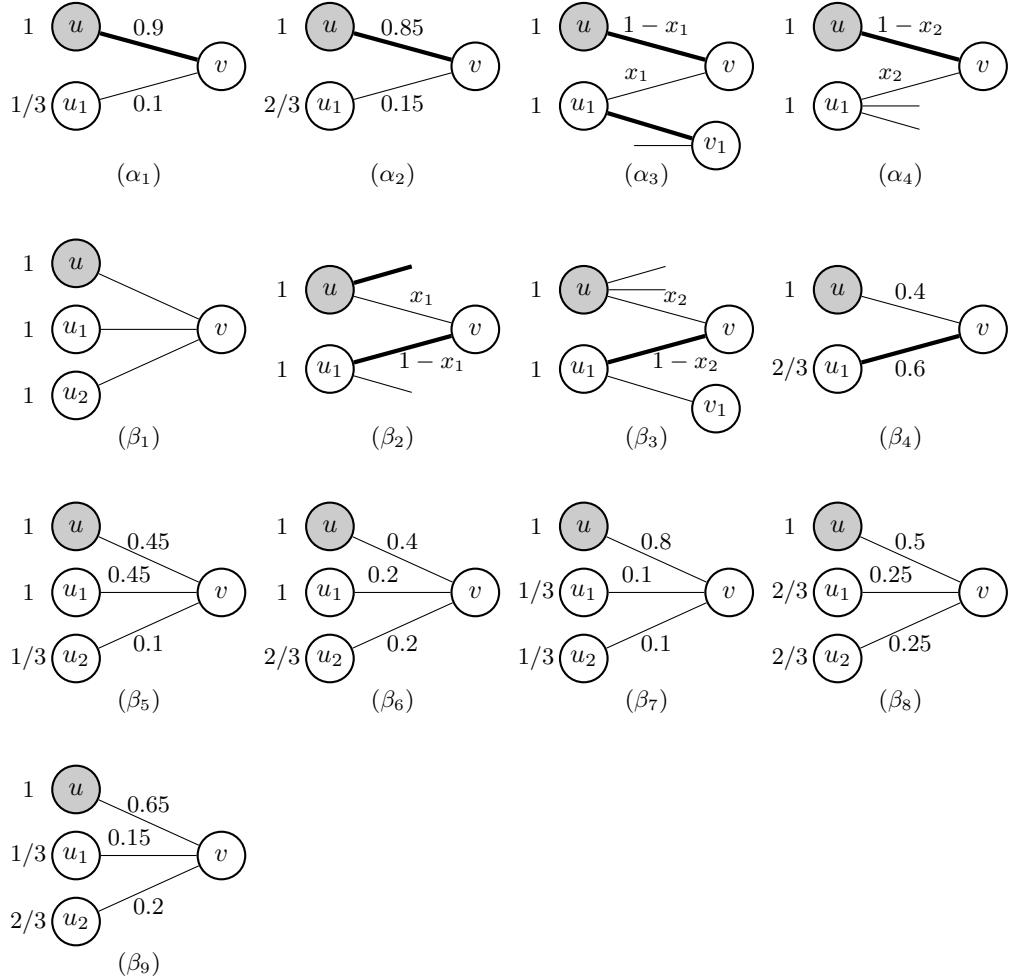}
		\caption{Vertex-weighted $H_u=1$ cases. The value assigned to each edge represents the value after the second modification. No value indicates no modification. Here, $x_1 = 0.2744$ and $x_2 = 0.15877$.}
		\label{fig:fu=1}
	\end{figure}
	
	Let $v_{1}$ and $v_{2}$ be the two neighbors of $u$ with $H_{(u,v_{1})}=2/3$ and $H_{(u,v_{2})}=1/3$. 
	In total, there are $4 \times 10$ combinations, where $v_{1}$ is chosen from some $\alpha_{i}, 1 \le i \le 4$ and $v_{2}$ is chosen from some $\beta_{i}, 1 \le i \le 9$. For $H_u=1$, we need to find the worst combination among these such that the value $A_u$ is minimized. 
	We can find this $\WS$ using the Lemma \ref{lem:cycle}.

	For each type of $\alpha_{i}, \beta_{j}$, we compute the values it will contribute to the term $(1-A_{u,1})\prod_{v \sim u}(1-\Pr[A_{u,2}^v])$. For example, assume $v_{1}$ is of type $\alpha_{1}$, denoted by $v_{1}(\alpha_{1})$. It contributes $\euler^{-0.9}$ to the term $(1-A_{u,1})$ and  $(1-\Pr[A_{u,2}^{v_1}])$ to  $\prod_{v \sim u}(1-\Pr[A_{u,2}^v])$, thus the total value it contributes is $\gamma(v_{1}, \alpha_{1})=\euler^{-0.9}(1-\Pr[A_{u,2}^{v_1}])$. Similarly, we can compute all $\gamma(v_{1}, \alpha_{i})$ and $\gamma(v_{2}, \beta_{j})$. Let 
	$i^{*}=\argmax_{i}\gamma(v_{1}, \alpha_{i}) $  and $j^{*}=\argmax_{j}\gamma(v_{2}, \beta_{j}) $. The $\WS$ is for the combination $\{v_{1}(\alpha_{i^{*}}), v_{2}(\beta_{j^{*}})\}$ and the resulting  value of $A_u$ and $\RR{\RLA}{1}$ is as follows:
	$$A_u=1-\gamma(v_{1}, \alpha_{i^{*}})\gamma(v_{2}, \beta_{j^{*}})$$
	$$\RR{\RLA}{1}=A_u/H_u=A_u$$

	Here is a list of $\gamma(v_{1}, \alpha_{i})$ and $\gamma(v_{2}, \beta_{j})$, for each $1 \le i \le 4$ and $1 \le j \le 9$.

	\begin{itemize}
		\item $\alpha_{1}$: We have $\Pr[A_{u,2}^v]=1-\euler^{-0.1}*1.1$ and
		$\gamma(v, \alpha_{1})=\euler^{-0.1}*1.1*\euler^{-0.9}=0.404667$.\\
		
		\item $\alpha_{2}$: $\Pr[A_{u,2}^v] \ge 1-\euler^{-0.15}*1.15$ and
		$\gamma(v, \alpha_{2}) \le  0.423$.
		
		Notice that after modifications, $H'_{u_{1}} \ge 0.15$. 
		Hence, we use this and Equation \eqref{eqn:vw} to compute the lower bound of $\Pr[A_{u,2}^v]$.\\
		
		\item $\alpha_{3}$: $\Pr[A_{u,2}^v]  \ge 0.0916792 $ and
		$\gamma(v, \alpha_{3}) \le 0.439667$. 
		
		Notice that for any large edge $e$ incident to a node $u$ with $H_u=1$ (before modification), we have after modification, $H'_{e} \ge 1-0.2744=0.7256$. Thus we have $H'_{(u_{1},v_{1})} \ge 0.7256$ and $ H'_{u_{1}} \ge 1$. From Equation \eqref{eqn:vw}, we get $\Pr[A_{u,2}^v]  \ge 0.0916792 $.\\
		
		\item $\alpha_{4}$: $\Pr[A_{u,2}^v]  \ge 0.0307466 $ and
		$\gamma(v, \alpha_{4}) \le 0.417923$. 
		
		Notice that for any small edge $e$ incident to a node $u$ with $H_u=1$ (before modification), we have after modification, $H'_{e} \ge 0.15877$. Thus, we have $H'_{u_{1}} \ge 3*0.15877$. \\
		
		\item $\beta_{1}$: $\Pr[A_{u,2}^v] =0.1608$ and
		$\gamma(v, \beta_{1})=0.601313$.\\
		
		\item $\beta_{2}$: $\Pr[A_{u,2}^v]  \ge 0.208812$ and
		$\gamma(v, \beta_{2}) \le 0.601313 $. \\

		After modifications, we have $H'_{(u_{1},v_{1})} \ge 0.2744$ and thus we get $H'_{u_{1}} \ge 1$. 
		
		\item $\beta_{3}$: $\Pr[A_{u,2}^v]  \ge 0.251611$ and
		$\gamma(v, \beta_{2}) \le 0.63852$.

		After modifications, we have $H'_{(u_{1},v_{1})} \ge 0.2744$ and thus we get $H'_{u_{1}} \ge 1-0.15877+0.2744$. \\

		\item $\beta_{4}$: $\Pr[A_{u,2}^v] =0.121901$ and
		$\gamma(v, \beta_{4})=0.588607$.\\
		
		\item $\beta_{5}$: $\Pr[A_{u,2}^v] =0.1346$ and
		$\gamma(v, \beta_{5})=0.551803$.\\
		
		\item $\beta_{6}$: $\Pr[A_{u,2}^v]  \ge 0.1140$ and
		$\gamma(v, \beta_{6}) \le 0.593904$.\\
		
		\item $\beta_{7}$: $\Pr[A_{u,2}^v]  =  0.0084$ and
		$\gamma(v, \beta_{7}) =0.4455$.\\
		
		\item $\beta_{8}$: $\Pr[A_{u,2}^v]   \ge 0.0397 $ and 
		$\gamma(v, \beta_{8})  \le 0.582451$.\\
		
		\item $\beta_{9}$: $\Pr[A_{u,2}^v]   \ge 0.0230$ and
		$\gamma(v, \beta_{9})  \le 0.510039$.\\
	\end{itemize}
	
	
	Using the computed values above, let us compute the ratio of a node $u$ with $H_u=1$.
	\begin{itemize}
		\item If $u$ has three neighbors, then the $\WS$ configuration is when each of the three neighbors of $u$ is of type $\beta_{3}$. This is because, the value of $\gamma(v, \beta_{3})$ is the largest.  The resultant ratio is $0.73967$.\\
		
		\item If $u$ has two neighbors, then the $\WS$ configuration is when one of the neighbor is of type $\beta_{1}$ (or $\beta_{2}$) and the other is of type $\alpha_{3}$. The resultant ratio is $0.735622$. \\
\end{itemize}

%
%

\xhdr{Proof of Claim \ref{cl:vw2}.}
	The proof is similar to that of Claim \ref{cl:vw1a}. The Figure \ref{fig:Fu=2/3} shows all possible configurations of a node $u$ with $H_u=2/3$. Note that the $\WS$ cannot have $F(v)<1$ and hence we omit them here. For a neighbor $v$ of $u$, if $H_{(u,v)}=2/3$, then  $v$ is in one of $\alpha_{i}, 1 \le i \le 3$; if  $H_{(u,v)}=1/3$, then  $v$ is in one of $\beta_{i}, 1 \le i \le 8$. 
	We now list the values $\gamma(v, \alpha_{i})$ and $\gamma(v, \beta_{j})$, for each $1 \le i \le 3$ and $1 \le j \le 8$.
	
	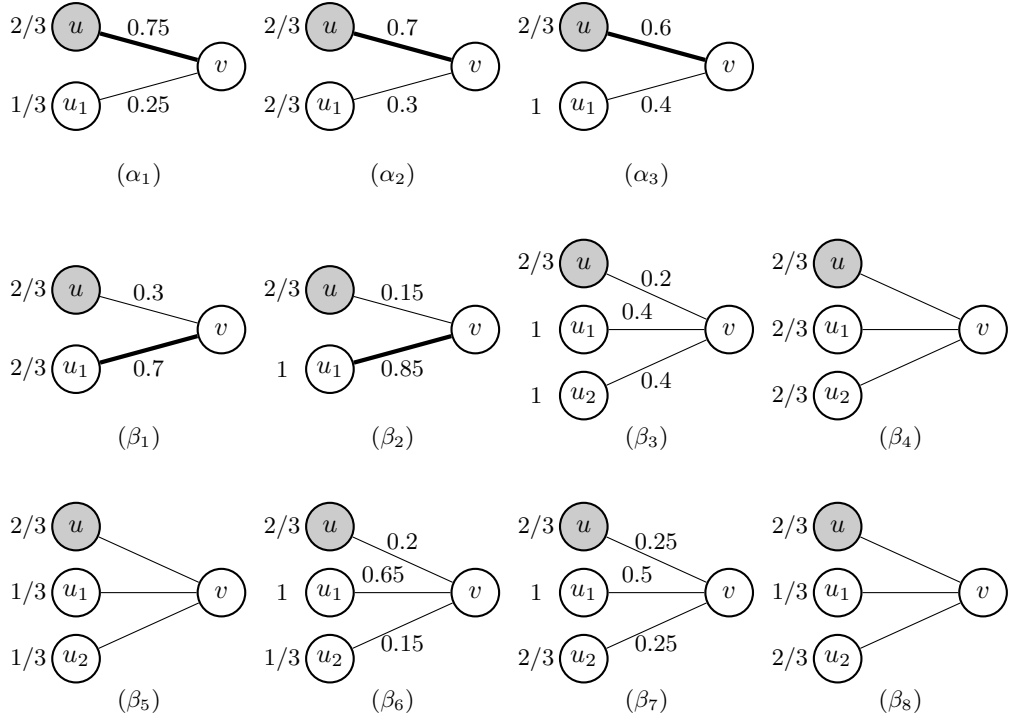
\begin{figure}[h]
		\centering
		\begin{tikzpicture}
[
	xscale=0.8,yscale=1,auto,thick,font=\footnotesize,
  	gray node/.style={circle,
  		inner sep=0pt,minimum size=14pt, 
  		fill=black!20,draw,font=\small},  
  	white node/.style={circle,
  		inner sep=0pt,minimum size=14pt, 
  		fill=white,draw,font=\small},
  node distance=24pt 
]


	\node [white node] (a1v) at (0,0) {$v$};
	\node [gray node] (a1u) [left of=a1v, xshift=-25, yshift=15] {$u$};
	\node [white node] (a1u1) [left of=a1v, xshift=-25, yshift=-15] {$u_1$};
	\draw[ultra thick] (a1u)  -- (a1v) node [midway, above] {$0.75$};
	\draw[thin] (a1u1)  -- (a1v) node [midway, below] {$0.25$};
	\node (a1fu) [left of=a1u, xshift=8, yshift=0] {$2/3$};
	\node (a1fu1) [left of=a1u1, xshift=8, yshift=0] {$1/3$};
	\node [below left of=a1v, xshift=-10, yshift=-20] {($\alpha_1$)};


	\node [white node] (a2v) at (3.75,0) {$v$};
	\node [gray node] (a2u) [left of=a2v, xshift=-25, yshift=15] {$u$};
	\node [white node] (a2u1) [left of=a2v, xshift=-25, yshift=-15] {$u_1$};
	\draw[ultra thick] (a2u)  -- (a2v) node [midway, above] {$0.7$};
	\draw[thin] (a2u1)  -- (a2v) node [midway, below] {$0.3$};
	\node (a2fu) [left of=a2u, xshift=8, yshift=0] {$2/3$};
	\node (a2fu1) [left of=a2u1, xshift=8, yshift=0] {$2/3$};
	\node [below left of=a2v, xshift=-10, yshift=-20] {($\alpha_2$)};


	\node [white node] (a3v) at (7.5,0) {$v$};
	\node [gray node] (a3u) [left of=a3v, xshift=-25, yshift=15] {$u$};
	\node [white node] (a3u1) [left of=a3v, xshift=-25, yshift=-15] {$u_1$};
	\draw[ultra thick] (a3u)  -- (a3v) node [midway, above] {$0.6$};
	\draw[thin] (a3u1)  -- (a3v) node [midway, below] {$0.4$};
	\node (a3fu) [left of=a3u, xshift=8, yshift=0] {$2/3$};
	\node (a3fu1) [left of=a3u1, xshift=12, yshift=0] {$1$};
	\node [below left of=a3v, xshift=-10, yshift=-20] {($\alpha_3$)};


	\node [white node] (b1v) at (0,-3.5) {$v$};
	\node [gray node] (b1u) [left of=b1v, xshift=-25, yshift=15] {$u$};
	\node [white node] (b1u1) [left of=b1v, xshift=-25, yshift=-15] {$u_1$};
	\draw[thin] (b1u)  -- (b1v) node [midway, above] {$0.3$};
	\draw[ultra thick] (b1u1)  -- (b1v) node [midway, below] {$0.7$};
	\node (b1fu) [left of=b1u, xshift=8, yshift=0] {$2/3$};
	\node (b1fu1) [left of=b1u1, xshift=8, yshift=0] {$2/3$};
	\node [below left of=b1v, xshift=-10, yshift=-20] {($\beta_1$)};


	\node [white node] (b2v) at (3.75,-3.5) {$v$};
	\node [gray node] (b2u) [left of=b2v, xshift=-25, yshift=15] {$u$};
	\node [white node] (b2u1) [left of=b2v, xshift=-25, yshift=-15] {$u_1$};
	\draw[thin] (b2u)  -- (b2v) node [midway, above] {$0.15$};
	\draw[ultra thick] (b2u1)  -- (b2v) node [midway, below] {$0.85$};
	\node (b2fu) [left of=b2u, xshift=8, yshift=0] {$2/3$};
	\node (b2fu1) [left of=b2u1, xshift=12, yshift=0] {$1$};
	\node [below left of=b2v, xshift=-10, yshift=-20] {($\beta_2$)};


	\node [white node] (b3v) at (7.5,-3.5) {$v$};
	\node [gray node] (b3u) [left of=b3v, xshift=-25, yshift=25] {$u$};
	\node [white node] (b3u1) [left of=b3v, xshift=-25, yshift=0] {$u_1$};
	\node [white node] (b3u2) [left of=b3v, xshift=-25, yshift=-25] {$u_2$};
	\draw[thin] (b3u)  -- (b3v) node [midway, above] {$0.2$};
	\draw[thin] (b3u1)  -- (b3v) node [above, pos=0.3] {$0.4$};
	\draw[thin] (b3u2)  -- (b3v) node [midway, below] {$0.4$};
	\node (b3fu) [left of=b3u, xshift=8, yshift=0] {$2/3$};
	\node (b3fu1) [left of=b3u1, xshift=12, yshift=0] {$1$};
	\node (b3fu2) [left of=b3u2, xshift=12, yshift=0] {$1$};
	\node [below left of=b3v, xshift=-10, yshift=-20] {($\beta_3$)};


	\node [white node] (b4v) at (11.25,-3.5) {$v$};
	\node [gray node] (b4u) [left of=b4v, xshift=-25, yshift=25] {$u$};
	\node [white node] (b4u1) [left of=b4v, xshift=-25, yshift=0] {$u_1$};
	\node [white node] (b4u2) [left of=b4v, xshift=-25, yshift=-25] {$u_2$};
	\draw[thin] (b4u)  -- (b4v);
	\draw[thin] (b4u1)  -- (b4v);
	\draw[thin] (b4u2)  -- (b4v);
	\node (b4fu) [left of=b4u, xshift=8, yshift=0] {$2/3$};
	\node (b4fu1) [left of=b4u1, xshift=8, yshift=0] {$2/3$};
	\node (b4fu2) [left of=b4u2, xshift=8, yshift=0] {$2/3$};
	\node [below left of=b4v, xshift=-10, yshift=-20] {($\beta_4$)};


	\node [white node] (b5v) at (0,-7) {$v$};
	\node [gray node] (b5u) [left of=b5v, xshift=-25, yshift=25] {$u$};
	\node [white node] (b5u1) [left of=b5v, xshift=-25, yshift=0] {$u_1$};
	\node [white node] (b5u2) [left of=b5v, xshift=-25, yshift=-25] {$u_2$};
	\draw[thin] (b5u)  -- (b5v);
	\draw[thin] (b5u1)  -- (b5v);
	\draw[thin] (b5u2)  -- (b5v);
	\node (b5fu) [left of=b5u, xshift=8, yshift=0] {$2/3$};
	\node (b5fu1) [left of=b5u1, xshift=8, yshift=0] {$1/3$};
	\node (b5fu2) [left of=b5u2, xshift=8, yshift=0] {$1/3$};
	\node [below left of=b5v, xshift=-10, yshift=-20] {($\beta_5$)};


	\node [white node] (b6v) at (3.75,-7) {$v$};
	\node [gray node] (b6u) [left of=b6v, xshift=-25, yshift=25] {$u$};
	\node [white node] (b6u1) [left of=b6v, xshift=-25, yshift=0] {$u_1$};
	\node [white node] (b6u2) [left of=b6v, xshift=-25, yshift=-25] {$u_2$};
	\draw[thin] (b6u)  -- (b6v) node [midway, above] {$0.2$};
	\draw[thin] (b6u1)  -- (b6v) node [above, pos=0.3] {$0.65$};
	\draw[thin] (b6u2)  -- (b6v) node [midway, below] {$0.15$};
	\node (b6fu) [left of=b6u, xshift=8, yshift=0] {$2/3$};
	\node (b6fu1) [left of=b6u1, xshift=12, yshift=0] {$1$};
	\node (b6fu2) [left of=b6u2, xshift=8, yshift=0] {$1/3$};
	\node [below left of=b6v, xshift=-10, yshift=-20] {($\beta_6$)};


	\node [white node] (b7v) at (7.5,-7) {$v$};
	\node [gray node] (b7u) [left of=b7v, xshift=-25, yshift=25] {$u$};
	\node [white node] (b7u1) [left of=b7v, xshift=-25, yshift=0] {$u_1$};
	\node [white node] (b7u2) [left of=b7v, xshift=-25, yshift=-25] {$u_2$};
	\draw[thin] (b7u)  -- (b7v) node [midway, above] {$0.25$};
	\draw[thin] (b7u1)  -- (b7v) node [above, pos=0.3] {$0.5$};
	\draw[thin] (b7u2)  -- (b7v) node [midway, below] {$0.25$};
	\node (b7fu) [left of=b7u, xshift=8, yshift=0] {$2/3$};
	\node (b7fu1) [left of=b7u1, xshift=12, yshift=0] {$1$};
	\node (b7fu2) [left of=b7u2, xshift=8, yshift=0] {$2/3$};
	\node [below left of=b7v, xshift=-10, yshift=-20] {($\beta_7$)};


	\node [white node] (b8v) at (11.25,-7) {$v$};
	\node [gray node] (b8u) [left of=b8v, xshift=-25, yshift=25] {$u$};
	\node [white node] (b8u1) [left of=b8v, xshift=-25, yshift=0] {$u_1$};
	\node [white node] (b8u2) [left of=b8v, xshift=-25, yshift=-25] {$u_2$};
	\draw[thin] (b8u)  -- (b8v);
	\draw[thin] (b8u1)  -- (b8v);
	\draw[thin] (b8u2)  -- (b8v);
	\node (b8fu) [left of=b8u, xshift=8, yshift=0] {$2/3$};
	\node (b8fu1) [left of=b8u1, xshift=8, yshift=0] {$1/3$};
	\node (b8fu2) [left of=b8u2, xshift=8, yshift=0] {$2/3$};
	\node [below left of=b8v, xshift=-10, yshift=-20] {($\beta_8$)};

\end{tikzpicture}
		\caption{Vertex-weighted $H_u=2/3$ cases. The value assigned to each edge represents the value after the second modification. No value indicates no modification.}
		\label{fig:Fu=2/3}
	\end{figure}
	
	\begin{itemize}
		\item $\alpha_{1}$: We have $\Pr[A_{u,2}^v]=1-\euler^{-0.25}*1.25$ and
		$\gamma(v, \alpha_{1})=\euler^{-0.25}*1.25*\euler^{-0.75}=0.459849$.\\
		
		\item $\alpha_{2}$: We have $\Pr[A_{u,2}^v]\ge 0.0528016$ and
		$\gamma(v, \alpha_{1}) \le 0.470365$. \\
		
		
		\item $\alpha_{3}$. We have $\Pr[A_{u,2}^v]\ge 0.13398$ and
		$\gamma(v, \alpha_{3}) \le 0.475282$.\\
		
		
		\item $\beta_{1}$: We have $\Pr[A_{u,2}^v]=1-\euler^{-0.7}*1.7 $ and
		$\gamma(v, \beta_{1}) =0.625395$.\\
		
		\item $\beta_{2}$: We have $\Pr[A_{u,2}^v] \ge 0.226356 $ and
		$\gamma(v, \beta_{2}) \le 0.665882$.\\
		
		\item $\beta_{3}$: We have $\Pr[A_{u,2}^v] \ge 0.1819  $ and
		$\gamma(v, \beta_{3}) \le 0.669804$.\\
		
		\item $\beta_{4}$: We have $\Pr[A_{u,2}^v] \ge 0.1130   $ and
		$\gamma(v, \beta_{4}) \le 0.635563$.\\
		
		\item $\beta_{5}$: We have $\Pr[A_{u,2}^v] \ge 0.0587   $ and
		$\gamma(v, \beta_{5}) \le 0.674471$.\\
		
		\item $\beta_{6}$: We have $\Pr[A_{u,2}^v] \ge 0.1688  $ and
		$\gamma(v, \beta_{6}) \le 0.680529$.\\
		
		\item $\beta_{7}$: We have $\Pr[A_{u,2}^v] \ge 0.1318  $ and
		$\gamma(v, \beta_{7}) \le 0.676155$.\\
		
		\item $\beta_{8}$: We have $\Pr[A_{u,2}^v] \ge 0.0587  $ and
		$\gamma(v, \beta_{8}) \le 0.674471$.\\
		
	\end{itemize}
	
	Hence, the $\WS$ structure is when $u$ is such that $H_u=2/3$ and has one neighbor of type $\alpha_3$. The resultant ratio is $0.7870$.

\xhdr{Proof of Claim \ref{cl:vw3}.}
	The Figure \ref{fig:Fu=1/3} shows the possible configurations of a node $u$ with $H_u=1/3$.
	Again, we omit those cases where $H_{v}<1$. 
	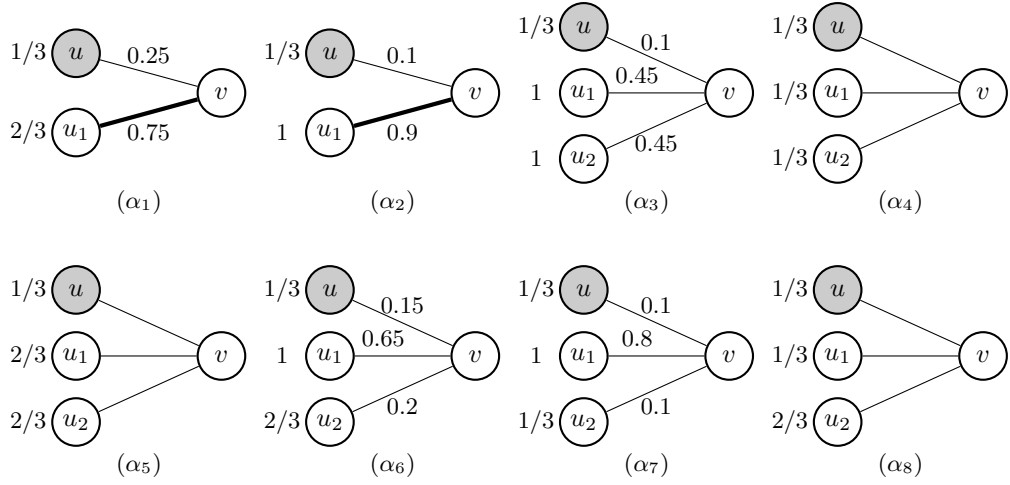
\begin{figure}[h]
		\centering
		\begin{tikzpicture}
[
	xscale=0.8,yscale=1,auto,thick,font=\footnotesize,
  	gray node/.style={circle,
  		inner sep=0pt,minimum size=14pt, 
  		fill=black!20,draw,font=\small},  
  	white node/.style={circle,
  		inner sep=0pt,minimum size=14pt, 
  		fill=white,draw,font=\small},
  node distance=24pt 
]


	\node [white node] (a1v) at (0,0) {$v$};
	\node [gray node] (a1u) [left of=a1v, xshift=-25, yshift=15] {$u$};
	\node [white node] (a1u1) [left of=a1v, xshift=-25, yshift=-15] {$u_1$};
	\draw[thin] (a1u)  -- (a1v) node [midway, above] {$0.25$};
	\draw[ultra thick] (a1u1)  -- (a1v) node [midway, below] {$0.75$};
	\node (a1fu) [left of=a1u, xshift=8, yshift=0] {$1/3$};
	\node (a1fu1) [left of=a1u1, xshift=8, yshift=0] {$2/3$};
	\node [below left of=a1v, xshift=-10, yshift=-20] {($\alpha_1$)};


	\node [white node] (a2v) at (3.75,0) {$v$};
	\node [gray node] (a2u) [left of=a2v, xshift=-25, yshift=15] {$u$};
	\node [white node] (a2u1) [left of=a2v, xshift=-25, yshift=-15] {$u_1$};
	\draw[thin] (a2u)  -- (a2v) node [midway, above] {$0.1$};
	\draw[ultra thick] (a2u1)  -- (a2v) node [midway, below] {$0.9$};
	\node (a2fu) [left of=a2u, xshift=8, yshift=0] {$1/3$};
	\node (a2fu1) [left of=a2u1, xshift=12, yshift=0] {$1$};
	\node [below left of=a2v, xshift=-10, yshift=-20] {($\alpha_2$)};


	\node [white node] (a3v) at (7.5,0) {$v$};
	\node [gray node] (a3u) [left of=a3v, xshift=-25, yshift=25] {$u$};
	\node [white node] (a3u1) [left of=a3v, xshift=-25, yshift=0] {$u_1$};
	\node [white node] (a3u2) [left of=a3v, xshift=-25, yshift=-25] {$u_2$};
	\draw[thin] (a3u)  -- (a3v) node [midway, above] {$0.1$};
	\draw[thin] (a3u1)  -- (a3v) node [above, pos=0.3] {$0.45$};
	\draw[thin] (a3u2)  -- (a3v) node [midway, below] {$0.45$};
	\node (a3fu) [left of=a3u, xshift=8, yshift=0] {$1/3$};
	\node (a3fu1) [left of=a3u1, xshift=12, yshift=0] {$1$};
	\node (a3fu2) [left of=a3u2, xshift=12, yshift=0] {$1$};
	\node [below left of=a3v, xshift=-10, yshift=-20] {($\alpha_3$)};


	\node [white node] (a4v) at (11.25,0) {$v$};
	\node [gray node] (a4u) [left of=a4v, xshift=-25, yshift=25] {$u$};
	\node [white node] (a4u1) [left of=a4v, xshift=-25, yshift=0] {$u_1$};
	\node [white node] (a4u2) [left of=a4v, xshift=-25, yshift=-25] {$u_2$};
	\draw[thin] (a4u)  -- (a4v);
	\draw[thin] (a4u1)  -- (a4v);
	\draw[thin] (a4u2)  -- (a4v);
	\node (a4fu) [left of=a4u, xshift=8, yshift=0] {$1/3$};
	\node (a4fu1) [left of=a4u1, xshift=8, yshift=0] {$1/3$};
	\node (a4fu2) [left of=a4u2, xshift=8, yshift=0] {$1/3$};
	\node [below left of=a4v, xshift=-10, yshift=-20] {($\alpha_4$)};


	\node [white node] (a5v) at (0,-3.5) {$v$};
	\node [gray node] (a5u) [left of=a5v, xshift=-25, yshift=25] {$u$};
	\node [white node] (a5u1) [left of=a5v, xshift=-25, yshift=0] {$u_1$};
	\node [white node] (a5u2) [left of=a5v, xshift=-25, yshift=-25] {$u_2$};
	\draw[thin] (a5u)  -- (a5v);
	\draw[thin] (a5u1)  -- (a5v);
	\draw[thin] (a5u2)  -- (a5v);
	\node (a5fu) [left of=a5u, xshift=8, yshift=0] {$1/3$};
	\node (a5fu1) [left of=a5u1, xshift=8, yshift=0] {$2/3$};
	\node (a5fu2) [left of=a5u2, xshift=8, yshift=0] {$2/3$};
	\node [below left of=a5v, xshift=-10, yshift=-20] {($\alpha_5$)};


	\node [white node] (a6v) at (3.75,-3.5) {$v$};
	\node [gray node] (a6u) [left of=a6v, xshift=-25, yshift=25] {$u$};
	\node [white node] (a6u1) [left of=a6v, xshift=-25, yshift=0] {$u_1$};
	\node [white node] (a6u2) [left of=a6v, xshift=-25, yshift=-25] {$u_2$};
	\draw[thin] (a6u)  -- (a6v) node [midway, above] {$0.15$};
	\draw[thin] (a6u1)  -- (a6v) node [above, pos=0.3] {$0.65$};
	\draw[thin] (a6u2)  -- (a6v) node [midway, below] {$0.2$};
	\node (a6fu) [left of=a6u, xshift=8, yshift=0] {$1/3$};
	\node (a6fu1) [left of=a6u1, xshift=12, yshift=0] {$1$};
	\node (a6fu2) [left of=a6u2, xshift=8, yshift=0] {$2/3$};
	\node [below left of=a6v, xshift=-10, yshift=-20] {($\alpha_6$)};


	\node [white node] (a7v) at (7.5,-3.5) {$v$};
	\node [gray node] (a7u) [left of=a7v, xshift=-25, yshift=25] {$u$};
	\node [white node] (a7u1) [left of=a7v, xshift=-25, yshift=0] {$u_1$};
	\node [white node] (a7u2) [left of=a7v, xshift=-25, yshift=-25] {$u_2$};
	\draw[thin] (a7u)  -- (a7v) node [midway, above] {$0.1$};
	\draw[thin] (a7u1)  -- (a7v) node [above, pos=0.3] {$0.8$};
	\draw[thin] (a7u2)  -- (a7v) node [midway, below] {$0.1$};
	\node (a7fu) [left of=a7u, xshift=8, yshift=0] {$1/3$};
	\node (a7fu1) [left of=a7u1, xshift=12, yshift=0] {$1$};
	\node (a7fu2) [left of=a7u2, xshift=8, yshift=0] {$1/3$};
	\node [below left of=a7v, xshift=-10, yshift=-20] {($\alpha_7$)};


	\node [white node] (a8v) at (11.25,-3.5) {$v$};
	\node [gray node] (a8u) [left of=a8v, xshift=-25, yshift=25] {$u$};
	\node [white node] (a8u1) [left of=a8v, xshift=-25, yshift=0] {$u_1$};
	\node [white node] (a8u2) [left of=a8v, xshift=-25, yshift=-25] {$u_2$};
	\draw[thin] (a8u)  -- (a8v);
	\draw[thin] (a8u1)  -- (a8v);
	\draw[thin] (a8u2)  -- (a8v);
	\node (a8fu) [left of=a8u, xshift=8, yshift=0] {$1/3$};
	\node (a8fu1) [left of=a8u1, xshift=8, yshift=0] {$1/3$};
	\node (a8fu2) [left of=a8u2, xshift=8, yshift=0] {$2/3$};
	\node [below left of=a8v, xshift=-10, yshift=-20] {($\alpha_8$)};

\end{tikzpicture}
		\caption{Vertex-weighted $H_u=1/3$ cases. The value assigned to each edge represents the value after the second modification. No value indicates no modification.}
		\label{fig:Fu=1/3}
	\end{figure}
	
	 We now list the values $\gamma(v, \alpha_{i})$, for each $1 \le i \le 8$. 
	
	\begin{itemize}
		
		\item $\alpha_{1}$: We have $\Pr[A_{u,2}^v] = 1-\euler^{-0.75}*1.75 $ and
		$\gamma(v, \alpha_{1}) =0.643789$.\\
		
		\item $\alpha_{2}$: We have $\Pr[A_{u,2}^v] \ge 0.282256$ and
		$\gamma(v, \alpha_{2})  \le 0.649443$.\\
		
		\item $\alpha_{3}$: We have $\Pr[A_{u,2}^v] \ge 0.1935$ and
		$\gamma(v, \alpha_{3})  \le 0.729751$.\\
		
		\item $\alpha_{4}$: We have $\Pr[A_{u,2}^v] \ge 0.0587$ and
		$\gamma(v, \alpha_{4})  \le 0.674471$.\\
		
		\item $\alpha_{5}$: $\gamma(v, \alpha_{5})  \le 0.674471$.\\
		
		\item $\alpha_{6}$: We have $\Pr[A_{u,2}^v] \ge 0.1546$ and
		$\gamma(v, \alpha_{6})  \le 0.727643$.\\
		
		\item $\alpha_{7}$: We have $\Pr[A_{u,2}^v] \ge 0.1938$ and
		$\gamma(v, \alpha_{7})  \le 0.72948$.\\

		\item $\alpha_{8}$: $\gamma(v, \alpha_{8})  \le 0.674471$.\\
	\end{itemize}
	
	Hence, the $\WS$ for node $u$ with $H_u=1/3$ is when $u$ has one neighbor of type $\alpha_3$.
	The resultant ratio is $0.8107$.

\end{document}